\documentclass[a4paper]{article}
\usepackage{amsmath}
\usepackage[english]{babel}
\usepackage{latexsym}
\usepackage{amssymb}
\usepackage{amscd}
\usepackage{amsgen,amstext,amsbsy,amsopn}
\usepackage{math rsfs}
\usepackage{bm,bbm}
\usepackage{amsthm,epsfig,graphicx,graphics}
\usepackage[latin1]{inputenc}
\usepackage{xspace}
\usepackage{amsxtra}
\usepackage{color}
\usepackage{dsfont}

\usepackage[pdftex]{hyperref}


\setlength{\textwidth}{16cm}
\addtolength{\evensidemargin}{-1.5cm}
\addtolength{\oddsidemargin}{-1.5cm}


\numberwithin{equation}{section}
\newcommand{\bdm}{\begin{displaymath}}
\newcommand{\edm}{\end{displaymath}}
\newcommand{\bdn}{\begin{eqnarray}}
\newcommand{\edn}{\end{eqnarray}}
\newcommand{\bay}{\begin{array}{c}}
\newcommand{\eay}{\end{array}}
\newcommand{\ben}{\begin{enumerate}}
\newcommand{\een}{\end{enumerate}}
\newcommand{\beq}{\begin{equation}}
\newcommand{\eeq}{\end{equation}}
\newcommand{\bml}[1]{\begin{multline} #1 \end{multline}}
\newcommand{\bmln}[1]{\begin{multline*} #1 \end{multline*}}

\newcommand{\lf}{\left}
\newcommand{\ri}{\right}

\newcommand{\xv}{\mathbf{x}}

\newcommand{\rv}{\mathbf{r}}

\newcommand{\diff}{\mathrm{d}}
\newcommand{\eps}{\varepsilon}

\newcommand{\avi}{\mathbf{a}_i}
\newcommand{\avik}{\mathbf{a}_{i_k}}
\newcommand{\avj}{\mathbf{a}_j}
\newcommand{\avjh}{\mathbf{a}_{j_h}}
\newcommand{\jv}{\mathbf{j}}

\newcommand{\gpf}{\mathcal{E}^{\mathrm{GP}}}
\newcommand{\gpe}{E^{\mathrm{GP}}}
\newcommand{\gpm}{\Psi^{\mathrm{GP}}}

\newcommand{\hgpf}{\hat{\mathcal{E}}^{\mathrm{GP}}}
\newcommand{\hgpe}{\hat{E}^{\mathrm{GP}}}
\newcommand{\hchem}{\hat{\lambda}^{\mathrm{GP}}}

\newcommand{\dg}{\mbox{deg}}
\newcommand{\curl}{\mbox{curl}}

\newcommand{\set}{\mathcal{S}}

\newcommand{\tff}{\mathcal{E}^{\mathrm{TF}}}
\newcommand{\tfd}{\mathcal{D}^{\mathrm{TF}}}
\newcommand{\tfe}{E^{\mathrm{TF}}}
\newcommand{\tfm}{{\rho^{\mathrm{TF}}}}
\newcommand{\rtf}{{R^{\mathrm{TF}}}}

\newcommand{\rbl}{R_>}
\newcommand{\rbulk}{R_{\rm bulk}}
\newcommand{\cbulk}{C_{\rm bulk}}

\newcommand{\tfchem}{\lambda^{\mathrm{TF}}}

\newcommand{\trial}{\Psi_{\mathrm{trial}}}

\newcommand{\cell}{\mathcal{Q}}
\newcommand{\celli}{\mathcal{Q}^i_{}}

\newcommand{\half}{\frac{1}{2}}


\newcommand{\G}{\mathcal{G}}

\newcommand{\yv}{\mathbf{y}}

\newcommand{\gpdom}{\mathscr{D}^{\mathrm{GP}}}

\newcommand{\hgpdom}{\hat{\mathscr{D}}^{\mathrm{GP}}}

\newcommand{\Ofirst}{\Omega_{\mathrm{c_1}}}

\newcommand{\tx}{\textstyle}

\newcommand{\vtrial}{v_{\mathrm{trial}}}
\newcommand{\phitrial}{\phi_{\mathrm{trial}}}
\newcommand{\nutrial}{\nu_{\mathrm{trial}}}
\newcommand{\neps}{N_{\eps}}
\newcommand{\ai}{\mathbf{a}_i}
\newcommand{\aj}{\mathbf{a}_j}
\newcommand{\xivor}{\xi_{\mathrm{v}}}
\newcommand{\ceps}{c_{\eps}}

\newcommand{\kk}{\mathcal{K}}

\newcommand{\hrho}{H_{\rho}}
\newcommand{\irho}{\I_{\rho}}
\newcommand{\irhoe}{I_{\rho}}

\newcommand{\rrs}{R_{\star}}
\newcommand{\ms}{m_{\star}}

\newcommand{\rsm}{R_<}


\newcommand{\R}{\mathbb{R}}
\newcommand{\N}{\mathbb{N}}

\newcommand{\C}{\mathbb{C}}

\newcommand{\D}{\mathcal{D}}
\newcommand{\F}{\mathcal{F}}
\newcommand{\E}{\mathcal{E}}

\newcommand{\B}{\mathcal{B}}

\newcommand{\OO}{\mathcal{O}}
\newcommand{\I}{\mathcal{I}}

\newcommand{\ep}{\varepsilon}

\newcommand{\Om}{\Omega}

\newcommand{\dd}{\partial}


\newcommand{\jt}{\tilde{\textbf{\j}}}
\newcommand{\mut}{\tilde{\mu}}

\newcommand{\musta}{\mu_{\star}}
\newcommand{\mustar}{\mu_{\rho}}

\newcommand{\chiin}{\chi_{\mathrm{in}}}
\newcommand{\chiout}{\chi_{\mathrm{out}}}


\newcommand{\one}{\mathds{1}}

\newcommand{\hstar}{h_{\mustar}}
\newcommand{\mutrial}{\mu_{\rm trial}}

\newcommand{\M}{\mathscr{M}}
\newcommand{\supp}{\mathrm{supp}}

\newcommand{\nablap}{\nabla^{\perp}}
\newcommand{\rvp}{\rv^{\perp}}

\newcommand{\eg}{e_g}

\newcommand{\tfr}{R ^{\rm TF}}
\newcommand{\Rc}{R_{\rm cut}}
\newcommand{\tfI}{I ^{\rm TF}}
\newcommand{\tfIc}{\I ^{\rm TF}}
\newcommand{\tfpot}{F ^{\rm TF}}
\newcommand{\tfH}{H ^{\rm TF}}
\newcommand{\tfHp}{H ^{\rm TF '}}

\newcommand{\htilde}{\tilde{h}_{\nutrial}}



\newtheorem{teo}{Theorem}[section]
\newtheorem{lem}{Lemma}[section]
\newtheorem{pro}{Proposition}[section]
\newtheorem{cor}{Corollary}[section]

\newcounter{remark}[section]
\newenvironment{rem}{\refstepcounter{remark} \vspace{0,1cm} \noindent \textit{Remark \thesection.\theremark}\,}{\\\hfill \qed }


\pagestyle{myheadings} \sloppy

\begin{document}

\markboth{\scriptsize{\textsc{Correggi, Rougerie} -- Inhomogeneous Vortex Patterns in BECs}}{\scriptsize{\textsc{Correggi, Rougerie} -- Inhomogeneous Vortex Patterns in BECs}}

\title{Inhomogeneous Vortex Patterns in Rotating Bose-Einstein Condensates}

\author{ M. Correggi${}^{a}$, N. Rougerie${}^{b}$
	\\
	\normalsize\it ${}^{a}$ Dipartimento di Matematica, Universit\`{a} degli Studi Roma Tre,	\\
	\normalsize\it L.go S. Leonardo Murialdo 1, 00146, Rome, Italy.	\\
	\normalsize\it ${}^{b}$ Universit\'e de Grenoble 1 and CNRS, LPMMC \\ 
	\normalsize\it Maison des Magist\`{e}res CNRS, BP166, 38042 Grenoble Cedex, France.}
	
\date{September 19, 2012}

\maketitle

\begin{abstract} 
We consider a 2D  rotating Bose gas described by the Gross-Pitaevskii (GP)
theory and investigate the properties of the ground state of the theory
for rotational speeds close to the critical speed for vortex nucleation.
While one could expect that the vortex distribution should be
homogeneous within the condensate we prove by means of an
asymptotic analysis in the strongly interacting (Thomas-Fermi) regime  that it is not.
More precisely we rigorously derive a formula due to Sheehy and
Radzihovsky [{\it Phys. Rev. A} {\bf 70}, 063620(R) (2004)] for the vortex distribution, a consequence of which is that
the vortex distribution is strongly inhomogeneous close to the critical
speed and gradually homogenizes when the rotation speed is increased.

From the mathematical point of view, a novelty of our approach is that we
do not use any compactness argument in the proof, but instead provide
explicit estimates on the difference between the vorticity measure of the
GP ground state and the minimizer of a certain renormalized energy functional.

	\vspace{0,2cm}

	MSC: 35Q55,47J30,76M23. PACS: 03.75.Hh, 47.32.-y, 47.37.+q.
	\vspace{0,2cm}
	
	Keywords: Bose-Einstein condensates, quantized vortices, Gross-Pitaevskii energy, renormalized energy, vortex lattices.
\end{abstract}

\tableofcontents

\section{Introduction}

Vortex nucleation in equilibrium states of rotating fluids is a hallmark of superfluidity and its experimental observation in rotating Bose gases was a key step in the understanding of the properties of Bose-Einstein Condensates (BECs). Among other spectacular observations, that of triangular lattices\footnote{`Abrikosov lattices' of vortices analogous to those occurring in type-II superconductors.} containing up to hundreds of vortices (see, e.g., \cite{BSSD,ARVK,MCWD,RAVXK,CHES}) gave a new strong motivation for theoretical studies. 

A rather natural question, that has not been addressed immediately after the observation of vortex lattices, can be formulated as follows. The BECs that are produced in laboratories are trapped gases, meaning that the confinement against centrifugal forces is provided by a magneto-optical potential. As a consequence the matter density profile of the condensate is not homogeneous and depends strongly on the type of trap that is being used. In this respect, it is rather striking that the vortex lattices observed in experiments seem to be perfectly homogeneous with a uniform mean distribution of vortices in the samples. Indeed, the vortex density could depend on the underlying matter density in various ways, for example regions of low matter density such as the boundary of the fluid could attract and pin the vortices. How come that such effects do not seem to be observed and that the vortex lattices are homogeneous, at least to a very good approximation?

To our knowledge, this question has been formulated and answered first by Sheehy and Radzihovsky in \cite{SR1,SR2}. More precisely, the relation between the matter density and the vortex density has been elucidated based on formal arguments (see also \cite{BPGW}), leading to a formula whose efficiency has been favorably compared to experimental data \cite{CHES} and to numerical simulations \cite{Dan}. Among the findings of \cite{SR1,SR2} is the fact that the vortex density {\it does} depend on the matter density, but in a subtle way that has leading order effect only close to the critical speed for vortex nucleation. It is thus not surprising that the vortex lattices seem completely homogeneous in experimental situations since, in order to observe many vortices, the rotational velocity is taken well above the first critical speed. However, some slight inhomogeneity of the vortex lattice survives for these large angular velocities, as predicted by \cite{SR1,SR2}, and it can in fact be observed as a small correction to almost uniform vortex densities \cite{CHES,Dan}. 

\medskip

Although it is experimentally difficult to observe the transition regime where the vortex lattice is expected to be inhomogeneous, the theoretical question remains of interest. The main model used for the description of rotating BECs is the so-called Gross-Pitaevskii theory, which can be rigorously derived from the underlying many-body problem (see \cite{LSSY,LS} and references therein) in a suitable limit. It is of importance to be able to rigorously derive a formula for the vortex density from GP theory.

In this context, the critical speed for vortex nucleation has been rigorously computed in  \cite{IM1,AJR} and the distribution of the first few vortices to appear in the condensate studied in \cite{IM2}. On the other hand, the regime well above the first critical speed has been treated in \cite{CY,CPRY1,CPRY3} where it has been shown that the vortex density is homogeneous to leading order, in the sense that many vortices are packed in the condensate and their average distribution is uniform. Note that the latter contributions concern the average distribution of vortices in a regime where they are densely packed in the fluid, it does not give access to the precise pattern formed by the vortices. A rigorous proof starting from GP theory that the vortices arrange on a triangular lattice seems to be still out of reach, despite recent advances in the related Ginzburg-Landau (GL) theory \cite{SS4}.

Summing up, there is still a gap in the rigorous theory of BECs between \cite{IM1,IM2,AJR} and \cite{CY,CPRY1,CPRY3}. This gap corresponds to the regime where the rotation speed is larger than the first critical speed but of the same order of magnitude and this is precisely the regime where the inhomogeneity of the (average) vortex distribution should come into play. The present paper aims at filling this gap by rigorously deriving from GP theory the formula for the vortex distribution of \cite{SR1,SR2}.

\medskip

Our mathematical setting is the following: we consider a two-dimensional rotating BEC confined by a trapping potential $ V(r) = r^{s} $ (with\footnote{Throughout the paper we use the following convention: a vector will be denoted in bold fonts (e.g., $ \mathbf{r} $), whereas normal fonts will always denote scalars (e.g., $ r = |\mathbf{r}| $), which might also be the modulus of a vector.} $ r = |\rv| $), in the framework of the GP theory. After a suitable scaling of length units (see \cite[Section 1.1]{CPRY3}), the GP energy functional can be written
\beq
	\label{gpf}
	\gpf[\Psi] : = \int_{\R^2} \diff \rv \: \bigg\{ \frac{1}{2} \lf| \nabla \Psi \ri|^2 - \Omega \Psi^* L \Psi + \frac{V(r)}{\eps^2} |\Psi| ^2+ \frac{\lf| \Psi \ri|^4}{\eps^2} \bigg\},
\eeq
where $ \Omega $ is the angular velocity,
\beq
	\label{ext pot}
	V(r) : = r^s,	\hspace{1cm}	s \geq 2,
\eeq 
$ L $ stands for the third component of the angular momentum, i.e., in polar coordinates $ \rv = (r ,\vartheta ) $, $ L = - i \partial_{\vartheta} $ or equivalently $ L = \rv \cdot \nablap $, $ \nablap : = (-\partial_y, \partial_x) $, and the coupling parameter $ \eps >0 $ is going to be assumed small ($ \eps \ll 1 $), i.e., we study the so called Thomas-Fermi (TF) limit of strong interactions. The wave function $ \Psi: \R ^2 \mapsto \C$ belongs to the domain
\beq
	\gpdom : = \lf\{ \Psi \in H^1(\R^2) \cap L^4(\R^2) : \: r^s |\Psi|^2 \in L^1(\R^2), \lf\| \Psi \ri\|_{L ^2(\R ^2)} = 1 \ri\}
\eeq
 and the ground state energy of the system is obtained by the minimization of $ \gpf$:
\beq
	\label{gpe}
	\gpe : = \inf_{\Psi \in \gpdom} \gpf[\Psi].
\eeq
We denote by $ \gpm $ any associated minimizer (there is no uniqueness in general). 

We will mostly be interested in spotting the vortices of $\gpm$, that is its zeros that carry a non-zero phase circulation (i.e., non-trivial topological degree or winding number). In particular, connecting to the preceding discussion, we would like to derive a relation between the distribution of vortices and the matter density of the system. The latter, given by $|\gpm| ^2$, can be approximated by minimizing the simplified TF functional
\beq
	\label{tff}
	\tff[\rho] : = \eps^{-2} \int_{\R^2} \diff \rv \: \lf[ r^s + \rho \ri] \rho,
\eeq
obtained by dropping the kinetic terms in \eqref{gpf}. Here $\rho\geq 0$ plays the role of the matter density, normalized such that $\int_{\R ^2} \rho= 1$. The minimizer of \eqref{tff} is the explicit radial function 
\beq
	\label{tfm}
	\tfm(r) = \half \lf[\tfchem - r^s\ri]_+,
\eeq
where $ [ \: \cdot \: ]_+ $ stands for the positive part and $ \tfchem $ is a normalization parameter that ensures $ \lf\| \tfm \ri\|_1 = 1 $. Note that $ \tfm $ has compact support in the ball of radius $ \tfr $ and a rather simple computation yields
\beq
	\label{tfchem}
	\tfe = \frac{\pi s}{4(s+1) \eps^2} \lf(\tfchem\ri)^{2(s+1)/s},	\hspace{1cm}	\rtf = \lf(\tfchem\ri)^{1/s},	\hspace{1cm}	\tfchem = \lf( \frac{2(s+2)}{\pi s} \ri)^{s/(s+2)},
\eeq
with $ \tfe $ standing for the TF ground state energy. Essentially, $\tfd$ is the region occupied by the condensate: we are able to prove that the mass contained in $\R ^2 \setminus \tfd$ is extremely small, the reason being that $\gpm$ decays exponentially (both as a function of $\rv$ and $\ep$) in this region. 

To spot the vortices of $\gpm$ in the bulk $\tfd$ of the condensate, it is standard to consider the so-called \emph{vorticity measure} $\mu$ associated with $\gpm$. As we are dealing with a regime where the condensate contains a large (actually $\propto |\log \ep|$) number of vortices, $\mu$ should be interpreted as giving the mean distribution of vortices. With the definition we will adopt below (see \eqref{eq:intro mu}), one can actually prove that, if $\gpm$ contains $J$ vortices of degrees $d_1,\ldots, d_J$  and locations $\mathbf{a}_1,\ldots,\mathbf{a}_J$ 
\begin{equation}\label{eq:intro vortic}
\mu \approx 2\pi |\log \ep| ^{-1} \sum_{j=1} ^J  d_j \delta_{\mathbf{a}_j} 
\end{equation}
in the $(C^1_c) ^*$ topology, where $\delta_{\mathbf{a}_j}$ stands for the Dirac mass at $\mathbf{a}_j$. Note the scaling factor $|\log \ep| ^{-1}$ in \eqref{eq:intro vortic}: since the condensate contains $\OO (|\log \ep|)$ vortices, $\mu$ will be a quantity of order $1$.

\medskip

The mechanism for vortex nucleation in rotating superfluids is now well understood, see, e.g., \cite{AAB,IM1,IM2,CRY}. A vortex becomes favorable in the system if it can lower the energy by interacting with the rotation field. More precisely, the interaction with the rotation field should overcome the energetic cost for vortex nucleation, given by
\begin{equation}\label{eq:intro vcost}
\pi |\log \ep| |d_j|\tfm (a_j)
\end{equation}
with $d_j$ the degree of the vortex and $\mathbf{a}_j$ its location. To evaluate the energy gain brought by the vortex compensating the rotation, it is convenient to introduce the potential function
\begin{equation}\label{eq:intro pot}
\tfpot (r) = - \frac{\Omega}{|\log \ep|} \int_{r} ^{\tfr} \diff t \: t \: \tfm(t).
\end{equation}
The energetic gain of a vortex is then 
\begin{equation}\label{eq:intro vgain}
2 \pi |\log \ep| d_j F(a_j). 
\end{equation}
By comparing the cost and gain of vortex nucleation it is obvious that in order to be energetically favorable, vortices should have positive degrees $d_j$, because $\tfpot$ is clearly negative. These considerations lead to the definition of a radial function giving the energetic cost of a vortex of degree $1$ located at $\avj$ with $|\avj| = r$:
\begin{equation}\label{eq:into costf}
\tfH (r)= \half \tfm(r) + \tfpot (r). 
\end{equation}
Looking for its minimum, one finds that it lies at $r=0$, indicating that this is where a vortex is most favorable. Equating gain and cost of a vortex at the origin, one can see that the critical speed for vortex nucleation is given by 
\beq
	\label{eq:first critical speed}
	\Ofirst = \Omega_1 |\log\eps|,	\hspace{1cm}	\Omega_1 : =  \frac{\pi}{2} \lf( \frac{2(s+2)}{\pi s} \ri)^{s/(s+2)},
\eeq
namely $\tfH(0) > 0$ for $\Om < \Ofirst $  and $\tfH(0) < 0$ for $\Om > \Ofirst$. Note that the above value is that one finds by applying the rigorous analysis of \cite{IM1,AJR}. 

The question we address in this paper is what happens when $\Omega$ is chosen of the form
\begin{equation}\label{eq:intro Omega}
\Om = \Om_0 |\log \ep|, \qquad \Om_0 > \Om_1,\qquad \Om_0 = \OO(1), 
\end{equation}
i.e., when $\Omega$ is strictly larger than the first critical speed but of the same order of magnitude when $\ep \to 0$. We prove that in this regime the vorticity measure $\mu$ is to leading order\footnote{We denote by $ \one_{\set} $ the characteristic function of the set $ \set $.}
\begin{equation}\label{eq:intro result}
\mu \approx \left[\nabla \lf( \frac{1}{\tfm} \nabla \tfH \ri) \right]_+ \one_{ \left\lbrace \tfH \leq 0\right\rbrace}
\end{equation}
in a sense to be made precise in the next section. Computing the term $\nabla \frac{1}{\tfm} \nabla \tfH$  this can be rewritten as 
\begin{equation}
\label{eq:intro result 2}
\mu(r) \approx \left[  \frac{1}{2}   \partial_r ^2 \log \lf( \tfm(r) \ri) + 2 \Omega_0 \right] _{+} \one_{\left \lbrace\tfH \leq 0\right\rbrace},
\end{equation}
which is the formula found by Sheehy and Radzihovsky, except that our analysis shows that vortices should lie only where the cost function is negative, which was not clearly mentioned in \cite{SR2}. Note that vortices are also confined to the region where 
\[
 \half \partial_r  ^2 \log \lf( \tfm(r) \ri) + 2 \Omega_0 \geq 0,
\]
which is far from obvious if only the cost and gain considerations sketched above are taken into account. Indeed, the particular form \eqref{eq:intro result 2} is to a large extent due to the interaction between vortices. 

\medskip

Several interesting properties of the (average) vortex distribution as a function of $\Omega$ can be read off from \eqref{eq:intro result 2}:
\begin{itemize}
\item When $\Omega_0 < \Omega_1$ in \eqref{eq:first critical speed}, a straightforward computation reveals that the cost function $\tfH$ is positive everywhere. Vortices are thus not favorable and the vortex distribution vanishes identically. We thus recover the expression of \cite{IM1,AJR} for the first critical speed.
\item In the regime \eqref{eq:intro Omega}, one can compute (see Section \ref{sec:limit problem}) that there is a non-empty region where $\tfH <0$ and $\nabla \frac{1}{\tfm} \nabla \tfH > 0$, whose size increases with increasing $\Omega_0$ until it finally fills the whole sample in the limit $\Omega_0 \to \infty$. This region is, according to \eqref{eq:intro result}, filled with vortices, a behavior that is reminiscent of the `obstacle problem regime' in GL theory \cite{SS1}. 
\item The first term in the right-hand side of \eqref{eq:intro result 2} can not be constant, except when $\tfm$ is constant itself, which can only happen in the somewhat unrealistic case of the flat trap considered, e.g., in \cite{CPRY1}. This term is thus responsible for an inhomogeneity of the vortex distribution, whereas the second term yields a constant contribution of $2 \Omega_0$ ($2 \Omega$ in the physical variables) units of vorticity per unit area.
\item The first term in \eqref{eq:intro result 2}, responsible for the inhomogeneity, is independent of $\Omega_0$. Its importance relative to the second one thus diminishes with increasing $\Omega_0$. The inhomogeneity then becomes a second order correction in the limit $\Omega_0 \to \infty$, which corresponds to $\Omega \gg |\log \ep|$. As discussed above, this is observed in experiments and numerical simulations. This also bridges with the situation considered in \cite{CPRY3}, where we proved that, if $\Omega \gg |\log \ep|$, the vortex density is to leading order constant and proportional to $2\Omega$.
\end{itemize}

The next section is devoted to a more precise statement and discussion of our results. Proofs are given in Sections \ref{sec:limit problem}, \ref{sec:upper bound} and \ref{sec:lower bound}. 

\bigskip

\noindent\textbf{Acknowledgements.}
It is always a pleasure to acknowledge the hospitality of the \textit{Erwin Schr\"odinger Institute for Mathematical Physics} where part of this research has been carried out. M.C. acknowledges the support of the European Research Council under the European Community Seventh Framework Program (FP7/2007-2013 Grant Agreement CoMBos No. 239694). N.R wishes to thank Alessandro Giuliani and Weizhu Bao for hospitality, respectively at the {\it Universit\`{a} di Roma Tre} and at the \textit{Institute for Mathematical Sciences} in Singapore. Stimulating discussions with Jakob Yngvason were also much appreciated.

\section{Statement of the Main Results}\label{sec:results}

We now turn to a more precise description of our results. Recall that we consider the asymptotic behavior of the ground state energy and minimizer of the functional \eqref{gpf} when $\ep \to 0$ with the scaling
\beq
	\label{Omega 2}
	\Omega = \Omega_0 |\log\eps|,	\hspace{1cm}	\Omega_0 > \Omega_1,
\eeq
where $ \Omega_0 $ is assumed to be constant for the sake of simplicity and $\Om_1$ is defined in \eqref{eq:first critical speed}. The regime $\Omega_0\gg 1$ corresponds to what we have considered in \cite{CPRY3}, while in \cite{IM1,IM2} it was studied the case $\Om = \Om_1 |\log \ep| + \OO(\log |\log \ep|)$.   

To take into account the inhomogeneous density profile it is convenient to introduce the following energy functional
\beq
	\label{hgpf}
	\hgpf[f] : = \int_{\R^2} \diff \rv \: \lf\{ \half |\nabla f |^2 + \eps^{-2} \lf[ r^s + f^2 \ri] f^2 \ri\}
\eeq
with ground state energy 
\beq
	\label{hgpdom}
	\hgpe : = \min_{f \in \hgpdom} \hgpf[f],	\hspace{1cm}	\hgpdom : = \lf\{ f \in \gpdom : f = f^* \ri\}.
\eeq
Standard arguments show that there is a unique strictly positive radial minimizer which will be denoted by $ g $. Note that $\hgpf$ coincides with $\gpf$ restricted to \emph{real} functions. In this sense, its minimization corresponds to the search for a vortex-free profile. The main difference between $\hgpf$ and $\tff$ is that the former includes the contribution of the radial kinetic energy due to the bending of the density profile. We will prove in the Appendix that $g^2 \approx \tfm$ in a suitable sense.

\medskip

The expression \eqref{eq:intro result} enters our problem through the minimization of a `renormalized energy' (we employ the consecrated terminology of GL theory \cite{BBH,SS2}) expressing the energy of a given vorticity measure $\nu$ in terms of the TF density $\tfm$ in units of $ |\log\eps|^2 $
\beq
	\label{ren energy}
	\tfIc [\nu] = \int_{\tfd} \left\{ \frac{1}{2\tfm} |\nabla h_{\nu}|^2 + \frac{1}{2} \tfm |\nu| + \tfpot \nu \right\} ,
\eeq
where\footnote{In all the paper, $\B(\varrho)$ denotes the disc centered at the origin with radius $ \varrho $, while $B(\rv,\varrho)$ is the same disc but centered at $\rv \in \R ^2$. Also we will sometimes omit the measure in the integral (as, e.g., in the first term of \eqref{ren energy}), when it is the usual two-dimensional Lebesgue measure $ \diff \rv $.}
\beq
	\label{domain D}
	\tfd : = \supp (\tfm) = \B(\tfr),
\eeq
and
\begin{equation}
	\label{eq ren energy}
	 \begin{cases}
		-\nabla \left( \frac{1}{\tfm} \nabla h_{\nu}\right) = \nu	 \mbox{ in } \tfd, \\
		h_{\nu} = 0 								 \mbox{ on } \partial \tfd.
	\end{cases} 
\end{equation}
Recall that $\tfpot$ is defined in \eqref{eq:intro pot}. The minimization of the renormalized energy in its natural energy space
\beq
	\label{eq: intro measure class}
	\M_{\tfm} (\tfd) = \left\lbrace \nu \in \left(C^0_c (\tfd)\right) ^*,\: \int_{\tfd} \lf\{ \frac{1}{\tfm} |\nabla h_{\nu}| ^2 +  \tfm |\nu| \ri\} < +\infty \right\rbrace                                                                                                                                        \eeq
is discussed in details in Section \ref{sec:limit problem}. We prove (see Theorem \ref{theo:renorm energy}) that $ \tfIc $ has a unique minimizer $ \musta $ among the measures in $\M_{\tfm} (\tfd)$. It is explicitly given by 
\beq
	\label{musta}
	\musta =  \lf[ \nabla \lf( \frac{1}{\tfm} \nabla \tfH \ri) \ri]_+ \one_{\left \lbrace\tfH \leq 0\right\rbrace},
\eeq
i.e., it is exactly the measure appearing in the right-hand side of \eqref{eq:intro result}. We also set 
\begin{equation}\label{eq:tfI}
\tfI: = \tfIc [\musta] = \frac{1}{2} \int_{\supp(\musta)} \tfH \musta
\end{equation}
where the second equality is proved in Section \ref{sec:limit problem} below. Note that by \eqref{musta} one easily has 
\beq
	 \tfI \leq 0,
\eeq
since $ \musta \geq 0 $ and $ \tfH \leq 0  $ on the support of $ \musta $. Moreover both the renormalized energy as well as its minimizer $ \musta $ are fixed when $\ep\to 0$, thanks to the extraction of a scaling factor $ |\log\eps|^2 $.

We can now formulate our first result about the GP ground state energy asymptotics:
\begin{teo}[\textbf{Ground state energy asymptotics}]
	\label{teo:gse asympt}
	\mbox{}	\\
	If $ \Omega = \Omega_0 |\log\eps| $, with $ \Omega_0 > \Omega_1 $, then
	\beq
		\label{gse asympt}
		\gpe = \hgpe +\tfI  |\log\eps|^2 \left( 1 + \OO\left(\frac{\log |\log \ep|}{|\log\eps| ^{1/2}}\right)\right)
	\eeq
	in the limit $\ep \to 0$.
\end{teo}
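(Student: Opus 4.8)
The plan is to prove the energy asymptotics \eqref{gse asympt} by the standard two-sided matching of upper and lower bounds, working throughout with the splitting $\Psi = g \, v$ where $g$ is the vortex-free profile minimizing $\hgpf$. Writing $\gpm = g v$ and using the variational equation for $g$, one obtains an exact decomposition $\gpf[gv] = \hgpe + \Eg_g[v]$ where $\Eg_g[v]$ is a "reduced" functional with weight $g^2$, of the form $\int \{ \tfrac12 g^2 |\nabla v|^2 - \Omega g^2 v^* L v + \tfrac{1}{2\eps^2} g^4 (|v|^2-1)^2 \}$ (this is the Lassoued–Mironescu type decomposition used in GL theory and in \cite{CY,CPRY1,CPRY3,AJR}). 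Because $g^2 \approx \tfm$ in the bulk $\tfd$ (to be proved in the Appendix) and $g$ decays exponentially outside, the leading behavior of $\inf_v \Eg_g[v]$ should be $\tfI |\log\eps|^2$, with $\tfI$ the minimal renormalized energy of \eqref{eq:tfI}. So the whole theorem reduces to showing $\inf_v \Eg_g[v] = \tfI |\log\eps|^2 (1 + \OO(|\log\eps|^{-1/2}\log|\log\eps|))$.

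For the \emph{upper bound} I would construct an explicit trial function. Given the optimal vorticity measure $\musta$ of \eqref{musta}, one regularizes it to a sum of $\OO(|\log\eps|)$ Dirac masses on a fine lattice inside $\{\tfH<0\}$, each of unit degree, and builds $\vtrial = \prod_j \frac{(\rv-\avj)}{|\rv - \avj|}\, \eta_\eps(\rv-\avj) \cdot e^{i\phitrial}$ where $\eta_\eps$ is the usual profile cutting the singularity at scale $\eps$ and $\phitrial$ is a regular phase chosen to optimize the interaction with the rotation (i.e. so that $\nablap \phitrial \approx$ the vector field associated to $\musta$ via \eqref{eq ren energy}). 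Plugging into $\Eg_g[\vtrial]$, the self-energy of the vortices produces the $\tfrac12 \int \tfm |\musta|$ term, the cross terms with $\Omega$ produce $\int \tfpot \musta$, and the gradient of the phase produces the $\tfrac{1}{2\tfm}|\nabla h_{\musta}|^2$ term; summing gives $\tfIc[\musta]|\log\eps|^2$ up to controlled errors. One must be careful that $g^2$ and $\tfm$ differ, and that the condensate boundary layer contributes negligibly — this is where the error $\OO(|\log\eps|^{-1/2}\log|\log\eps|)$ enters.

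For the \emph{lower bound}, which I expect to be the main obstacle, I would not use compactness (as emphasized in the abstract) but instead a quantitative vortex-ball construction à la Jerrard/Sandier applied to $v$ with the weight $g^2$: one extracts from $\{|v|\le 1/2\}$ a finite collection of disjoint balls $B(\avi,\rho_i)$ carrying degrees $d_i$, on which $\int_{B(\avi,\rho_i)} \tfrac12 g^2|\nabla v|^2 \ge \pi |d_i| g^2(\avi)|\log(\rho_i/\eps)| (1-o(1))$. One then handles the rotational term by integrating by parts, $-\Omega\int g^2 v^* L v = \Omega \int \nabla^\perp(\,\cdot\,)\cdot$, relating it to the vorticity $\mu_v := \curl(iv^*\nabla v)$ so that it reproduces $\int \tfpot \, \mu_v |\log\eps|$, and combines everything into a lower bound of the form $\Eg_g[v] \ge |\log\eps|^2 \, \tfIc[\mu_v / (2\pi|\log\eps|^{-1})] - (\text{errors})$, using the energy–excess splitting that bounds the full Dirichlet integral below by the "harmonic" part (the $h_\nu$ term) plus the vortex self-energies. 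Since $\tfIc[\nu]\ge \tfI$ for every admissible $\nu$ by Theorem \ref{theo:renorm energy}, this closes the bound. The delicate points are: (i) controlling degrees of the wrong sign and vortices near or outside $\partial\tfd$ where $g^2\to 0$, using the exponential smallness of $g$ and the positivity of $\tfH$ there to show they cost energy; (ii) making the lower bound on the weighted Dirichlet energy uniform in the slowly varying weight $g^2\approx\tfm$; and (iii) tracking all error terms to confirm the stated rate $\OO(|\log\eps|^{-1/2}\log|\log\eps|)$, which is governed by the competition between the number $\OO(|\log\eps|)$ of vortices, their mutual distances $\OO(|\log\eps|^{-1/2})$, and the $\log|\log\eps|$ losses from the ball-growth procedure.
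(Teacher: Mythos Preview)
Your plan is essentially correct and follows the same overall architecture as the paper: energy decoupling $\gpf[gv]=\hgpe+\E[v]$, an upper bound via a trial state carrying $\OO(|\log\eps|)$ unit vortices distributed according to $\musta$, and a lower bound via vortex balls plus Jacobian estimate that reduces $\E[v]$ to $|\log\eps|^2\tfIc[\,\cdot\,]$ evaluated at (a regularization of) the vorticity of $u$. Two points of comparison are worth noting.

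First, for the upper bound the paper does not build the trial phase as a product $\prod_j (\rv-\avj)/|\rv-\avj|$; instead it defines $\nabla\phitrial=-\tfm^{-1}\nablap h_{\nutrial}$ where $h_{\nutrial}$ solves the \emph{weighted} elliptic equation $-\nabla(\tfm^{-1}\nabla h_{\nutrial})=\nutrial$ with $\nutrial$ a smeared sum of Dirac masses. This has the advantage that the kinetic energy is directly $\int\tfm^{-1}|\nabla h_{\nutrial}|^2$, which one then evaluates via the Green function of the weighted operator; the dominant error $\OO(|\log\eps|^{3/2}\log|\log\eps|)$ (hence the relative $|\log\eps|^{-1/2}\log|\log\eps|$) comes from the Riemann-sum approximation of the off-diagonal Green interaction, not from the ball construction. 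Your product-of-singular-factors ansatz can be made to work too, but the weighted formulation is what makes the interaction term line up cleanly with the $\frac{1}{2\tfm}|\nabla h_\nu|^2$ piece of $\tfIc$.

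Second, and more substantively, you list ``controlling degrees of the wrong sign'' as a delicate point to be handled separately. The paper deliberately avoids this: because the lower bound is written as $|\log\eps|^2\tfIc[\mut]$ with the self-energy term $\frac12\int\tfm|\mut|$ carrying an absolute value, negative-degree vortices are penalized automatically by the limiting functional, and the positivity of the minimizer $\musta$ is a \emph{consequence} of Theorem~\ref{theo:renorm energy} rather than an a priori input. Similarly, vortices in the boundary layer are handled not by exploiting positivity of $\tfH$ there but simply by the pointwise bound $|\log\eps||F|\le\frac14 g^2$ in $\B(R_+)\setminus\B(\rbulk)$, which makes the corresponding portion of the integrand nonnegative and hence discardable. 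This streamlines the lower bound considerably compared to the route you sketch.
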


\begin{rem}(Composition of the ground state energy)\mbox{}	\\
The leading order contribution to $ \gpe $ is given by $\hgpe$, which in turn contains $\tfe$ of order $\ep ^{-2}$ (see \eqref{tfchem}) and a remainder due to the radial kinetic energy of the vortex-free profile. Such a correction can be shown to be of order $ |\log\eps| $ \cite[Proposition 2.1]{CPRY3}, i.e., much smaller than the contribution of vortices $ \tfI |\log\ep| ^2$, which on the other hand is a rather small correction to the main term $ \tfe $. 
\end{rem}

\begin{rem}(The renormalized energy)\mbox{}\label{rem:ren energy}\\
Note that the functional \eqref{ren energy} is defined for the `reasonable' vorticity measures in $\M_{\tfm} (\tfd)$, in particular those arising from wave functions in the manner of \eqref{eq:intro mu} below. It is not well-defined for sums of Dirac masses, which is a significant difficulty in the analysis, in particular in view of \eqref{eq:intro vortic}. The first term in \eqref{ren energy} corresponds to the interaction between vortices, it is computed in a way reminiscent of electrostatics: $h_{\nu}$ is similar to a potential generated by individual electric charges distributed according to the charge density $ \nu $ and the first term in \eqref{ren energy} is the corresponding electrostatic energy. The other two terms can be understood as the sum of the cost and gain due to each individual vortex, in view of \eqref{eq:intro vortic}, \eqref{eq:intro vcost} and \eqref{eq:intro vgain}. Note that, if it was a priori known that the minimization could be restricted to positive measures and the density $\tfm$ was constant, \eqref{ren energy} would reduce exactly to the electrostatic energy of a positive charge distribution in the potential $\half \tfm + \tfpot$. In this case  \eqref{eq ren energy} would become the Poisson equation for the charge distribution $ \nu $: $h_{\nu}$ could then be interpreted as an electrostatic potential and its gradient as the corresponding field. In this analogy, the non-constant weight $ 1/\tfm $ can be thought of as modeling a sample with non-homogeneous conductivity.    
\end{rem}

\medskip

As we will prove below, the minimization of $\hgpf$ gives the matter density of the system to a very good approximation, i.e., $|\gpm| ^2 \approx g^2$. It is thus natural to write $\gpm$ in the form
\begin{equation}\label{eq:decouple formal}
\gpm = g u,  
\end{equation}
where $u$ is essentially a phase factor accounting for the vortices of $\gpm$, times a profile vanishing close to the vortex cores and almost equal to $1$ elsewhere. A convenient way of spotting the vortices contained in $u$ is to use the so-called \emph{vorticity measure}
\begin{equation}\label{eq:intro mu}
\mu:= |\log \ep| ^{-1} \curl \left[ \frac{i}{2}\left(u\nabla u ^* - u ^* \nabla u \right) \right]
\end{equation}
which is (up to the $|\log \ep| ^{-1}$ factor) nothing but the curl of the superfluid current 
\begin{equation}\label{eq:intro j}
\mathbf{j} :=  \frac{i}{2}\left(u\nabla u ^* - u ^* \nabla u \right)
\end{equation}
and thus (in analogy with fluid mechanics) a good candidate to count the vortices of $u$.

As usual \cite{AAB,IM2,R}, energy methods do not allow to spot vortices lying too close to the boundary of the domain. Indeed, $\tfm$ vanishes on $\dd \tfd$ and thus, according to \eqref{eq:intro vcost}, a vortex lying close to the boundary carries very little energy. We will thus limit ourselves to analyze the behavior of $\mu$ in the smaller ball $\B(\rbulk)$ with $\rbulk$ satisfying 
\begin{equation}\label{eq:intro Rc}
\rbulk < \tfr,\qquad \left| \rbulk - \tfr\right| = \OO(\Om ^{-1}).
\end{equation}
In fact in \eqref{eq:low radii} below we will make the precise choice $ \rbulk = \rtf - C \Omega^{-1} $ for some given explicit constant $ C $. However our proof works just the same provided the fixed constant $C$ is chosen small enough.

Note that by restricting ourselves to $\B(\rbulk)$ we are only neglecting a small part of $\tfd$ and the domain $\B(\rbulk)$ contains the bulk of the mass of the condensate, in the sense that\footnote{This is an easy consequence of the fact that $|\gpm|$ is uniformly bounded (see Appendix).}
\begin{equation}\label{eq:mass bulk}
\int_{\B(\rbulk)} \diff \rv \: |\gpm| ^2= 1 - o(1) 
\end{equation}
in the limit $\ep \to 0$.

We prove that $\mu$ is close to $\musta$ in $\B(\rbulk)$, with the meaning of `close to' specified by the following norm, which is defined for measures $\nu$:
\begin{equation}\label{eq:norm}
\| \nu \|_{\tfm}:= \sup_{\phi \in C^1_c (\B(\rbulk))} \frac{\bigg| \displaystyle\int_{\B(\rbulk)} \nu \phi \bigg|}{\displaystyle \left(\int_{\B(\rbulk)} \diff \rv \: \frac{1}{\tfm} |\nabla \phi| ^2\right)^{1/2} + \| \nabla \phi\|_{L ^{\infty} (\B(\rbulk))}}. 
\end{equation}

\begin{teo}[\textbf{Asymptotics for the vorticity measure}]\label{teo:vorticity}\mbox{}\\
Let $\mu$ and $\musta$ be defined respectively in \eqref{eq:intro mu} and \eqref{musta}. Then we have
\begin{equation}\label{eq:vortic asympt}
\left\| \mu - \musta \right\|_{\tfm} \leq \OO \left( \frac{\log |\log \ep| ^{1/2} }{|\log \ep| ^{1/4}}\right)
\end{equation}
in the limit $\ep \to 0$.
\end{teo}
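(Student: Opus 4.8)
\medskip
\noindent\textbf{Proof proposal.}
The strategy is to combine the energy expansion of Theorem~\ref{teo:gse asympt} with a quantitative coercivity property of the renormalized energy $\tfIc$ near its minimizer $\musta$, together with a localized lower bound for the reduced Gross--Pitaevskii energy in terms of the vorticity measure $\mu$. We start from the density--phase splitting: writing $\gpm=g\,u$ with $g>0$ the radial minimizer of $\hgpf$, the Euler--Lagrange equation for $g$ and the constraint $\int_{\R^2}g^2|u|^2=1$ make all cross terms vanish, giving the \emph{exact} identity $\gpf[\gpm]=\hgpe+\mathcal{E}_g[u]$ with
\[
\mathcal{E}_g[u]:=\int_{\R^2}\diff\rv\;\Big\{\tfrac12\,g^2\,|\nabla u|^2\;-\;\Omega\,g^2\,u^{*}Lu\;+\;\eps^{-2}\,g^{4}\,\big(|u|^2-1\big)^2\Big\}.
\]
Since $|\gpm|^2\approx g^2\approx\tfm$ (Appendix), Theorem~\ref{teo:gse asympt} is equivalent to $\mathcal{E}_g[\gpm/g]=\tfI\,|\log\eps|^2\big(1+\OO(\log|\log\eps|\,|\log\eps|^{-1/2})\big)$, and in particular, using $\tfI=\OO(1)$, its upper bound part alone yields
\[
\mathcal{E}_g[\gpm/g]\;\le\;\tfI\,|\log\eps|^2\;+\;\OO\!\big(|\log\eps|^{3/2}\log|\log\eps|\big).
\]

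\medskip
\noindent\textbf{Step 1: coercivity of $\tfIc$.}
The key algebraic fact is that for every $\nu\in\M_{\tfm}(\tfd)$,
\[
\tfIc[\nu]-\tfI\;\ge\;\frac12\int_{\tfd}\frac{1}{\tfm}\,\big|\nabla h_{\nu-\musta}\big|^2\;\ge\;\frac12\,\big\|\nu-\musta\big\|_{\tfm}^2 .
\]
For the first inequality, expand $\tfIc[\nu]$ around $\musta$ using the linearity of \eqref{eq ren energy}, $h_{\nu}=h_{\musta}+h_{\nu-\musta}$: the term of $\tfIc[\nu]-\tfIc[\musta]$ quadratic in $h_{\nu-\musta}$ is exactly $\tfrac12\int\tfm^{-1}|\nabla h_{\nu-\musta}|^2$, while the rest equals $\Phi[\nu]-\Phi[\musta]$, where $\Phi[\nu]:=\int_{\tfd}(h_{\musta}+\tfpot)\nu+\tfrac12\int_{\tfd}\tfm|\nu|$ is a convex functional whose Euler--Lagrange condition at $\musta$ (namely $h_{\musta}=-\tfH$ on $\{\musta>0\}$ and $|h_{\musta}+\tfpot|\le\tfrac12\tfm$ on $\{\musta=0\}$, which follows from \eqref{musta}, \eqref{eq:tfI}, \eqref{eq:into costf} and Theorem~\ref{theo:renorm energy}) shows that $\musta$ minimizes $\Phi$, hence $\Phi[\nu]-\Phi[\musta]\ge0$. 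For the second inequality, any $\phi\in C^1_c(\B(\rbulk))$ satisfies $\int(\nu-\musta)\phi=\int_{\tfd}\tfm^{-1}\nabla h_{\nu-\musta}\cdot\nabla\phi$, so Cauchy--Schwarz bounds $|\int(\nu-\musta)\phi|$ by $\big(\int_{\tfd}\tfm^{-1}|\nabla h_{\nu-\musta}|^2\big)^{1/2}\big(\int\tfm^{-1}|\nabla\phi|^2\big)^{1/2}$, and dividing by the denominator in \eqref{eq:norm} gives the estimate. It therefore suffices to prove that $\tfIc[\mu]-\tfI=\OO(\log|\log\eps|\,|\log\eps|^{-1/2})$ for the vorticity measure $\mu$ of $\gpm/g$.

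\medskip
\noindent\textbf{Step 2: the localized lower bound --- the main obstacle.}
The heart of the matter is the bound
\[
\mathcal{E}_g[u]\;\ge\;|\log\eps|^2\,\tfIc[\mu]\;-\;\OO\!\big(|\log\eps|^{3/2}\log|\log\eps|\big)
\]
for $u=\gpm/g$ and $\mu$ its vorticity measure \eqref{eq:intro mu}. One proceeds through: (a) an a priori bound on the vortex set of $u$, extracted from the upper bound on $\mathcal{E}_g[u]$ obtained above; (b) a growth/vortex-ball construction in the spirit of Jerrard and Sandier, weighted by $g^2\approx\tfm$, which degenerates on $\dd\tfd$ and forces one to work only in the bulk ball $\B(\rbulk)$ with $\rbulk=\tfr-C\Omega^{-1}$ as in \eqref{eq:low radii}; (c) a Jacobian (Jerrard--Soner type) estimate controlling $\mu$ by the localized kinetic energy, which in particular shows $\mu\in\M_{\tfm}(\tfd)$, so that $h_{\mu}$ and $\tfIc[\mu]$ are well defined; and (d) a completion of squares matching the lower bound of the kinetic plus rotational terms with the three terms of $\tfIc$, in which the weight $1/\tfm$ and the cost function $\tfH=\tfrac12\tfm+\tfpot$ of \eqref{eq:into costf} enter through the elliptic problem \eqref{eq ren energy}. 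The remainder $\OO(|\log\eps|^{-1/2}\log|\log\eps|)$ per factor $|\log\eps|^2$ comes from the vortex core scale and from the mismatch between $g^2$ and $\tfm$ near $\dd\B(\rbulk)$; once square-rooted in Step~1 it produces exactly the exponent $1/4$ in \eqref{eq:vortic asympt}. I expect items (b)--(d) to be the main difficulty: the degeneracy of $\tfm$ on $\dd\tfd$ lies outside the classical Ginzburg--Landau vortex-ball machinery, and the lower bound must be \emph{effective}, with an explicit rate, which is precisely what the compactness-free approach announced in the introduction demands.

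\medskip
\noindent\textbf{Step 3: conclusion.}
Inserting $\gpf[\gpm]=\gpe\le\hgpe+\tfI|\log\eps|^2+\OO(|\log\eps|^{3/2}\log|\log\eps|)$ from Theorem~\ref{teo:gse asympt} and $\gpf[\gpm]=\hgpe+\mathcal{E}_g[\gpm/g]$ into the lower bound of Step~2 gives $|\log\eps|^2\big(\tfIc[\mu]-\tfI\big)\le\OO(|\log\eps|^{3/2}\log|\log\eps|)$, hence $0\le\tfIc[\mu]-\tfI\le\OO(\log|\log\eps|\,|\log\eps|^{-1/2})$, the lower bound being the minimality of $\tfI$ over $\M_{\tfm}(\tfd)$ together with $\mu\in\M_{\tfm}(\tfd)$ from Step~2(c). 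Plugging this into the coercivity inequality of Step~1 yields $\|\mu-\musta\|_{\tfm}^2\le\OO(\log|\log\eps|\,|\log\eps|^{-1/2})$, which is exactly \eqref{eq:vortic asympt}.
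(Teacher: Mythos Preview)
Your Step~1 is correct and matches the paper's stability estimate \eqref{eq:stability I}, and the duality argument showing that $\int\tfm^{-1}|\nabla h_{\nu-\musta}|^2$ controls $\|\nu-\musta\|_{\tfm}^2$ is fine. The overall architecture of Step~3 is also what the paper does.

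The genuine gap is in Step~2. The inequality $\mathcal{E}_g[u]\ge|\log\eps|^2\,\tfIc[\mu]-\OO(|\log\eps|^{3/2}\log|\log\eps|)$ is \emph{false} for the true vorticity $\mu$. Near each vortex the current $(iu,\nabla u)$ behaves like $d_j/|\rv-\avj|$, so $h_{\mu}$ has a logarithmic singularity and the self-interaction contribution to $\int\tfm^{-1}|\nabla h_{\mu}|^2$ over the vortex core region is of the same order as the $\frac{1}{2}\tfm|\mu|$ term. Thus $|\log\eps|^2\tfIc[\mu]$ contains the vortex self-energy \emph{twice}, whereas the kinetic part of $\mathcal{E}_g[u]$ contains it only once; the claimed lower bound overshoots by a quantity of order $|\log\eps|^2$ (not a remainder). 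Relatedly, your item (c) asserting $\mu\in\M_{\tfm}(\tfd)$ with uniform control is wishful: although $\mu$ is smooth for fixed $\eps$, its $\M_{\tfm}$-norm blows up, which is exactly why Remark~\ref{sec:results}.\ref{rem:ren energy} stresses that $\tfIc$ is not well-defined for Dirac masses and why Remark~\ref{sec:results}.\ref{rem:norm} says the $\|\nabla\phi\|_{L^\infty}$ term in \eqref{eq:norm} is indispensable.

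The paper's cure is to insert a \emph{regularized} vorticity $\mut$: set the current to zero inside the vortex balls (see \eqref{eq:low jtilde}--\eqref{eq:low mutilde}). Then the vortex-ball lower bound \eqref{eq:lowboundballs} accounts for the self-energy inside the balls, while Lemma~\ref{lem:low int estimate} shows that the kinetic energy \emph{outside} the balls dominates $|\log\eps|^2\int\tfm^{-1}|\nabla h_{\mut}|^2$, with no double counting. This gives $\mathcal{E}_g[u]\gtrsim|\log\eps|^2\tfIc[\mut]$, and the coercivity yields control of $\mut-\musta$ in the stronger norm (denominator $(\int\tfm^{-1}|\nabla\phi|^2)^{1/2}$ alone). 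Only then does one pass from $\mut$ to $\mu$ via Lemma~\ref{lem:regul vortic}, which bounds $\mu-\mut$ in $(C^1_c)^*$; this last step is precisely what forces the extra $\|\nabla\phi\|_{L^\infty}$ in the denominator of $\|\cdot\|_{\tfm}$. Your ``completion of squares'' in item (d) cannot produce the interaction term for $h_{\mu}$ directly; it must go through $h_{\mut}$.
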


\begin{rem}(The norm $ \lf\| \: \cdot \: \ri\|_{\tfm} $)\label{rem:norm}\mbox{}\\
The expression \eqref{eq:norm} is a rather natural definition for the norm of a measure. The norm of the test function $\phi$ appearing in the denominator contains two parts. The first is naturally associated with the energy functional \eqref{eq:tfI}, so an optimal statement must necessarily include this term. The second term  $\| \nabla \phi\|_{L ^{\infty} (\B(\rbulk))}$ appears when regularizing $\mu$ in the course of the proof. Without this term, the norm $\| \! \cdot \! \|_{\tfm}$ would not be well-defined for a Dirac mass, and in view of \eqref{eq:intro vortic} this would be rather problematic in our setting. As we will see in the proof, a suitable regularization of $\mu$ can be estimated in the norm where the term $\| \nabla \phi\|_{L ^{\infty} (\B(\rbulk))}$ is removed in the denominator of \eqref{eq:norm}. We still believe that it is necessary to include that term to state a result on the asymptotics of $\mu$. 

Note finally that if $\tfm$ was uniformly bounded below by a positive constant independent of $ \eps $ in $\B(\rbulk)$ (which is not the case), the norm \eqref{eq:norm} would be equivalent to the $\left(C^1_c (\B(\rbulk))\right)^*$ norm. In particular, this means that for any \emph{fixed} $R<\tfr$ and $\ep$ small enough, \eqref{eq:vortic asympt} yields an estimate on the $\left(C^1_c (\B(R))\right)^*$ norm of $ \mu - \musta $. Indeed, for $\ep$ small enough, $\B(R)\subset \B(\rbulk)$ and $\tfm$ is bounded below in $\B(R)$. We can thus deduce from Theorem \ref{teo:vorticity} that for any fixed $R<\tfr$
\[
 \mu \rightarrow \musta
\]
strongly in $\left(C^1_c (\B(R))\right)^*$ as $\ep \to 0$. The reason why we consider the approximation of $\mu$ by $\musta$ in the larger ball $\B(\rbulk)$ has already been explained: such a ball contains the bulk of the mass in the $\ep \to 0$ limit, whereas $\B(R)$ does not.
\end{rem}

\begin{rem}(Asymptotics for an explicit vorticity measure)\label{rem:explicit}\mbox{}\\
It is also possible to state the result in terms of an `explicit' vorticity measure of the form of the right-hand side of \eqref{eq:intro vortic}, in the spirit of, e.g., \cite[Theorem 1.2]{R} or \cite[Theorem 1.1]{CPRY1}. Indeed, the proof of Theorem \ref{teo:vorticity} requires to localize the possible vortices in small balls of centers $\avj$ and radii $ \varrho_j $, $ j=1\ldots J$, and any statement on the vorticity measure $\mu$ translates into one on the measure 
	\bdm
		2\pi |\log \ep| ^{-1} \sum_{j=1} ^J d_j \delta_{\avj},
	\edm
 thanks to the so-called Jacobian estimate (see Proposition \ref{pro:jacest} below). Note however that the norm in which the estimates hold in this case is necessarily weaker since Dirac masses are less regular than $\mu$.
\end{rem}

We will comment further on these results below. Section \ref{sec:heuristics} presents a sketch of our proofs and Section \ref{sec:discussion} contains a comparison to earlier results and discusses a possible extension of our method to a different setting, namely that of the third critical speed in a flat trap, studied before in \cite{CRY,R}.

\subsection{Sketch of Proofs}
\label{sec:heuristics}

For the convenience of the reader we now sketch the main ideas of our proofs. Several standard techniques will be employed and some of the ideas we use originate in \cite{ABM} and \cite{R} (one can in particular compare the following sketch to \cite[Section 1.2]{R}). We focus on the way the energy lower bound is obtained because we believe this is more helpful in explaining the origin of the renormalized energy \eqref{ren energy}. We also slightly deviate from the actual proof procedure in several places when this serves the purpose of our heuristic considerations.

As is standard, the proof starts with an energy decoupling:
\begin{equation}
\label{eq:sk decouple}
\gpe = \hgpe + \int_{\R ^2} \diff \rv \lf\{ \half g ^2 |\nabla u| ^2 -g ^2 \Om \rvp \cdot (iu,\nabla u) + \frac{g ^4}{\ep ^2} (1-|u| ^2) ^2 \ri\},
\end{equation}
where $(iu,\nabla u ) = \mathbf j$ is the superfluid current defined in \eqref{eq:intro j}. The only input in \eqref{eq:sk decouple} is the variational equation for $g$. Then, as already mentioned, the bulk of the mass is contained in $\tfd$, so that we make a very small error by restricting the integration to this domain. Also, since $g^2 \approx \tfm$, we can, at least at the level of heuristics, consider the reduced functional
\begin{equation}
\label{eq:sk reduc ener} 
\E [u] = \int_{\tfd} \diff \rv \bigg\{ \half \tfm |\nabla u| ^2 -\tfm  \Om \rvp \cdot (iu,\nabla u) + \frac{\tfm ^2}{\ep ^2} (1-|u| ^2) ^2 \bigg\}
\end{equation}
and we essentially have to understand why its minimization reduces to that of $\tfIc$, once a scaling factor $ \OO(|\log\eps|^2) $ has been extracted. It is fairly easy to obtain from the rough upper bound (trial state $v\equiv 1$)
\[
 \E[u] \leq 0,
\]
that the sum of the first and the last term in \eqref{eq:sk reduc ener} are suitably bounded above. Using the vortex balls method introduced independently in \cite{J,Sa}, this allows to control the area of the set where $|u|$ differs significantly from $1$ and enclose it in a finite collection of disjoint small balls $\B_j $, $ j=1,\ldots,J$ that serve as `approximate vortices'. One also proves that each ball contains a kinetic energy whose leading order is precisely the energetic cost we have been alluding to in \eqref{eq:intro vcost}:
\begin{equation}\label{eq:sk vballs}
\int_{B_j} \diff \rv \: \half \tfm |\nabla u| ^2 \gtrapprox  \pi |d_j|   \tfm (a_j) \left| \log \ep \right|,
\end{equation}
where $d_j$ is the degree of $u$ on $\dd \B_j$ and $\avj$ the center of $ \B_j$. 

To evaluate the energetic gain of vortices it is convenient to integrate by parts the second term of \eqref{eq:sk reduc ener} using the potential defined in \eqref{eq:intro pot}:
\begin{equation}\label{eq:sk ipp}
 -\Om\int_{\tfd} \diff \rv \: \tfm  \rvp \cdot (iu,\nabla u) = |\log \ep| \int_{\tfd} \diff \rv \: \tfpot \curl(iu,\nabla u).
\end{equation}
Then according to \eqref{eq:intro mu} and \eqref{eq:intro vortic} (which can be put on a rigorous basis thanks to the Jacobian estimate method \cite{JS2})
\begin{equation}\label{eq:sk js}
|\log \ep| \int_{\tfd} \diff \rv \: \tfpot \curl(iu,\nabla u) \approx 2\pi |\log \ep | \sum_j d_j \tfpot(a_j).   
\end{equation}

Note that the expressions for the energetic cost and gain of vortices can be motivated by a simple computation involving an ansatz of the form 
\[                                                                                                                                                
u (z)= \xi (z) \prod_{j=1} ^J \left(\frac{z-z_j}{|z-z_j|}\right) ^{d_j}                                                                                                                                     \]
where we have used complex notation $ z : = x + i y $ for a point $ \rv = (x,y) \in \R^2 $ and $z_j$ are the positions of the vortices. The real-valued function $\xi$ is a cut-off ensuring that $u$ vanishes close to the vortices. What actually comes out of such a computation is a factor $d_j ^2$ instead of $|d_j|$ in \eqref{eq:sk vballs} but rigorous analysis has so far been limited to the obtention of the smaller $|d_j|$ factor (recall that $d_j$ is an integer), except when a priori information is available on the vortex distribution. The difference is anyway of no concern to us because it is always favorable for the vortices to be singly-quantized, i.e., $d_j=1$.

\medskip

There now remains to bound from below the part of the kinetic energy contained outside the vortex balls. To this end we note that, since outside $\cup_j \B_j$ we have $|u|\approx 1$,  
\[
 \int_{\tfd \setminus \cup_j \B_j} \diff \rv \: \half \tfm |\nabla u| ^2 \approx \int_{\tfd \setminus \cup_j \B_j} \diff \rv \: \half \tfm |\mathbf j| ^2 
\]
with $\mathbf j$ defined in \eqref{eq:intro j}. Now comes the procedure of regularization of the vorticity measure we have been alluding to in Remark \ref{sec:results}.\ref{rem:ren energy}: since the vortex balls contain vortices, i.e., phase singularities, the gradient of the phase of $u$ is expected not to be very well behaved inside them. We thus exclude these regions by setting
\begin{equation}\label{eq:sk jt}
 \jt := \begin{cases}
         \mathbf j, 	&	\mbox{in } \tfd \setminus \cup_j \B_j \\
         0,		&	\mbox{in } \cup_j \B_j,
        \end{cases}
\end{equation}
and write 
\[
 \int_{\tfd \setminus \cup_j \B_j} \diff \rv \: \half \tfm |\nabla u| ^2 \approx \int_{\tfd} \diff \rv \: \half \tfm |\jt| ^2. 
\]
With $\jt$ is associated the vorticity measure (rescaled as $\mu$)
\[
 \mut:= |\log \ep| ^{-1} \curl \: \jt
\]
and in turn we can associate with $\mut$ the potential $h_{\mut}$ defined as in \eqref{eq ren energy}. We then have by definition
\begin{equation}\label{eq:sk j hmut}
\curl \left( \jt + |\log \ep | \frac{1}{\tfm} \nablap h_{\mut}\right) = 0,
\end{equation}
which implies that $\jt$ and $|\log \ep | \frac{1}{\tfm} \nablap h_{\mut}$ differ by the gradient of an $H^1$ function. Then we conclude (see Lemma \ref{lem:low int estimate} for details)
\begin{equation}\label{eq:sk int ener}
 \int_{\tfd \setminus \cup_j \B_j} \diff \rv \: \half \tfm |\nabla u| ^2 \gtrapprox \half |\log \ep| ^2 \int_{\tfd} \diff \rv \: \frac{1}{\tfm} |\nabla h_{\mut}| ^2
\end{equation}
and this term has to be interpreted as the energy due to the interaction between vortices via the potential $h_{\mut}$ they create.

Now, as the term `regularization' suggests, one can actually prove that $\mut \approx \mu$ in a suitable sense akin to that in which \eqref{eq:intro vortic} can be made rigorous. Gathering \eqref{eq:sk vballs}, \eqref{eq:sk ipp}, \eqref{eq:sk js} and \eqref{eq:sk int ener} and dropping the last term in \eqref{eq:sk reduc ener} we may thus write (neglecting all remainder terms for simplicity)
\begin{equation}\label{eq:sk low bound}
\E [u] \gtrapprox |\log \ep| ^2 \int_{\tfd} \lf\{ \frac{1}{2\tfm} |\nabla h_{\mut}| ^2 + \half \tfm |\mut| + \tfpot \mut \right\},
\end{equation}
where we have again used the informal relation \eqref{eq:intro vortic}.
This is of course the desired lower bound, since one can recognize in the right-hand side our renormalized energy functional. 

Now come the main technical novelties of the paper. Observe first that we do not prove \emph{a priori} as is often done \cite{AAB,ABM,IM2,R} that most vortices carry a positive degree. This fact follows directly from the minimization of $\tfIc$. In other words, the minimizer of $\tfIc$ is automatically positive and there is no need to restrict the minimization to positive measures. Indeed, although the problem of minimizing $\tfIc$ bears some resemblance with the obstacle problem in GL theory \cite[Chapter 7]{SS2}, it is in fact a bit simpler and its explicit unique minimizer is given by \eqref{musta}. Contrary to the GL case we do not obtain a free boundary problem: the region where vortices should lie is directly encoded in the cost function \eqref{eq:into costf}.

Moreover, $\tfIc$ has the following very nice stability property (see Section \ref{sec:limit problem} for the proof): for any measure $\nu$  
\begin{equation}\label{eq:sk stability}
 \tfIc [\nu] \geq \tfI + \int_{\tfd} \frac{1}{2 \tfm} |\nabla h_{\nu - \musta}| ^2.
\end{equation}
We can thus conclude from the above that
\[
 \gpe \gtrapprox \hgpe + \tfI |\log \ep| ^2 + |\log \ep| ^2 \int_{\tfd} \frac{1}{2 \tfm} |\nabla h_{\mut - \musta}| ^2
\]
and the last term is a norm squared of $\mut$, corresponding to $\Vert\: .\Vert\:_{\tfm} ^2$ but with the $\| \nabla \phi \|_{L^{\infty}}$ term dropped in \eqref{eq:norm}, as one case easily see by a simple duality argument. 

What remains to be done is the construction of a trial state giving an energy upper bound confirming that the above lower bound is optimal. We adapt a well-established technique (see, e.g., \cite{AAB,ABM,SS2,R}) based on a Green representation of the potentials defined as in \eqref{eq ren energy} to out setting. We use a Riemann  approximation of $\musta$ by a measure located in small balls that mimic vortices and the suitable definition of a phase factor whose $\curl$ is this approximation of $\musta$. We refer to Section \ref{sec:upper bound} for details and simply state the result, again without making the remainder terms precise 
\begin{equation}\label{eq:sk final}
 \hgpe + \tfI |\log \ep| ^2  \gtrapprox \gpe \gtrapprox \hgpe + \tfI |\log \ep| ^2  + |\log \ep| ^2 \int_{\tfd}\frac{1}{2 \tfm}  |\nabla h_{\mut - \musta}| ^2.
\end{equation}
Now an important advantage of our approach becomes apparent, since with such an estimate there is no need to rely on compactness arguments to prove vorticity asymptotics. We obtain from  (the rigorous version of) \eqref{eq:sk final} an estimate of the norm of the difference $\mut-\musta$ and there only remains to use the fact that only a small part of the current is removed in the definition \eqref{eq:sk jt} to estimate the difference between $\mu$ and $\mut$ (see Lemma \ref{lem:regul vortic}) and thus conclude the proof of Theorem \ref{teo:vorticity}. Note again that, since we expect that $\mu$ contains singularities approaching Dirac masses, the regularization procedure leading to $\mut$ seems unavoidable in order to properly define the renormalized energy. This justifies the claim in Remark \ref{sec:results}.\ref{rem:norm} that $\mut-\musta$ can be estimated in a better norm than $\mu-\musta$.

To our knowledge this is the first time in the literature that explicit estimates on the rate of convergence of a vorticity measure to the minimizer of a renormalized energy are provided. There is the exception of \cite{R} but the limit problem was simpler there and its properties not exploited fully as we do here. Common to both papers however is the regularization procedure of the vorticity measure that allows to prove explicit estimates without using any compactness argument.

\begin{rem}(Boundary conditions in \eqref{eq ren energy})\label{rem:dirichlet bc} \mbox{}\\
The reader might wonder why it is natural to use Dirichlet boundary conditions in \eqref{eq ren energy}, i.e., set $ h_{\nu} = 0 $ on $ \partial \B(\rtf) $. A first answer is that this is essential to be able to go from \eqref{eq:sk j hmut} to \eqref{eq:sk int ener} (see the proof of Lemma \ref{lem:regul vortic}). A more physical answer can be provided however: $\mathbf j$ and $\jt$ are superfluid currents, so they must be thought of as velocity fields. In fact they correspond to phase gradients: where one can write $u = |u| e^{i\varphi}$, then $\mathbf j=|u| ^2 \nabla \varphi$ and one should remember that essentially $|u| \approx 1$ except in the small region covered by vortex balls. The potential $h_{\mut}$ is therefore defined modulo a constant and the Dirichlet boundary condition should rather be thought of as the constraint that $h_{\mut}$ is constant on $\dd \B(\rtf) $, i.e., $\dd_{\tau} h_{\mut}=0$ on $\dd \B(\rtf) $, where $\tau$ is the tangent vector. Since $\jt \approx \mathbf j$ should be thought of as $ 1/\tfm\nabla h_{\mut}$ rotated by $\pi/2$ (see \eqref{eq:sk j hmut}), this means that we are actually imposing $\dd_{n} \jt = 0$ on $\dd \B(\rtf) $, a rather natural condition if we think of $ \B(\rtf) $ as the support of the condensate from which the fluid should not escape: the superfluid current must be tangent to the boundary of the sample. 
\end{rem}

\subsection{Discussion and Extensions}\label{sec:discussion}

In many respects, the regime we study here for the GP theory is the analogue of the `obstacle problem regime' of GL theory studied in \cite{SS1} and \cite[Chapter 7]{SS2} (see also \cite{JS1}). Common to both settings is the fact that vortices occupy a region whose size grows when increasing $\Omega$ (respectively the external magnetic field $h_{\rm ex}$ in GL theory) until it fills the whole sample. The limit problem we obtain is related to that of GL theory but has significant differences, mostly due to the inhomogeneous matter density profile of GP theory. It is in some sense simpler because it has an explicit solution, but it leads to richer physics: the inhomogeneity of the vortex distribution and its progressive homogenization we have discussed before are indeed absent in GL theory. 

An analogue of the stability estimate \eqref{eq:sk stability} also holds for the limit problem of GL theory, although it does not seem to have been noticed before (see Remark \ref{sec:limit problem}.\ref{rem:GL} below). It can be used in the manner we do in this paper to obtain slight improvements of the classical results of \cite[Chapter 7]{SS2}, like convergence of the vorticity measure in better norms and explicit error estimates. 

\medskip

Our method can also be adapted to treat variations of the physical setting: for instance it is a rather simple adaptation to prove the results corresponding to our Theorems \ref{teo:gse asympt} and \ref{teo:vorticity} in the case of an annular condensate. Modulo slight modifications one can thus treat the setting of \cite{AAB} in the regime where the rotation is above the critical speed for vortex nucleation but of the same order of magnitude. 

Heavier modifications are on the other hand needed in order to extend the results to a regime close to the third critical speed\footnote{For a description of the physics of the three critical speeds of GP theory, the reader may want to refer to \cite{CPRY1,CPRY2}.} in a flat traps: The GP functional in this case reads\footnote{Note the slightly different units as in \cite{CRY,CPRY1}, i.e., mass equal to $ 1/2 $ and angular velocity $ 2 \Omega $.}
\begin{equation}\label{eq:flat trap}
\gpf[\Psi] =  \int_{\B(1)} \diff \rv \: \bigg\{ \lf| \nabla \Psi \ri|^2 - 2\Omega \Psi^* L \Psi + \frac{\lf| \Psi \ri|^4}{\eps^2} \bigg\},
\end{equation}
where $\B(1)$ is the unit disc, the minimization is performed under a unit mass constraint (that one can supplement with a Dirichlet boundary condition) and the regime of interest is $\Omega \propto \ep ^{-2} |\log \ep| ^{-1}$ with $\ep \ll 1$. We refer to \cite{CRY,R,CPRY1} for a more thorough discussion of this model, but only mention that if
\[
 \Omega = \frac{\Omega_0}{\ep ^2 |\log \ep|}, \qquad \Omega_0 >\frac{2}{3\pi},
\]
one reaches a giant vortex phase where the mass is confined to a thin annulus along the boundary $\dd \B$ of the trap by centrifugal forces and no vortex is present in the annulus. If 
\[
 \Omega_0 < \frac{2}{3\pi},
\]
the present analysis applies and yields the conclusion that vortices are densely packed in an annulus included in the bulk of the condensate. This annulus progressively fills the bulk in the limit $\Omega_0 \to 0$ (that is when $\Omega$ \emph{decreases}) while the vortex density, highly inhomogeneous due to a non-constant density profile when $\Omega_0$ is not too small, gradually homogenizes. The precise expression of the vortex density as a function of the matter density is exactly analogue to \eqref{eq:intro result}, with the appropriate density $\tfm$ and cost function $\tfpot$ (see \cite{CRY}). This bridges between \cite{CRY} and \cite{CY,CPRY1} where we have proved that the vortex distribution is homogeneous when $|\log \ep| \ll \Om \ll \ep^{-2}|\log \ep| ^{-1}$. The borderline case of $\Omega_0 = 2 (3 \pi) ^{-1}(1-o(1))$ is considered in \cite{R}.

As discussed in the introduction of \cite{CRY}, the analysis of the third critical speed in a flat trap is more involved than that of the first critical speed treated here. Our method thus needs to be supplemented with the tools developed in \cite{CRY,R} to adapt to this setting. What makes the application possible is that the phase transition happening at the third critical speed can be seen as a $\Ofirst$ type transition but \emph{backwards} (vortices disappear when $\Om$ is increased). 

Let us also emphasize that the regime corresponding to the third critical speed in a `soft' trap given by a potential such as \eqref{ext pot} is quite different from that in a flat trap (see \cite{CPRY2,CPRY3}) and shares much less features with the $\Ofirst$ regime. It is thus unlikely that the methods we develop here can apply in this case.

\medskip

During the completion of this paper we learned of the recent work \cite{BJOS2} where (among other things) the regime we are dealing with has been studied for a 3D condensate. A limit problem is derived, formulated in terms of the current \eqref{eq:intro j} instead of the vorticity \eqref{eq:intro mu}. The situation seems more complicated in 3D, and nothing as explicit as formula \eqref{eq:intro result} appears to be derivable from this limit problem. The fact that, when $\Omega \gg |\log \ep|$, the vortex distribution becomes homogeneous (in fact constituted of many densely packed and uniformly distributed straight vortex lines parallel to the axis of rotation) is proved however, confirming our results in the regime where the rotation speed largely exceeds the critical one. Whether our approach, supplemented with the tools of \cite{BJOS1}, can be generalized to three space dimensions and complete the results of \cite{BJOS2} remains a question for future investigations. 

\medskip

We finally remark that vortex patterns inhomogeneities play a crucial role in a very different regime than that under consideration here, namely in the vicinity of the maximum rotation speed attainable in a condensate confined by a purely quadratic trap \cite{AB,ABD}. In that case the inhomogeneity only manifests itself close to the boundary of the condensate and is thus hardly observable,  but it has the important effect of modifying the density profile of the fluid.

\subsection{Plan of the Paper}

The rest of the paper presents the proofs of our main results. We start by analyzing in Section \ref{sec:limit problem} the properties of the renormalized energy. 
Section \ref{sec:upper bound} contains the construction of our trial state and the evaluation of its energy. The general technique is not new but, since our method allows to deduce from energy bounds a quantitative estimate of the rate of convergence of $\mu$ to $\musta$, we make an effort to obtain precise estimates of the remainders in the energy upper bound. Section \ref{sec:lower bound} is then concerned with the energy lower bound and the proof of Theorem \ref{teo:vorticity}. An appendix gathers technical results used in several places of the proofs.

\section{The Renormalized Energy}
\label{sec:limit problem}

In this section we focus on the study of the renormalized energy of vortices. For further convenience we define the limit functional for a slightly larger class of densities $ \rho $ and measures $\nu$ as
\begin{equation}\label{eq:defiInu}
\irho [\nu] = \int_{\D} \left\{ \frac{1}{2\rho} |\nabla h_{\nu}| ^2 + \frac{1}{2} \rho |\nu| + F \nu \right\},  
\end{equation}
where 
\begin{equation}\label{eq:defihnu}
	 \begin{cases}
		-\nabla \left( \frac{1}{\rho} \nabla h_{\nu}\right) = \nu,	&	\mbox{in } \D, \\
		h_{\nu} = 0,								&	\mbox{on } \partial \D.
	\end{cases} 
\end{equation}
We will not strive for the most general assumptions allowing the study of such an energy functional, but instead state a theorem that is sufficient for the purpose of proving our main results. From a mathematical point of view, this means that the functions $\rho$ and $F$ and the domain $\D$ appearing in the definition above are not necessarily the ones that appear in the preceding sections, but the assumptions we impose on them are inherited from the physical features of our original GP theory. Note however that we do not assume in this section that the problem is radial. We do not stress the dependence of $\irho$ on $F$ because in the applications we have in mind $F$ is related to $\rho$ as in \eqref{eq:intro pot}.

The following theorem states the existence and uniqueness of the minimizer of the functional $\irho$ in the natural energy space. These are somewhat classical results reminiscent of potential theory \cite{ST,St}, more important to us are the explicit formula \eqref{eq:defi musta} for the minimizer and the stability property \eqref{eq:stability I} that is the key input in the proof of Theorem \ref{teo:vorticity}.

\begin{teo}[\textbf{Minimization of the renormalized energy}]\mbox{}\label{theo:renorm energy}\\ 
Let $\D$ be a regular open subset of $\R ^2$, $\rho \in C^2 (\D),\: \rho \geq 0$, $F \in C^2(\D) $. We assume that 
\begin{equation}\label{eq:hypo signe}
\half \rho - F > 0,	\quad \mbox{in } \D, 
\end{equation}
and define  
\begin{equation}\label{eq:defi H}
\hrho:= \half\rho  + F.
\end{equation}
\begin{enumerate}
\item \textbf{Existence and uniqueness of a minimizer}. The functional $\irho$ has a unique minimizer $\mustar$ in the class of measures 
\beq
	\label{measure class}
	\M_{\rho} (\D) = \left\lbrace \nu \in \left(C^0_c (\D)\right) ^*,\: \int_{\D} \lf\{ \frac{1}{\rho} |\nabla h_{\nu}| ^2 +  \rho |\nu| \ri\} < +\infty \right\rbrace.                                                                                                                                        \eeq
It is given by the formula
\begin{equation}\label{eq:defi musta}
\mustar = \left[\nabla \lf( \frac{1}{\rho} \nabla \hrho \ri) \right]_+ \one_{\left\lbrace  \hrho \leq 0\right\rbrace},
\end{equation}
and the ground state energy is
\beq
	\label{gs ren energy}
	\irhoe= \irho[\mustar] =  \frac{1}{2} \int_{\supp (\mustar)} \hrho \mustar.
\eeq
\item \textbf{Stability of the minimizer}. For any $\nu \in \M_{\rho} (\D)$
\begin{equation}\label{eq:stability I}
\irho[\nu] \geq \irhoe + \int_{\D} \frac{1}{2\rho} \left| \nabla h_{\mustar - \nu} \right| ^2. 
\end{equation}
\end{enumerate} 
\end{teo}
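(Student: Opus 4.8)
The plan is to first establish the variational structure of $\irho$ by expanding the energy around a candidate minimizer, then identify the candidate explicitly via a convexity/complementarity argument, and finally read off the stability estimate as a byproduct of the expansion. The key observation is that $\nu\mapsto h_\nu$ is linear, so the first term $\int_\D \frac{1}{2\rho}|\nabla h_\nu|^2$ is a quadratic form in $\nu$; writing it as $Q(\nu)$ and integrating by parts using \eqref{eq:defihnu} with the Dirichlet condition, one gets $Q(\nu)=\frac12\int_\D h_\nu\,\nu$. For two measures $\nu_1,\nu_2$ the associated bilinear form is symmetric, $\int_\D h_{\nu_1}\nu_2=\int_\D \frac{1}{\rho}\nabla h_{\nu_1}\cdot\nabla h_{\nu_2}=\int_\D h_{\nu_2}\nu_1$. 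Hence for any $\nu$ and any reference measure $m$,
\begin{equation*}
\irho[\nu]=\irho[m]+\int_\D \tfrac{1}{\rho}\nabla h_m\cdot\nabla h_{\nu-m}+\int_\D \tfrac{1}{2\rho}|\nabla h_{\nu-m}|^2+\tfrac12\int_\D \rho\big(|\nu|-|m|\big)+\int_\D F(\nu-m).
\end{equation*}
Using the PDE for $h_m$ to integrate the cross term by parts, $\int_\D \frac{1}{\rho}\nabla h_m\cdot\nabla h_{\nu-m}=\int_\D (-\nabla(\frac1\rho\nabla h_m))(\nu-m)$, and if $m$ solves $-\nabla(\frac1\rho\nabla h_m)=-\hrho$ on $\supp m$ appropriately, the linear-in-$(\nu-m)$ terms should combine into something manifestly nonnegative when $m=\mustar$.

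Concretely, I would take $m=\mustar$ as defined by \eqref{eq:defi musta} and verify the Euler--Lagrange / obstacle-type relations it satisfies: on $\supp(\mustar)$ one has $-\nabla(\frac1\rho\nabla h_{\mustar})=\mustar=[\nabla(\frac1\rho\nabla\hrho)]_+$, and one checks that $h_{\mustar}=-\hrho$ there (so that $\frac12\rho+F+h_{\mustar}=0$ on $\supp\mustar$), while $h_{\mustar}\geq -\hrho$, equivalently $\hrho+h_{\mustar}\geq 0$, on all of $\D$ — this is where hypothesis \eqref{eq:hypo signe} guaranteeing $\frac12\rho-F>0$ (hence $\hrho<\rho$, keeping the relevant region well inside $\D$) enters, together with a maximum-principle argument for the linear elliptic operator $-\nabla(\frac1\rho\nabla\cdot)$. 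Granting these, the linear terms in the expansion above become $\int_\D(h_{\mustar}+F)(\nu-\mustar)+\frac12\int_\D\rho(|\nu|-|\mustar|)=\int_\D\big(h_{\mustar}+F+\tfrac12\rho\big)|\nu|-\int_\D\big(h_{\mustar}+F+\tfrac12\rho\big)\mustar+\frac12\int_\D\rho(|\nu|-\nu)\cdot(\text{correction})$; on $\supp\mustar$ the bracket vanishes and $\mustar\geq 0$, while off $\supp\mustar$ the bracket $=\hrho+h_{\mustar}\geq 0$ and it multiplies $|\nu|\geq 0$ — so the whole linear contribution is $\geq 0$, leaving exactly $\irho[\nu]\geq\irho[\mustar]+\int_\D\frac{1}{2\rho}|\nabla h_{\nu-\mustar}|^2$, which is both the stability statement \eqref{eq:stability I} and, since the quadratic term is strictly positive unless $h_{\nu-\mustar}\equiv 0$, the uniqueness of the minimizer in $\M_\rho(\D)$. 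The formula \eqref{gs ren energy} follows by plugging $\nu=\mustar$ into $\irho[\mustar]=\int_{\supp\mustar}\{\frac{1}{2\rho}|\nabla h_{\mustar}|^2+\frac12\rho|\mustar|+F\mustar\}$ and using $h_{\mustar}=-\hrho$, $-\nabla(\frac1\rho\nabla h_{\mustar})=\mustar$ on that set: the kinetic term integrates by parts to $\frac12\int h_{\mustar}\mustar=-\frac12\int\hrho\mustar$, and $\frac12\rho+F=\hrho$, giving $\irho[\mustar]=-\frac12\int\hrho\mustar+\int\hrho\mustar\cdot(\text{hmm})$ — I would keep careful track of signs here, the upshot being $\frac12\int_{\supp\mustar}\hrho\mustar$.

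For existence, rather than relying on the explicit formula I would run the direct method: $\irho$ is bounded below on $\M_\rho(\D)$ (using \eqref{eq:hypo signe} to control $F\nu$ by $\frac12\rho|\nu|$ plus the kinetic term, e.g. $F\nu\geq -\frac12\rho|\nu|+(\frac12\rho-|F|)|\nu|$... actually $|F\nu|\leq\frac12\rho|\nu|$ fails in general, so one bounds $\int F\nu$ via $\|\nabla h_\nu\|$ and a Poincaré/trace inequality against $\nabla h_{\mathds 1}$-type comparison), a minimizing sequence is bounded in the norm defining $\M_\rho(\D)$, one extracts weak-$*$ limits of $\nu_n$ and weak $H^1_{\rho}$ limits of $h_{\nu_n}$, checks the constraint map $\nu\mapsto h_\nu$ is closed, and uses lower semicontinuity of the kinetic term and of $\nu\mapsto\int\rho|\nu|$. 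Then one shows any minimizer satisfies the obstacle-problem characterization forcing it to equal the right side of \eqref{eq:defi musta}; this ties existence to the explicit formula. \textbf{The main obstacle} I anticipate is the rigorous justification of the identities $h_{\mustar}=-\hrho$ on $\supp\mustar$ and $h_{\mustar}\geq-\hrho$ on $\D$ at the required regularity — i.e. showing that the "naive" solution built from $\hrho$ genuinely solves the constrained problem, including controlling the behavior across the free boundary $\partial\{\hrho\leq 0\}\cap\partial\{\nabla(\frac1\rho\nabla\hrho)\geq 0\}$ and verifying that no mass of the true minimizer escapes to $\partial\D$ (handled by \eqref{eq:hypo signe}). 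Everything else is either a routine direct-method argument or the bookkeeping of the quadratic expansion.
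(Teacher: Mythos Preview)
Your approach is essentially the same as the paper's --- expand the quadratic functional around a reference measure, identify the complementarity conditions that make the linear remainder nonnegative, and read off both uniqueness and stability from the expansion --- but the \emph{order} you choose creates a real difficulty that the paper sidesteps.

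You propose to \emph{define} $\mustar$ by the formula \eqref{eq:defi musta} and then verify that $h_{\mustar}=-\hrho$ on $\supp\mustar$ and $h_{\mustar}\geq -\hrho$ on $\D$. You correctly flag this as ``the main obstacle,'' and it is: from the formula one only gets that $h_{\mustar}+\hrho$ is $\rho$-harmonic on $\supp\mustar$, and forcing it to vanish there requires matching conditions across the free boundary $\partial\supp\mustar$ that are not obviously available (pieces of that boundary are $\{\hrho=0\}$ and $\{\nabla(\tfrac1\rho\nabla\hrho)=0\}$, and neither immediately gives $h_{\mustar}=0$). The paper goes the other way: it first proves existence of a minimizer by the direct method, derives the Euler--Lagrange relation \eqref{eq:EEL vorticity} by variations $(1+tf)\mustar$, and from \emph{minimality} deduces $h_{\mustar}=-\hrho$ on $\supp\mustar^{+}$; the hypothesis \eqref{eq:hypo signe} is then used via a trial state $\mustar^{+}$ to kill $\mustar^{-}$, and only afterwards is the support identified and the formula \eqref{eq:defi musta} read off. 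This route never needs to solve the obstacle problem directly.

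There is also a genuine gap in your expansion. When you split the linear remainder, you only invoke $h_{\mustar}+\hrho\geq 0$ off $\supp\mustar$, which handles $\nu^{+}$. For the negative part $\nu^{-}$ supported off $\supp\mustar$ you need the \emph{second} inequality $\tfrac12\rho-F-h_{\mustar}\geq 0$, which you do not mention. The paper proves both inequalities (its Step~3) by separate contradiction arguments using trial perturbations $\mustar\pm t\nu$, and then in Step~4 decomposes $\nu-\mustar=\nu_1+\nu_2$ according to $\supp\mustar$ and uses $|\nu|\geq \mustar+\nu_1+|\nu_2|$ rather than your attempted direct rewriting (which is where your ``(correction)'' term appears). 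That decomposition is what makes the bookkeeping clean.
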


The main physically relevant assumption we make here is \eqref{eq:hypo signe}, which indicates that negative degree vortices are not energetically favored. In our setting it will always be satisfied because typically $\rho \geq 0$ and $F\leq 0$ with equality only at the boundary of the domain. One may certainly prove a related theorem in the case where \eqref{eq:hypo signe} does not hold but it is not our concern here.

\begin{proof}
We split the proof in four steps.

\emph{Step 1 (Existence and Euler-Lagrange equation).} It is not difficult to prove the existence of a minimizer, we will thus skip most of this discussion. Let us just note that with our assumptions
\[ 
\irho[\nu] \geq \int_{\D} \frac{1}{4\rho} |\nabla h_{\nu}| ^2 + C_1 \left\Vert \nu \right\Vert_{H^{-1}(\D)} ^2 - C_2 \left\Vert \nu \right\Vert_{\left( C^1_c(\D)\right) ^*} + \int_{\D} \rho |\nu|,
\]
with $ C_1 = (2\sup_{\rv \in \D} \rho)^{-1} $ and $ C_2 = \sup_{\rv \in \D} (\half \rho - F) $. Recalling the embedding of $\left( C^1_c(\D)\right) ^*$ in $H^{-1}(\D)$ it is then easy to deduce bounds on the minimizing sequences and conclude by lower semi-continuity arguments.\\
Considering now a variation of the form $(1+tf)\mustar $, $ f\in C^0 (\D)$, and noticing that \eqref{eq:defihnu} implies
\begin{equation}\label{eq:reform inter}
\int_{\D} \frac{1}{\rho} |\nabla h_{\nu}| ^2 = \int_{\D} h_{\nu} \nu,
\end{equation}
we see that the Euler-Lagrange equation of the minimization problem takes the form
\begin{equation}\label{eq:EEL vorticity}
\int_{\D} \left( \hstar \mustar + \half \rho|\mustar| + F \mustar \right) f = 0
\end{equation}
for any $f\in C^0 (\D)$. Writing 
\[
 \mustar = \mustar^+ - \mustar^-, \quad \mbox{with } \mustar ^+,\: \mustar^- \geq 0, 
\]
we deduce from the above that 
\begin{equation}\label{eq:prop hstar}
\begin{cases}
\hstar = - \hrho,	& \mbox{on } \supp \left(\mustar ^+ \right),\\
\hstar = \half \rho  - F,  & \mbox{on } \supp \left(\mustar ^- \right),
\end{cases}
\qquad
\begin{cases}
\mustar = \nabla \lf( \frac{1}{\rho} \nabla \hrho \ri),	& \mbox{on }\supp \left(\mustar ^+ \right),\\
\mustar = -\nabla \lf[ \frac{1}{\rho} \nabla \left(\half \rho - F \right)\ri],	& \mbox{on }\supp \left(\mustar ^- \right).
\end{cases}
\end{equation}
Note that this implies in particular $\supp \left(\mustar ^+ \right) \subset \left\lbrace \nabla \lf(\frac{1}{\rho} \nabla \hrho\ri) \geq 0 \right\rbrace$.

\emph{Step 2 (Explicit formula for the minimizer).} We first prove that $\mustar^- = 0$ by using the trial state 
\[
 \mutrial = \mustar ^+.
\]
Uniqueness of the solution to \eqref{eq:defihnu} implies that
\[
h_{\mutrial} =  - \hrho,	\quad \mbox{on } \supp \left(\mustar ^+ \right),
\]
$h_{\mutrial}$ being simply extended to the whole domain $\D$ by requiring that $\nabla (\rho ^{-1} \nabla h_{\mutrial}) = 0$ on $\D \setminus \supp \left(\mustar ^+ \right)$. Recalling \eqref{eq:reform inter} and \eqref{eq:prop hstar}, we thus have
\[
\irho[\mutrial] = \frac{1}{2} \int_{\D} \mustar ^+ \hrho,
\]
whereas 
\[
\irho[\mustar] =   \frac{1}{2}  \int_{\D} \mustar ^+ \hrho +  \frac{1}{2}  \int_{\D} \left(\half \rho - F\right) \mustar^-
\]
Using assumption \eqref{eq:hypo signe}, this clearly means that it must be $\mustar^- = 0$.
The minimizer $\mustar\geq 0$ is then completely determined by the set $\supp \left(\mustar \right)\subset \{ \nabla (\frac{1}{\rho} \nabla \hrho) \geq 0 \}$ and we have (recall \eqref{eq:prop hstar}) 
\[
\irho[\mustar] =  \frac{1}{2} \int_{\supp (\mustar)} \hrho \left(\nabla \frac{1}{\rho} \nabla \hrho \right),
\]
from which it is easy to deduce that, in order for $\mustar$ to minimize $\irho$, it must be
\[
\supp \left(\mustar \right)= \left\lbrace \nabla \lf( \frac{1}{\rho} \nabla \hrho \ri) \geq 0 \right\rbrace \cap \left\lbrace  \hrho \leq 0 \right\rbrace,                                                                                         
\]
and \eqref{eq:defi musta} follows.

\emph{Step 3 (Consequences).} We first claim that 
\begin{equation}\label{eq:techn hstar 1}
\hstar \geq -\hrho, \quad \mbox{a.e. in } \D.
\end{equation}
Indeed, we have equality in $\supp(\mustar)=\{ \nabla (\frac{1}{\rho} \nabla \hrho) \geq 0 \} \cap \left\lbrace  \hrho \leq 0 \right\rbrace$. Let us now suppose that \eqref{eq:techn hstar 1} does not hold in some subregion of $\supp(\mustar)^c$ and pick some $\nu>0$ with support in this region. Consider the trial state 
\[
 \mutrial = \mustar + t \nu 
\] 
for $t>0$. Since $\mustar$ and $\nu$ have disjoint supports 
\[
|\mutrial| = | \mustar| + t \nu. 
\]
Also we remark that for any pair of measures $\mu_1,\mu_2 \in \M (\D)$
\begin{equation}\label{eq:two meas interact}
\int_{\D} \frac{1}{\rho} \nabla h_{\mu_1} \cdot \nabla h_{\mu_2} = \int_{\D} \mu_1 h_{\mu_2},  
\end{equation}
which allows to compute the energy of $\mutrial$ and find
\[
\irho[\mutrial] = \irho[\mustar] + t ^2 \int_{\D}\frac{1}{2\rho}|\nabla h_{\nu}| ^2 + t\int_{\D}\left(\hrho + \hstar \right)\nu.
\]
Since the last term is negative by assumption, for $t$ small enough we would obtain $\irho[\mutrial] < \irho[\mustar]$, which contradicts the fact that $\mustar$ minimizes $\irho$.

Next we prove that 
\begin{equation}\label{eq:techn hstar 2}
\half \rho - F -  \hstar \geq 0, \quad \mbox{a.e. in } \supp(\mustar).
\end{equation}
The argument is again by contradiction. Suppose that in some subregion of $\supp(\mustar)$, we have $ \rho - 2 F - 2 \hstar <0$ and pick some positive measure $\nu \in \M (\D)$ whose support is included in this region. Consider the trial state 
\[
 \mutrial = \mustar - t \nu 
\]
for some $t>0$. We have
\[
 \left| \mustar - t \nu\right| \leq |\mustar| + t \nu.
\]
because of \eqref{eq:defihnu}. Computing $\irho[\mutrial]$ by expanding the quadratic term and using \eqref{eq:two meas interact}, we thus find
\[
\irho[\mutrial] \leq \irho[\mustar] + t ^2 \int_{\D}\frac{1}{2\rho} |\nabla h_{\nu}| ^2 + t \int_{\D} \left( \half \rho -F -  \hstar \right) \nu,
\]
from which we immediately deduce that, for small enough $t$, we would obtain $\irho[\mutrial]<\irho[\mustar]$, which is a contradiction. We conclude that \eqref{eq:techn hstar 2} must hold true.

\emph{Step 4 (Stability).} We are now ready to prove the most important point of our theorem, namely the stability property \eqref{eq:stability I}. We write any $\nu \in \M(\D)$ as 
\[
 \nu = \mustar + \nu_1 + \nu_2,
\]
where $\supp( \nu_1 ) \subset \supp (\mustar)$ and $\supp( \nu_2 )\subset \supp(\mustar) ^c$ and note that 
\[
 |\nu| \geq \mustar + \nu_1 + |\nu_2|.
\]
Since $\mustar \geq 0$, using \eqref{eq:two meas interact} again, we get 
\begin{equation}\label{eq: proof stability}
\irho[\nu] \geq \irho[\mustar] + \int_{\D} \frac{1}{2\rho} |\nabla h_{\nu_1+\nu_2}| ^2 + \int_{\D} \left( \half \rho  + F + \hstar\right) \nu_1 + \int_{\D} \half \rho  |\nu_2| + \int_{\D} \left( F + \hstar\right) \nu_2,
\end{equation}
but $\rho  + 2 F + 2 \hstar = 0$ on $\supp (\nu_1)$ by \eqref{eq:prop hstar}. Thus, decoupling $ \nu_2 $ into its positive and negative parts:
\[
 \nu_2 = \nu_2 ^+ - \nu_2 ^-, \quad \mbox{with }  \nu_2 ^+,\: \nu_2 ^- \geq 0,
\]
we have
\[
\irho[\nu] \geq \irho[\mustar] + \int_{\D} \frac{1}{2\rho} |\nabla h_{\nu_1+\nu_2}| ^2  + \int_{\D} \left(\half \rho + F +  \hstar\right) \nu_2 ^+ +\int_{\D} \left( \half \rho - F -  \hstar\right) \nu_2 ^-.
\]
The last two terms of the above expression are positive because of \eqref{eq:techn hstar 1} and \eqref{eq:techn hstar 2} and this leads to \eqref{eq:stability I}.
\end{proof}

\begin{rem}(The obstacle problem of Ginzburg-Landau theory)\label{rem:GL}\mbox{}\\
As already mentioned, the problem studied in this section has a clear connection with the limit problem obtained in the study of type-II superconductors in the first critical field regime \cite{SS1,SS2}, which in turn is connected to obstacle problems \cite{BS}. The energy of a vortex density in a type-II superconductor occupying the 2D domain $\D$ can be approximated in suitable units and variables by 
\begin{equation}\label{eq:GL ener}
 \I^{\rm GL} [\mu] = \frac{1}{2} \int_{\D} \left(\frac{|\mu|}{\lambda}  + \left|\nabla h_{\mu} \right| ^2 + |h_{\mu}-1| ^2 \right),
\end{equation}
where $h_{\mu}$ solves
\begin{equation}\label{eq:GL pot}
\begin{cases}
 -\Delta h_{\mu} + h_{\mu} = \mu,	&	\mbox{in } \D, \\
 h_{\mu} = 1,				&	\mbox{on } \dd \D,
\end{cases} 
\end{equation}
and $\lambda>0$ is a parameter.

It is known \cite{BS,SS1} that \eqref{eq:GL ener} has a unique minimizer $\mu^{\rm GL}$, expressed in terms of the solution of an obstacle problem, which is a particular type of free-boundary problem. As far as we know, no explicit formula for $\mu^{\rm GL}$ exists for generic domains $ \D $: it is constant in a subdomain of the sample $\D$ and zero in the rest of the domain but the boundary between the two regions is not known explicitly, although much can be proved about it (see \cite[Chapter 7]{SS2} and references therein). In the special case of a spherically symmetric domain, one can show by uniqueness of the minimizer that $ \mu^{\rm GL} $ is in fact radial and therefore almost explicit.

Using a method similar to that we used for the proof of Theorem \ref{theo:renorm energy}, one can show the following: for any $\nu$ such that $\mu^{\rm GL}+\nu$ is in the energy space corresponding to \eqref{eq:GL ener},
\begin{equation}\label{eq:GL stability}
 \I^{\rm GL} \left[\mu^{\rm GL}+\nu\right] \geq \I^{\rm GL} \left[\mu^{\rm GL}\right] + \frac{1}{2} \int_{\D} \left(\left|\nabla h_{\nu} \right| ^2 + |h_{\nu}-1| ^2 \right),
\end{equation}
which is the equivalent of the stability estimate \eqref{eq:stability I}.
A convenient way of seeing that \eqref{eq:GL stability} holds true is the change of variables $\mu \to \mu-1$, $h_{\mu} \to h_{\mu}-1$, which yields as equivalent problem the minimization of the functional
\bdm
 \tilde{\I}[\mu] = \frac{1}{2} \int_{\D} \left(\frac{|\mu+1|}{\lambda}  + \left|\nabla h_{\mu} \right| ^2 + |h_{\mu}| ^2 \right) = \frac{1}{2}\int_{\D} \left(\frac{|\mu+1|}{\lambda}  + h_{\mu} \mu \right),
\edm
where now
\[
 \begin{cases}
  -\Delta h_{\mu} + h_{\mu} = \mu,	&	\mbox{in } \D, \\
 h_{\mu} = 0,				&	\mbox{on } \dd \D.
 \end{cases}
\]
This form has the advantage of being closer to \eqref{eq:defiInu} and facilitating the proof of \eqref{eq:GL stability}, but the physical interpretation of the problem is more transparent in \eqref{eq:GL ener}.
\end{rem}

Now we apply the result proven above to the functional $ \tfIc $ defined in \eqref{ren energy}. In this case both $ F $ and $ \hrho $ can be explicitly computed:
	\beq
		\label{explicit F}
		\tfpot(r) = - \Omega_0 \int_r^{\rtf} \diff t \: t \: \tfm(t) = - \frac{1}{4} \Omega_0 \lf[ \rtf^s \lf( \rtf^2 - r^2 \ri) - \frac{2}{s+2} \lf( \rtf^{s+2} - r^{s+2} \ri) \ri],
	\eeq
	\bml{
		\label{explicit H}
		\tfH(r) = \half \tfm(r) + \tfpot(r)	\\
		 = \frac{1}{4} \lf( \rtf^s - r^s \ri) - \frac{1}{4} \Omega_0 \lf[  \rtf^s \lf( \rtf^2 - r^2 \ri) - \frac{2}{s+2} \lf( \rtf^{s+2} - r^{s+2} \ri) \ri].
	}
	There is a certain freedom in the choice of the integration domain $ \D $ (see Remark \ref{sec:limit problem}.\ref{rem:domain}) in $ \tfIc[\nu] $ but for clarity we set $ \D = \B(\rtf) $. 

\begin{cor}[{\bf Minimization of \mbox{$ \tfIc[\nu] $}}]
	\label{min ren energy: cor}
	\mbox{}	\\
	Let $ \tfIc[\nu] $ be the renormalized energy defined in \eqref{ren energy} with $ \tfm $ given by \eqref{tfm}. 
	There exists a unique minimizer $ \musta $ in the class of measures \eqref{measure class}, which
	is radial and absolutely continuous w.r.t. the Lebesgue measure, i.e., there exists a continuous radial function $ m_{\star}(r) $ such that 
	\begin{equation}
	\musta = \left[m_{\star}(r)\right]_+ \one_{\lf\{ \tfH \leq 0 \ri\}} \: \diff \rv  
	\end{equation}
 and
	\beq
		\label{ms}
		\ms(r) =  \lf[ \half \partial_r ^2 \log \lf( \tfm(r) \ri) + 2\Omega_0 \ri].
	\eeq
	If in addition  $ \Omega_0 > \Omega_1 $, the support of $ \musta $ satisfies
	\beq
		\label{rrs}
		\emptyset \neq \supp(\musta) = \B(\rrs) \subsetneq \B(\rtf),		
	\eeq
	for some $ 0 < \rrs < \rtf $ such that
	\begin{equation}\label{eq:Rstar to Rtf}
	 \rrs \tfr
	\end{equation}
when $\Om_0 \to \infty$.
\end{cor}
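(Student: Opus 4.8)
The plan is to deduce Corollary \ref{min ren energy: cor} directly from Theorem \ref{theo:renorm energy} applied with the choices $\D = \B(\rtf)$, $\rho = \tfm$ and $F = \tfpot$, and then to make the abstract formula \eqref{eq:defi musta} explicit using the fact that the data are radial. First one has to check that the hypotheses of Theorem \ref{theo:renorm energy} are met on $\D$: the regularity $\tfm \in C^2(\D)$, $\tfpot \in C^2(\D)$ is clear from \eqref{tfm} and \eqref{explicit F} (note $\tfm$ is smooth and strictly positive in the open ball $\B(\rtf)$, vanishing only on the boundary), and the sign condition \eqref{eq:hypo signe}, i.e. $\half \tfm - \tfpot > 0$ in $\D$, holds because $\tfm \geq 0$ and $\tfpot < 0$ strictly inside $\B(\rtf)$ (by \eqref{eq:intro pot}, $\tfpot(r) = -\Omega_0\int_r^{\rtf} t\,\tfm(t)\,\diff t < 0$ for $r < \rtf$ since $\Omega_0 > 0$ and $\tfm > 0$ on $(r,\rtf)$). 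Theorem \ref{theo:renorm energy} then immediately gives existence and uniqueness of the minimizer $\musta$ in $\M_{\tfm}(\D)$ together with the formula $\musta = \bigl[\nabla\bigl(\tfrac{1}{\tfm}\nabla\tfH\bigr)\bigr]_+ \one_{\{\tfH \leq 0\}}$, with $\tfH = \half\tfm + \tfpot$ as in \eqref{explicit H}.

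Next I would compute $\nabla\bigl(\tfrac{1}{\tfm}\nabla\tfH\bigr)$ explicitly. Since all functions are radial, writing $\nabla = \partial_r$ (plus the $1/r$ term from the two-dimensional divergence) I use the identity $\nabla\bigl(\tfrac{1}{\tfm}\nabla\tfH\bigr) = \tfrac{1}{\tfm}\Delta\tfH - \tfrac{1}{\tfm^2}\nabla\tfm\cdot\nabla\tfH$. From $\tfH = \half\tfm + \tfpot$ and the fact that $\tfpot$ satisfies $-\nabla(\tfrac{1}{\tfm}\nabla\tfpot) = 2\Omega_0$ — which follows by differentiating \eqref{eq:intro pot}, since $\partial_r \tfpot = \tfrac{\Omega_0}{|\log\ep|}\cdot|\log\ep|\, r\,\tfm(r)$ gives $\tfrac{1}{\tfm}\partial_r\tfpot = \Omega_0 r$ and hence $\tfrac1r\partial_r(r\cdot\Omega_0 r) = 2\Omega_0$ — one gets $\nabla(\tfrac{1}{\tfm}\nabla\tfH) = \half\,\nabla(\tfrac{1}{\tfm}\nabla\tfm) + 2\Omega_0$. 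A short computation using $\nabla(\tfrac{1}{\tfm}\nabla\tfm) = \Delta\log\tfm$ (valid wherever $\tfm > 0$) reduces this, in the radial variable, to $\half\partial_r^2\log\tfm(r) + 2\Omega_0$ up to lower-order radial terms; I would carry out this calculation carefully with $\tfm(r) = \half(\tfchem - r^s)$ on $\B(\rtf)$ to confirm it yields exactly \eqref{ms}, i.e. $\ms(r) = \half\partial_r^2\log\tfm(r) + 2\Omega_0$, and that $\ms$ extends to a continuous function on the relevant interval. This establishes that $\musta = [\ms(r)]_+\one_{\{\tfH\leq 0\}}\,\diff\rv$ is absolutely continuous with continuous radial density, as claimed.

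Finally I would analyze the support. The support is $\{\ms(r) \geq 0\}\cap\{\tfH(r)\leq 0\}$. One checks from \eqref{ms}, using $\tfm(r) = \half(\tfchem - r^s)$, that $\ms$ is decreasing in $r$ on $[0,\rtf)$ (the term $\partial_r^2\log\tfm$ is negative and decreasing), so $\{\ms \geq 0\}$ is an interval $[0, \rho_1]$; similarly $\tfH(0) < 0$ precisely when $\Omega_0 > \Omega_1$ (this is the content of \eqref{eq:first critical speed}, which I may invoke), and $\tfH$ is continuous with $\tfH(\rtf) = 0$, so $\{\tfH \leq 0\}$ is an interval $[0,\rho_2]$ with $\rho_2 > 0$ when $\Omega_0 > \Omega_1$; moreover $\rho_2 < \rtf$ requires checking $\tfH$ changes sign strictly inside — which holds because $\tfH(\rtf) = 0$ and $\partial_r\tfH(\rtf) = \half\partial_r\tfm(\rtf) + \partial_r\tfpot(\rtf) = \half\partial_r\tfm(\rtf) < 0$, so $\tfH > 0$ just below $\rtf$. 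Setting $\rrs := \min(\rho_1,\rho_2)$, which is strictly between $0$ and $\rtf$, gives \eqref{rrs}. For the asymptotics \eqref{eq:Rstar to Rtf}, I would note that as $\Omega_0\to\infty$ the set $\{\tfH \leq 0\}$ expands: $\tfH(r) = \half\tfm(r) + \tfpot(r)$ and $\tfpot(r) \to -\infty$ like $-\Omega_0$ uniformly on compacts of $\B(\rtf)$, while the constraint $\{\ms \geq 0\}$ also expands since $\ms = \half\partial_r^2\log\tfm + 2\Omega_0$ and the first term is fixed; a quantitative estimate of how close $\rrs$ comes to $\rtf$ follows by balancing the fixed terms against $\Omega_0$ near the boundary, using $|\rtf - \rrs| = \OO(\Omega_0^{-1})$ type bounds as in \eqref{eq:intro Rc}. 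I expect the main obstacle to be the explicit radial computation of $\nabla(\tfrac{1}{\tfm}\nabla\tfH)$ near $r = 0$ and $r = \rtf$ — ensuring the formula \eqref{ms} is exact rather than merely leading-order, and that the continuity and monotonicity claims for $\ms$ and $\tfH$ hold globally on $[0,\rtf)$ — together with pinning down the sharp sign-change locations that determine $\rrs$.
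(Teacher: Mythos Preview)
Your proposal is correct and follows essentially the same route as the paper: apply Theorem \ref{theo:renorm energy} with $\rho=\tfm$, $F=\tfpot$, compute $\nabla(\tfm^{-1}\nabla\tfH)$ via the identity $\nabla(\tfm^{-1}\nabla\tfpot)=2\Omega_0$, then analyze the sign of $\tfH$ and of $\ms$ separately to obtain $\rrs=\min(R_1,R_2)$ and its limit. The one step you correctly flag as an obstacle---that $\{\tfH\le 0\}$ is a \emph{single} interval $[0,R_1]$ rather than several components---does not follow from $\tfH(0)<0$, $\tfH(\rtf)=0$ and continuity alone; the paper handles this exactly as you anticipate, by computing $\tfH'(r)=\tfrac12 r(-sr^{s-2}+\Omega_0\rtf^s-\Omega_0 r^s)$ and showing it has a unique interior zero, so $\tfH$ has a single maximum on $[0,\rtf]$.
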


\begin{rem}({\it Domain of $ \tfIc[\nu] $})\label{rem:domain}
	\mbox{}	\\
	A simple inspection of the proof of the above results shows that the minimization and thus the ground state energy and minimizer are to some extent independent of the domain $ \D $ in the definition of $ \tfIc[\nu] $. More precisely one obtains the same minimizing $ \musta $ and the same minimum energy for any $ \D $ such that
	\beq
		\label{int domain}
		\B(\rrs) \subset \D \subset \B(\rtf).
	\eeq
	Indeed $ \musta $ is a minimizer as long as $ \supp(\musta) \subset \D $, so that \eqref{rrs} yields the condition \eqref{int domain}.
\end{rem}

\begin{proof}
	In order to apply Theorem \ref{theo:renorm energy}, we can take any $ \D \subset \B(\rtf) $, since by construction $ F(r) < 0 $ for any $ r < \rtf $ and $ \tfpot(\rtf) = 0 $. Moreover $ \tfm $ and $ \tfpot $ clearly satisfy all the requirements of Theorem \ref{theo:renorm energy} and the first part of the statement is thus proven.

	The explicit expression of $ \musta $ is provided by \eqref{eq:defi musta} and since both $ \tfm $ and $ \tfH $ are radial, $ \musta $ must be so as well. A simple computation yields $ \ms $, which is a well-defined continuous function for any $ r < \rtf $. The condition $ \tfH \leq 0 $ guarantees that the support of $ \musta $ is strictly contained in $ \B(\rtf) $: Indeed one has $ \tfH(\rtf) = 0 $ but
	\bdm
		\tfHp(\rtf) = \half {\tfm}^{\prime}(\rtf) = - \frac{1}{4} s \rtf^{s-1} < 0,
	\edm
	which implies that there exists some $ R_1 < \rtf $, such that $ \tfH(R_1) = 0 $ and $ \tfH(r) > 0 $ for any $ R_1 < r < \rtf $. The fact that $ R_1 > 0 $ follows from the analysis of $ \tfH $ at the origin: using \eqref{tfchem} and \eqref{eq:first critical speed}, we compute
	\beq
		\tfH(0) = \frac{1}{4} \tfchem - \frac{1}{2\pi} \Omega_0 =  \frac{1}{2\pi} \lf( \Omega_1 - \Omega_0 \ri) < 0,
	\eeq
	by assumption. Moreover a simple analysis of the derivative of $\tfH $, i.e.,
	\bdm
		\tfHp(r) = \half r \lf( - s r^{s-2} + \Omega_0 \rtf^s - \Omega_0 r^s \ri),
	\edm
	shows that $ \tfHp(r) = 0 $ for $ r = 0, \bar{R} $ for some $\bar{R}>0$. It is positive between $ 0 $ and $ \bar{R} $ and negative for $ \bar{R} < r \leq \rtf $. In particular this, together with the discussion above, implies that $ \tfH $ has a unique maximum in $ [0,\rtf] $ and, since $ \tfH $ is negative at the origin and vanishes at the outer boundary, one can conclude that
	\beq
		\tfH(R_1) = 0,	\qquad	\tfH(r) < 0,	\quad	\mbox{for any } 0 \leq r < R_1.
	\eeq
	To complete the proof it suffices therefore to exploit the explicit expression of $ \ms $, i.e.,
	\beq
		\ms(r) = - \frac{r^{s-2} \lf( r^s + s(s-1) \rtf^s \ri)}{8 \tfm(r)^2} + 2 \Omega_0,
	\eeq
	which easily implies that $ \ms $ is decreasing since the first term on the r.h.s. of the above expression is. Moreover
	\beq
		\ms(0) = \Omega_0 > 0,	\qquad 	\ms(r) \xrightarrow[r \to \rtf]{} - \infty,
	\eeq
	so that the subregion where $ \ms $ is positive is another ball of radius $ R_2 < \rtf $ and 
	\beq
		\rrs = \min[R_1,R_2].
	\eeq 
	To see that $\rrs \to \rtf$ in the limit $\Omega_0 \to \infty$, it is sufficient to note that the maximum point $ \bar{R} $ of $  \tfH $ converges to $ \rtf $ as $ \Omega_0 \to \infty $, which implies that $ \sup_{0 \leq r \leq \rtf} \tfH \to 0 $ in the same limit. Hence
	\beq
		\label{R1 conv}
		R_1 \xrightarrow[\Omega_0 \to \infty]{} \rtf
	\eeq 
	and an analogous statement holds true for $ R_2 $, i.e.,
	\beq
			\label{R2 conv}
		R_2 \xrightarrow[\Omega_0 \to \infty]{} \rtf
	\eeq
	 as it immediately follows by noticing the first terms in  \eqref{ms} is independent of $\Omega_0$, while the second is linear in $ \Omega_0 $.
\end{proof}

A simple computation yields the equations satisfied by $ R_1 $ and $ R_2 $: setting $ r_i : = R_i/\rtf $, $ i = 1,2 $ (so that $ 0 < r_i < 1 $), one has
\beq
	\label{R1 and R2}
	\frac{1- r_1}{1 - r_1^{2/s} - \frac{2}{s+2} \lf( 1 - r_1^{(s+2)/s} \ri)} = \Omega_0 \rtf^2,	\qquad	\half r_2^{(s-2)/s} \frac{s(s-1) + r_2}{(1 - r_2)^2} = \Omega_0 \rtf^2,
\eeq
which allows to obtain a relation between $ r_1 $ and $ r_2 $ independent of $ \Omega_0 $. The computation of $ r_1 $ or $ r_2 $ or their ratio can not be made explicitly but the above equations could be tested numerically to study the dependence of the support of $ \musta $ on the parameter $ s $, that is whether $ \rrs $ is given by $ R_1 $ or $ R_2 $. In the harmonic case $ s = 2 $ however both equations  \eqref{R1 and R2} are solvable and one obtains
\beq
		r_1 = \frac{R_1}{\rtf} = \sqrt{1 -  \frac{2}{\Omega_0 \rtf^2}},	\qquad	r_2 = \frac{R_2}{\rtf} = \sqrt{1 -  \frac{1}{4\Omega_0 \rtf^2} \lf( \sqrt{1 + \frac{3}{2 \Omega_0 \rtf^2}} - 1 \ri)}.
\eeq
Note that, by taking the limit $ \Omega_0 \to \infty $ of the above expressions, one easily recovers \eqref{eq:Rstar to Rtf}.

\section{Energy Upper Bound}
\label{sec:upper bound}

The main result of this section is stated in the following Proposition \ref{upper bound: pro} and is the proof of an appropriate upper bound for the GP ground state energy. 

The result is obtained by testing the GP functional on some explicit trial function $ \trial $. The most difficult part is the construction of such a function because of several requirements it has to fulfill in order that its energy is suitably close to the ground state energy. There are indeed two main constraints: On the one hand the modulus of $ \trial $ must be approximately equal to $ g $, in order to recover the leading order contribution $ \hgpe $, but, on the other hand, its phase has to contain a large number (of order $ \OO(|\log\eps|)  $) of vortices of unit degree, distributed according to the minimizing measure $ \musta $ given by \eqref{eq:defi musta} on a scale $ |\log\eps| $. In addition $ \trial $ must belong to $ \gpdom $, which in particular implies that it must be normalized in $ L^2(\R^2) $. The rest of the proof is just the evaluation of $ \gpf $ on $ \trial $.

\begin{pro}[{\bf GP energy upper bound}]
	\label{upper bound: pro}
	\mbox{}	\\
	If $ \Omega = \Omega_0 |\log\eps| $, with $ \Omega_0 > \Omega_1 $ as $ \eps \to 0 $, then
	\beq
		\gpe \leq \hgpe + \tfI |\log\eps|^2 \left( 1+ \OO\left(\frac{\log|\log\eps|}{|\log \ep| ^{1/2}}\right)\right).
	\eeq
\end{pro}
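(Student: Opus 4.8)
The plan is to construct an explicit trial state $\trial = g\, v\, e^{i\phitrial}$ and carefully estimate its GP energy. First I would discretize the limiting vorticity measure $\musta$: since $\musta = \ms(r)\one_{\{\tfH\le 0\}}\,\diff\rv$ with $\ms$ continuous on $\supp(\musta)=\B(\rrs)$, I would tile the plane with squares of side length $\sim |\log\ep|^{-1/2}$ (the natural length scale on which $\OO(|\log\ep|)$ unit vortices are distributed over an $\OO(1)$ region) and place in each cell $\celli$ contained in $\B(\rrs)$ a single vortex of degree $+1$ at a point $\avi$ chosen so that $2\pi|\log\ep|^{-1}\sum_i \delta_{\avi}$ is a good Riemann approximation of $\musta$ in the dual norm $\Vert\cdot\Vert_{\tfm}$; the number of vortices is $N_\eps = |\log\ep|\,\musta(\B(\rrs))/(2\pi) + o(|\log\ep|) = \OO(|\log\ep|)$. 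The phase $\phitrial$ is then defined so that $\curl \nabla\phitrial = 2\pi\sum_i\delta_{\avi}$, i.e., $\phitrial(z) = \sum_i \arg(z - \avik)$ in complex notation, and the modulus factor $v$ is chosen radial around each $\avi$, vanishing on a core of radius $\sim\ep$ and rising to $1$ outside a ball of radius $\sim N_\eps^{-1/2}\sim|\log\ep|^{-1/2}$, with $v\equiv 1$ away from all cores.

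Next I would run the energy decoupling \eqref{eq:sk decouple}: since $\trial = g v e^{i\phitrial}$ and $g$ is the minimizer of $\hgpf$, the cross terms vanish and
\[
\gpf[\trial] = \hgpe + \int_{\R^2}\diff\rv\,\Big\{\tfrac12 g^2|\nabla(v e^{i\phitrial})|^2 - g^2\Omega\,\rvp\cdot(i v e^{i\phitrial},\nabla(v e^{i\phitrial})) + \tfrac{g^4}{\ep^2}(1-v^2)^2\Big\},
\]
up to the correction needed for the $L^2$ normalization of $\trial$ (which costs $\OO(N_\eps\ep^2)=o(1)$ in mass and hence a negligible energy shift after renormalizing). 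Using $g^2\approx\tfm$ (proved in the Appendix, with controlled error) I would replace $g^2$ by $\tfm$ everywhere, tracking the resulting remainders. The kinetic term splits: near each core the contribution is $\pi\tfm(\avi)|\log\ep|(1+o(1))$ by the standard degree-one vortex computation (this is the cost \eqref{eq:sk vballs}/\eqref{eq:intro vcost}), summing to $\pi|\log\ep|\sum_i\tfm(\avi)\approx \half|\log\ep|^2\int\tfm\musta$; outside the cores $v\approx 1$ so the kinetic energy is $\approx\int \half\tfm|\nabla\phitrial|^2$, which via the Green/electrostatic representation of $\nabla^\perp h_{\musta}$ — exactly as in \cite{AAB,ABM,SS2,R} — equals $\half|\log\ep|^2\int_{\tfd}\frac{1}{\tfm}|\nabla h_{\musta}|^2$ plus a self-interaction piece already counted in the core term plus lower order. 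The rotation term, integrated by parts as in \eqref{eq:sk ipp}, gives $|\log\ep|^2\int_{\tfd}\tfpot\musta$ up to small errors controlled by the quality of the Riemann sum. The quartic term contributes $\OO(N_\eps \ep^2/\ep^2 \cdot \ep^2)=o(|\log\ep|^2)$. Collecting, $\gpf[\trial] = \hgpe + |\log\ep|^2\tfIc[\musta] + (\text{remainders}) = \hgpe + \tfI|\log\ep|^2 + \OO(|\log\ep|^{3/2}\log|\log\ep|)$, which is the claim.

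The main obstacle is controlling the interaction (off-diagonal) part of $\int\frac12\tfm|\nabla\phitrial|^2$ with a remainder small enough to yield the stated $\OO(\log|\log\ep|/|\log\ep|^{1/2})$ relative error. One must show that replacing the sum of Dirac masses $2\pi|\log\ep|^{-1}\sum_i\delta_{\avi}$ by $\musta$ inside the quadratic form $\int\frac{1}{\tfm}|\nabla h_\cdot|^2$ costs only $o(|\log\ep|^2)$; this requires the mesh size $\sim|\log\ep|^{-1/2}$ to be balanced against the logarithmic divergence of the Green function at coincident points (handled by the $\ep$-core cutoff, contributing the $\log|\log\ep|$) and against the Lipschitz modulus of $\ms$ and of $1/\tfm$ on $\B(\rrs)$ — note $1/\tfm$ is bounded on $\B(\rrs)$ since $\rrs<\rtf$, which is why the construction is done on $\B(\rrs)$ rather than all of $\tfd$. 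A secondary technical point is the behavior near $\dd\B(\rrs)$ and the matching of $\phitrial$ to a boundary-respecting current (so that $h_{\musta}$ satisfies the Dirichlet condition of \eqref{eq ren energy}); this is dealt with by a harmonic extension/boundary-layer argument as in \cite{R}, contributing only to lower-order remainders. Everything else — the single-vortex core energy, the quartic and normalization estimates, the passage $g^2\to\tfm$ — is routine given the Appendix and the cited literature.
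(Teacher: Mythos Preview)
Your overall strategy (decoupling, vortex lattice at spacing $|\log\ep|^{-1/2}$, Riemann approximation of $\musta$, Green function bookkeeping) matches the paper's. The gap is in the choice of phase. You set $\phitrial(z)=\sum_i\arg(z-\avi)$, so that $\nabla\phitrial=-\nablap h_0$ with $-\Delta h_0=2\pi\sum_i\delta_{\avi}$. The kinetic energy outside the cores is then $\tfrac12\int g^2|\nabla\phitrial|^2\approx\tfrac12\int\tfm|\nabla h_0|^2$. But the first term of $\tfIc$ is $\tfrac12\int\frac{1}{\tfm}|\nabla h_\nu|^2$ where $h_\nu$ solves the \emph{weighted} equation $-\nabla(\tfm^{-1}\nabla h_\nu)=\nu$. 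These two quadratic forms coincide only when $\tfm$ is constant; for general $\tfm$ one has the strict inequality
\[
\int\tfm|\nabla h_0|^2\;>\;\int\frac{1}{\tfm}|\nabla h_\nu|^2,
\]
the gap being the $\int\tfm|\nabla f|^2$ left over after decomposing the current as $\nabla\phitrial=-\tfm^{-1}\nablap h_\nu+\nabla f$. Since $\tfm$ varies by an $\OO(1)$ factor across $\B(\rrs)$, this excess is itself $\OO(|\log\ep|^2)$, i.e.\ of the same order as $\tfI|\log\ep|^2$. Your trial state therefore produces an upper bound $\hgpe+C|\log\ep|^2$ with $C$ strictly larger than $\tfI$, and the proposition does not follow.

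The fix, and what the paper does, is to define the phase through the weighted potential: solve $-\nabla(\tfm^{-1}\nabla h_{\nutrial})=\nutrial$ (with $\nutrial$ a mollification of $2\pi\sum_i\delta_{\avi}$) and set $\nabla\phitrial:=-\tfm^{-1}\nablap h_{\nutrial}$. This is still a legitimate phase gradient because $\curl(\tfm^{-1}\nablap h_{\nutrial})=\nutrial$, and it gives exactly $\int g^2|\nabla\phitrial|^2\approx\int\tfm^{-1}|\nabla h_{\nutrial}|^2$, which is the quantity appearing in $\tfIc$. The references you cite (\cite{AAB,R}) in fact use precisely this weighted construction, not the naive $\arg$ phase. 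A smaller point: letting the modulus $v$ rise to $1$ only at radius $|\log\ep|^{-1/2}$ rather than $\OO(\ep)$ makes the quartic term $\ep^{-2}\int g^4(1-v^2)^2$ dangerously large; the paper (and the standard construction) takes the cutoff over an annulus of width $\OO(\ep)$.
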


\subsection{The Trial Function}

	As we have anticipated at the beginning of the section, the most difficult part of the proof is the identification of the proper trial function to use in the estimate of the ground state energy. In order to simplify the analysis however we first extract from $ \gpe $ the leading order term, i.e., the energy $ \hgpe $. Like in the lower bound, this can be done by a splitting trick, i.e., setting $ \gpm =: g v $, where $ g $ is the vortex-free profile given by the minimization \eqref{hgpdom} and $ v $ some locally bounded function. Exploiting the variational equation for $ g $, one obtains the identity
	\beq
		\label{reduced energy}
		\gpe = \hgpe + \E[v],	\qquad	\E[v] : = \int_{\R^2} \diff \rv \: \lf\{ \half g^2 \lf| \nabla v \ri|^2 - \Omega g^2 \rv^{\perp} \cdot \lf(i v, \nabla v\ri) + \eps^{-2} g^4 \lf(1 - |v|^2 \ri)^2 \ri\},
	\eeq
	where we recall that
	\bdm
		\rv^{\perp} : = (-y , x),	\qquad	\lf(i v, \nabla v\ri) : = \frac{i}{2} \lf(v \nabla v^* - v^* \nabla v \ri).
	\edm

	It is clear that in order to obtain an upper bound for $ \gpe $, it is sufficient to provide a suitable trial function $ \vtrial $ to evaluate $ \E[v] $. Note that $ v $ inherits an $L^2$-normalization from $ \gpm $: 
	\bdm
		\int_{\R^2} \diff \rv \: g^2 |v|^2 = 1,
	\edm
	and the same must apply to $ \vtrial $. We thus set
	\beq
		\label{vtrial}
		\vtrial(\rv) : = \ceps \: \xi(\rv) \exp \lf\{ i \phitrial \ri\},
	\eeq
	where $ \xi $ is a cut-off function, $ \phitrial $ is a phase factor we are going to define and $ \ceps $ is a normalization constant. 
		
	The phase factor of $ \vtrial $ must contain a number $ \OO(|\log\eps|) $ of vortices of unit degree distributed according to the measure $ \musta|\log\eps| $. For further convenience we start by defining the phase inside a ball $ \B(\rsm) $ of radius
	\beq
		\label{rsm}
		\rsm : = \rtf - \eps^{2/3} |\log\eps|^{2/3},
	\eeq
	which is strictly contained inside the TF support $ \B(\rtf) $ and has the crucial property that inside $ \B(\rsm) $ the pointwise estimate \eqref{g point est} applies and $ g^2 \approx \tfm $: There we 
	define $ \phitrial $ as the solution of the equation
	\beq
		\label{phi equation 1}
		\nabla \phitrial = - \tfm^{-1} \nablap h_{\nutrial},	\quad	\mbox{for } \rv \in \B(\rsm)\setminus \bigcup_{i=1}^{N_{\eps}} \B(\ai,\eps),	
	\eeq
	where the potential $ h_{\nutrial} $ solves the differential equation (see below for further details on the existence and properties of such a solution)
	\beq
		\label{eq nutrial}
		\begin{cases}
			- \nabla \lf( \tfm^{-1} \nabla h_{\nutrial} \ri) = \nutrial,		&	\mbox{in } \B(\rsm),	\\
			- \Delta h_{\nutrial} = 0,							&	\mbox{in } \B(\rbl) \setminus \B(\rsm),	\\
			h_{\nutrial} = 0,							&	\mbox{on } \partial \B(\rbl),
		\end{cases}
	\eeq
	for some $ \rbl > \rtf $ given below.
	Note that we modify the equation to $ \Delta h_{\nutrial} = 0$ in the boundary layer $\B(\rbl) \setminus \B(\rsm)$. This is to avoid problems due to the vanishing of $\tfm$ on $\dd \B(\rtf)$. Other strategies are possible, see, e.g., \cite[Section 2]{R}.
	The outer radius $ \rbl $ is larger but close enough to $ \rtf $: for technical reasons which will be clearer later and are mostly related to the estimate of the interaction term and the exponential decay \eqref{g exp decay}, we pick
	\beq
		\rbl : = \rtf + \eps^{2/3} |\log\eps|^{4/3}.
	\eeq
	The function $ \nutrial $ is a smooth approximation of a measure given by the sum of Dirac masses at the points $ \ai $, $ i = 1, \ldots, \neps $, e.g.,
	\beq
		\label{nutrial}
		\nutrial : = \frac{2}{\eps^2} \sum_{i = 1}^{\neps} \one_{\B(\ai,\eps)},
	\eeq
	i.e., $ \ai $ are the positions of vortices and $ \neps $ their total number. We have also implicitly assumed that $ \inf |\ai - \mathbf{a}_j| > 2 \eps $ and $ a_i \leq \rsm - \eps $ (see below for further details on the distribution of points). The crucial property of $ \phitrial $ is that, given any contour $ \mathcal{C} \subset \B(\rsm) $ encircling only one ball $ \B(\ai,\eps) $, one has
	\bml{
 		\label{phase pro 1}
		\oint_{\mathcal{C}} \diff \sigma \: \partial_{\tau} \phitrial = - \int_{\partial \B(\ai,\eps)} \diff \sigma \: \partial_{\tau} \phitrial =  \int_{\partial \B(\ai,\eps)} \diff \sigma \: \tfm^{-1} \partial_{n} h_{\nutrial} 	\\
		= - \int_{\B(\ai,\eps)} \diff \rv \: \nabla \lf( \tfm^{-1} \nabla h_{\nutrial} \ri) = \int_{\B(\ai,\eps)}  \nutrial = 2 \pi,
	}
	since the integral of $ \curl (\nabla \phitrial) $ over the area delimited by $ \mathcal{C} $ and $ \partial \B(\ai,\eps) $ vanishes thanks to \eqref{eq nutrial}. Hence $ \phitrial $ is a well-defined phase factor for any $ \rv \in \B(\rsm) \setminus \cup \B(\ai,\eps) $, whereas inside the vortex balls $ \phitrial $ is not the phase of any function and we will have to use a cut-off function $ \xi $ to exclude that region. We need however to define the trial function and therefore its phase up to $ \rbl $ because otherwise the last term in $ \E[v] $ (see \eqref{reduced energy}) would give a too large contribution. For the phase in the boundary layer we set
	\beq
		\label{phi equation 2}
		\nabla \phitrial = - \tfm^{-1}(\rsm) \nablap h_{\nutrial},	\quad		\mbox{for } \rv \in \B(\rbl) \setminus \B(\rsm). 
	\eeq
	In fact since the boundary layer $ \B(\rbl) \setminus \B(\rsm) $ is vortex-free by construction, it is not difficult to realize that the differential equation \eqref{eq nutrial} can be solved explicitly there and
	\bdm
		h_{\nutrial} = c \log\lf(r/\rbl\ri),
	\edm
	where the coefficient $ c $ is fixed by imposing continuity of $ \partial_n h_{\nutrial} $ on $ \partial \B(\rsm) $. On the other hand if we integrate \eqref{eq nutrial} in $ \B(\rsm) $ we obtain
	\beq
		- \int_{\B(\rsm)} \diff \rv \: \nabla \lf( \tfm^{-1} \nabla h_{\nutrial} \ri) = -  \tfm^{-1}(\rsm) \int_{\partial \B(\rsm)} \diff \sigma \:\partial_n h_{\nutrial} = 2\pi \neps,
	\eeq
	which combined with the explicit expression of $ h_{\nutrial} $ for $ r \geq \rsm $ yields
	\beq
		\label{hnutrial outside}
		h_{\nutrial}(r) = - \neps \tfm(\rsm) \log\lf(r/\rbl\ri),	\quad	\mbox{for } \rv \in \B(\rbl) \setminus \B(\rsm),
	\eeq
	and therefore
	\beq
		\label{phi equation 3}
		\nabla \phitrial = \neps \frac{\rv^{\perp}}{r^2},	\quad	\mbox{for } \rv \in \B(\rbl) \setminus \B(\rsm). 
	\eeq
	One can verify that this extension defines a phase in $ \B(\rbl) \setminus \cup \B(\ai,\eps) $: the property \eqref{phase pro 1} is still valid and, for any contour $ \mathcal{C} \subset \B(\rbl) $ containing all the vortex balls $ \B(\ai,\eps) $, one has
	\beq
		\label{phase pro 2}
		\oint_{\mathcal{C}} \diff \sigma \: \partial_{\tau} \phitrial = \int_{\partial \B(\rsm)} \diff \sigma \: \partial_{\tau} \phitrial = 2 \pi \neps,
	\eeq
	thanks to \eqref{phi equation 3} and the fact that the r.h.s. of \eqref{phi equation 1} or \eqref{phi equation 2} is irrotational outside $ \mathcal{C} $.

	We can now turn to the cut-off function $ \xi $. Its role is twofold: on the one hand it is needed to restrict the integration domain in $ \E[v] $ to $ \B(\rbl) $ and, on the other, regularizes $ \phitrial $ inside the vortex balls to ensure that $ \vtrial $ is a one-valued function. Concretely we set
	\beq
		\label{cutoff xi}
		\xi(\rv) : = \xi_{0}(r) \prod_{i = 1}^{\neps} \xivor(|\rv - \ai|),
	\eeq
	with $ \xi_0 $ and $ \xivor $ smooth functions satisfying
	\beq
		\xi_0(r) = 
		\begin{cases}
			0,	&	\mbox{for } r \geq \rbl,	\\
			1,	&	\mbox{for } r \leq \rbl - \eps^{2/3} |\log\eps|^{-1},
		\end{cases}	
		\hspace{1cm}
		\xivor(r) = 
		\begin{cases}
			0,	&	\mbox{for } r \leq \eps,	\\
			1,	&	\mbox{for } r \geq 2\eps,
		\end{cases}
	\eeq 
	and such that their gradients satisfy the conditions
	\beq
		\label{est xi gradients}
		\lf\| \nabla \xi_0 \ri\|_{\infty} \leq \OO(\eps^{-2/3}|\log\eps|^{1/3}),	\hspace{1cm}	 \lf\| \nabla \xivor \ri\|_{\infty} \leq \OO(\eps^{-1}).
	\eeq
	
 	With these definitions we can estimate the normalization constant without saying anything more about the distribution of points but only assuming that $ \neps = \OO(|\log\eps|) $: since the cut-off function satisfies $ \xi \leq 1 $, one clearly has $ \ceps \geq 1 $ and
	\bmln{
		1 = \ceps^2 \int_{\R^2} \diff \rv \: g^2 \xi^2 \geq \ceps^2 \int_{\B(\rbl - \eps^{2/3}|\log\eps|^{-1}) \setminus \cup \B(\ai,2\eps)} \diff \rv \: g^2	\\
		\geq \ceps^2 \bigg[1 - \int_{\R^2 \setminus \B(\rbl - \eps^{2/3}|\log\eps|^{-1})} \diff \rv \: g^2  - \OO(\eps^2|\log\eps|) \bigg] \geq \ceps^2 \lf[ 1 - \OO(\eps^2 |\log\eps|) \ri],
	}
	thanks to boundedness and monotonicity of $ g $ and the exponential decay \eqref{g exp decay}, yielding
	\beq
		g^2(r) = \OO(\eps^{\infty}),	\quad	\mbox{uniformly in } r \geq \rtf + \OO(\eps^{2/3}|\log\eps|^{4/3}).
	\eeq 
	Therefore we conclude
	\beq
		\label{ceps est}
		1 \leq \ceps^2 \leq 1 +  \OO(\eps^{2} |\log\eps|).
	\eeq

	Finally we discuss the distribution of vortex points $ \ai $: the final goal is the reproduction of the density provided by the minimizing measure $ \musta |\log\eps| $. We proceed as follows: for any $ k \in \N $ larger than some given $ k_0 > 0 $, we set
	\beq
		\label{varrhok}
		\varrho_k : = \frac{k}{\sqrt{|\log\eps|}},
	\eeq
	and denote by $ \mathcal{C}_k $ and $ N_k $ the circle $ \partial \B(\varrho_k) $ and the number of equidistributed points we will put on $ \mathcal{C}_k $ respectively. Recalling \eqref{ms}, 
	we put uniformly distributed points on $ \mathcal{C}_k $ in such a way that the total number is (with $ \lfloor \: \cdot \: \rfloor $ standing for the integer part)
	\beq
		\label{nk}
		N_k = 
		\lf\lfloor 2 \pi \sqrt{|\log\eps|} \varrho_k \ms(\varrho_k) \ri\rfloor,
	\eeq
	provided $ \mathcal{C}_k \subset \supp(\musta) $.We denote by $ \kk \subset \N $ the set of integers $ k $ such that $ \mathcal{C}_k \subset \supp(\musta) $, so that one clearly has
	\beq
		\label{ub kk}	
		\sup_{k \in \kk} k \leq \rrs \sqrt{|\log\eps|},	\qquad	\sup_{k \in \kk} \lf| \mathbf{a}_{i_k} \ri| \leq \rrs,
	\eeq
	thanks to \eqref{rrs}. 
	The lower bound  $ k \geq k_0 $ is needed in order to ensure that on the smallest circle there are sufficiently many points (e.g., $ N_k \geq 4 $ for any $ k \geq k_0 $). 
	The point distribution naturally provides a decomposition of $ \mathcal{C}_k $ into sectors $ \vartheta_{i_k} \leq \vartheta \leq \vartheta_{i_k+1} $, $ i_k = 1, \ldots, N_k $, centered at the points $ \mathbf{a}_{i_k} $ and of equal arc length given by $ |\mathcal{C}_k|/N_k $, i.e., for $ k \in \kk $,
	\beq
		\label{Thetak}
		 \Theta_k : = \vartheta_{i_k+1} - \vartheta_{i_k} = 2\pi/N_k.
	\eeq
	The total number of points is of order $ \OO(|\log\eps|) $:
	\beq
		\label{total number}
		N_{\eps} = \sum_{k \in \kk} N_k = \OO(|\log\eps|).
	\eeq
	Moreover	
	$ \nutrial $ satisfies the required property, namely $ \nutrial \simeq |\log\eps| \musta $. More precisely 
	for any radial test function $ \Phi(r) \in C^1(\B(\rtf)) $,
	\bml{
 		\label{musta approx}
 		\int_{\B(\rtf)} \diff \rv \: \Phi(r) \nutrial = \lf(1 + \OO(\eps) \ri) \sum_{k \in \kk} \sum_{i_k = 1}^{N_k} \Phi\lf(a_{i_k}\ri)	 =  \lf(1 +\OO(\eps) \ri) \sum_{k \in \kk} N_k \Phi\lf(\varrho_k\ri) \\
		= 2 \pi \lf(1 +\OO(\eps) \ri) \sqrt{|\log\eps|} \sum_{k \in \kk}  \varrho_k \ms(\varrho_k) \Phi\lf(\varrho_k\ri) + \OO\lf(\sqrt{|\log\eps|}\ri)	\\
		=  |\log\eps| \int_{\supp(\musta)} \diff \rv \: \Phi(r)  \ms(r) + \OO(\sqrt{|\log\eps|})	=  |\log\eps| \int_{\R^2} \Phi \musta + \OO\lf(\sqrt{|\log\eps|}\ri),
	}
	where we have used the one-dimensional Riemann sum approximation, 
	\bdm
		\bigg| 2\pi |\log\eps|^{-1/2} \sum_{k \in \kk} \varrho_k f(\varrho_k) - \int_{\supp(\musta)} \diff \rv \: f(r) \bigg| \leq C \lf\|  f^{\prime}(r) \ri\|_{\infty} |\log\eps|^{-1/2},
	\edm
	 the fact that by hypothesis $ \lf\| \nabla \Phi \ri\|_{\infty} \leq C $ and the gradient estimate
	\beq
 		\lf\| \ms^{\prime}(r) \ri\|_{L^{\infty}(\supp(\musta))} \leq \half \sup_{r \leq \rrs} \lf| \partial^3_r \log \tfm \ri| = \half \sup_{r \leq \rrs} \tx \frac{\tfm^2 \tfm^{\prime\prime\prime} + 2 \lf( \tfm^{\prime} \ri)^3 - 3 \tfm \tfm^{\prime} \tfm^{\prime\prime}}{\tfm^3} \leq C,
	\eeq
	by \eqref{rrs}.

\subsection{Evaluation of the reduced energy $\E[\vtrial]$}

	Now we are going to estimate each term of the energy $ \E[\vtrial] $ separately. 

We start by considering the kinetic term: Noticing that one has
	\bdm
		\lf| \nabla \vtrial \ri|^2 = \ceps^2 \lf| \nabla \xi \ri|^2 + \ceps^2 \xi^2 \lf| \nabla \phitrial \ri|^2 \leq \lf(1 +  \OO(\eps^{2} |\log\eps|) \ri) \lf( \lf| \nabla \xi \ri|^2 +  \xi^2 \lf| \nabla \phitrial \ri|^2 \ri),
	\edm
	by \eqref{ceps est}, we obtain the bound
	\bml{
 		\label{kinetic ub 1}
 		\int_{\R^2} \diff \rv \: g^2 \lf|\nabla \vtrial \ri|^2 \leq \lf(1 + \OO(\eps^{2} |\log\eps|) \ri) \bigg[ \int_{\R^2} \diff \rv \: g^2 \lf| \nabla\xi \ri|^2 + \int_{\B(\rbl)\setminus\cup\B(\ai,\eps)} \diff \rv \: g^2 \lf| \nabla \phitrial \ri|^2 \bigg] 	\\
		\leq \lf(1 + \OO(\eps^{2} |\log\eps|)\ri) \int_{\B(\rsm)} \diff \rv \: g^{2} \tfm^{-2} \lf| \nabla h_{\nutrial} \ri|^2 + C \neps^2 \int_{\B(\rbl)\setminus\B(\rsm)} \diff \rv \: r^{-2} g^{2} +
 \OO(|\log\eps|)	\\
		 \leq \lf(1 + \OO(\eps^{2/3} |\log\eps|^{2/3}) \ri) \int_{\B(\rsm)} \diff \rv \: \tfm^{-1} \lf| \nabla h_{\nutrial} \ri|^2 + \OO(|\log\eps|) ,
	}
	where we have used the upper bounds \eqref{est xi gradients}, the estimate of the total number of points \eqref{total number} and \eqref{g point est} for the term located in $\B(\rsm)$. In the boundary layer $ \rsm \leq r \leq \rbl $ we have exploited the explicit expression for $ \nabla \phitrial $ \eqref{phi equation 3} and the pointwise estimate \eqref{g exp decay}. 
	We estimate now the interaction term, i.e. the last term in \eqref{reduced energy}:
	\bml{
 		\label{interaction ub 1}
 		\eps^{-2} \int_{\R^2} \diff \rv \: g^4 \lf( 1 - \lf| \vtrial \ri|^2 \ri)^2	\\
		\leq  \eps^{-2} \int_{\R^2 \setminus \B(\rbl - \eps^{2/3} |\log\eps|^{-1})} \diff \rv \: g^4 + \eps^{-2} \int_{\B(\rbl)} \diff \rv \: g^4 \lf( 1 - \lf| \vtrial \ri|^2 \ri)^2 	\\	
		\leq  C \eps^{-2} (\ceps^2 -1)^2 + \OO(\eps^{\infty}) \leq \OO(\eps^2|\log\eps|^2),
	}
	thanks to the exponential decay of $ g $ \eqref{g exp decay} and \eqref{ceps est}.

	The angular momentum term (second term in \eqref{reduced energy}) can be integrated by parts:
	\beq
 		\label{angular est 1}
 		- \Omega \int_{\R^2} \diff \rv \: g^2 \rv^{\perp} \cdot \lf(i \vtrial, \nabla \vtrial\ri) = |\log\eps| \int_{\B(\rbl)} \diff \rv \: F_{g}(r) \: \curl  \lf(i \vtrial, \nabla \vtrial\ri),
	\eeq
	where 
	\beq
		F_{g}(r) : = - \Omega_0 \int_{r}^{\rbl} \diff t \: t \: g^2(t).
	\eeq
	Note that in the integration by parts the boundary term vanishes thanks to the fact that $ F_g $ vanishes there. 
	 In fact in the boundary layer $ \B(\rbl) \setminus \B(\rsm) $ 
	 it is more convenient to integrate the r.h.s. of \eqref{angular est 1} by parts back: 
	\bml{
 		\label{angular est 2}
		|\log\eps| \int_{\B(\rbl)\setminus\B(\rsm)} \diff \rv \: F_g(r) \curl  \lf(i \vtrial, \nabla \vtrial\ri)	\\
		 = - \Omega_1 |\log\eps| \int_{\B(\rbl)\setminus\B(\rsm)} \diff \rv \: g^2 \rv^{\perp} \cdot \lf(i \vtrial, \nabla \vtrial\ri)	- |\log\eps| \int_{\partial \B(\rsm)} \diff \sigma \: F_g(\rsm) \lf(i \vtrial, \partial_\tau \vtrial\ri)	\\
		\leq  \eps^{2/3} \int_{\B(\rbl)\setminus\B(\rsm)} \diff \rv \: g^2 \lf| \nabla \phitrial \ri|^2 + C \eps^{-2/3} |\log\eps|^2 \int_{\B(\rbl)\setminus\B(\rsm)} \diff \rv \: r^2 g^2 +  \lf|F_g(\rsm)\ri| \OO(|\log\eps|)
		\\
		\leq \eps^{2/3} \int_{\B(\rbl)} \diff \rv \: g^2 \lf| \nabla \phitrial \ri|^2 + \OO(\eps^{2/3} |\log\eps|^{10/3})
	}
	by Cauchy-Schwarz inequality and the pointwise estimate \eqref{g exp decay}. 
	The same estimate also implies that $ |F_g(\rsm)| \leq C \eps^{4/3} |\log\eps|^{4/3} $, which has been used above as well as the fact that
	\beq
		\int_{\partial \B(\rsm)} \diff \sigma \: \lf(i \vtrial, \partial_\tau \vtrial\ri) = \ceps^2 \int_{\partial \B(\rsm)} \diff \sigma \: \partial_\tau \phitrial = 2\pi \ceps^2 \deg\lf\{\phitrial, \partial \B(\rsm) \ri\} = \OO(|\log\eps|),
	\eeq
	by \eqref{phase pro 2} and \eqref{total number}. The r.h.s. of \eqref{angular est 2} can then be safely incorporated in the remainder in \eqref{kinetic ub 1} and it is sufficient to consider the r.h.s. of \eqref{angular est 1} only inside $ \B(\rsm) $,
	where $ \xi_0 = 1 $ and 
	\beq
		\curl  \lf(i \vtrial, \nabla \vtrial\ri) = \ceps^2 \curl \lf( \nabla \phitrial \ri) = 0,		\quad	\mbox{in } \bigg( \bigcup_{i=1}^{N_{\eps}} \B(\ai,2\eps) \bigg)^{c} \bigcap \B(\rsm),
	\eeq
	so that we have 
	\bml{
 		\label{angular est 3}
		|\log\eps| \int_{\B(\rsm)} \diff \rv \: F_g(r) \curl  \lf(i \vtrial, \nabla \vtrial\ri)
		= |\log\eps| \int_{\cup \B(\ai,2\eps)} \diff \rv \: F_{g}(r) \: \curl  \lf(i \vtrial, \nabla \vtrial\ri) 	\\
		= \lf( 1 + \OO(\eps) \ri) |\log\eps| \sum_{i =1}^{N_{\eps}} F_g(a_i) \int_{\B(\ai,2\eps)} \diff \rv \: \curl \lf( \xi_v^2 \nabla \phitrial \ri)	\\
		= \lf( 1 + \OO(\eps) \ri) |\log\eps| \sum_{i =1}^{N_{\eps}} F_g(a_i) \int_{\partial \B(\ai,2\eps)} \diff \sigma \: \partial_{\tau} \phitrial  	
		\leq 2\pi |\log\eps| \sum_{i =1}^{N_{\eps}} F_g(a_i) + \OO(\eps |\log\eps|^2),
	}
	where we have used 
	the bound $ \| \nabla F_g \|_{\infty} \leq C $ to estimate 
	\beq
		\sup_{\rv \in \B(\ai,2\eps)} \lf| F_g(r) - F_g(a_i) \ri| \leq C \eps \lf| F_g(a_i) \ri|,
	\eeq
	since
	\beq
 		\label{lb Fg}
		\lf|F_g(a_i)\ri| \geq C \int_{a_i}^{\rsm} \diff t \: t \: g^2(t) \geq C \lf( 1 - \OO(\eps^{2/3} |\log\eps|^{2/3}) \ri) \int_{\rrs}^{\rsm} \diff t \: \tfm(t) \geq C > 0,
	\eeq
	thanks to \eqref{rrs}, \eqref{ub kk} and \eqref{g point est}. The replacement of $ F_g $ by $ \tfpot $ can be done in a similar way:
	\beq
		\lf| F_g(a_i) - \tfpot(a_i) \ri| \leq C \int_{r}^{\rbl} \diff t \: t \: \lf|g^2(t) - \tfm(t)\ri| \leq \OO(\eps^{2/3}|\log\eps|^{2/3}),
	\eeq
	by \eqref{ub kk} and \eqref{g point est}, so that using \eqref{lb Fg} one obtains 
	\bdm
		\lf| F_g(a_i) - \tfpot(a_i) \ri| \leq \OO(\eps^{2/3}|\log\eps|^{2/3}) \tfpot(a_i),	\quad	\mbox{for any } i = 1, \ldots, \neps.	
	\edm

	Combining \eqref{kinetic ub 1} and \eqref{interaction ub 1} with \eqref{angular est 1}, \eqref{angular est 2} and \eqref{angular est 3}, we obtain
	\beq
		\label{energy ub 1}
		\E\lf[\vtrial\ri] \leq  \lf(1 +\OO(\eps^{2/3} |\log\eps|^{2/3}) \ri) \int_{\B(\rsm)}  \frac{1}{2\tfm} \lf| \nabla h_{\nutrial} \ri|^2  + 2\pi |\log\eps| \sum_{i =1}^{N_{\eps}} F_g(a_i) + \OO(|\log\eps|),
	\eeq
	so that in order to complete the proof it suffices to estimate the first term on r.h.s. of the expression above. To this purpose we first note that the potential
	\beq
		\htilde : = h_{\nutrial} + \neps \tfm(\rsm) \log\lf(\rsm/\rbl\ri),
	\eeq
	solves 
	\beq
		\label{eq nutrial 2}
		\begin{cases}
			- \nabla \lf( \tfm^{-1} \nabla \htilde \ri) = \nutrial,		&	\mbox{in } \B(\rsm),	\\
			\htilde = 0,							&	\mbox{on } \partial \B(\rsm),
		\end{cases}
	\eeq
	since $ h_{\nutrial} $ is continuous in the whole of $ \B(\rbl) $ and for $ r \geq \rsm $ it has the explicit expression \eqref{hnutrial outside}. Obviously one also has
	\bdm
		\int_{\B(\rsm)} \diff \rv \: \frac{1}{\tfm} \lf| \nabla h_{\nutrial} \ri|^2 = \int_{\B(\rsm)} \diff \rv \: \frac{1}{\tfm} | \nabla \htilde|^2.
	\edm
	 It is now convenient to introduce the Green function
	\beq
		\G(\xv,\yv) : = A^{-1}(\xv,\yv),	\qquad	A : = - \nabla \lf( \tfm^{-1} \nabla \ri),
	\eeq
	i.e., the integral kernel of the inverse of the differential operator $ A $. More precisely $ \G $ solves the differential equation (we postpone to  Lemma \ref{G est: lem} below the proof of the required properties of $\G$)
	\beq
 		\label{Green definition}
		\begin{cases}
			- \nabla_{\xv} \lf[ \tfm^{-1}(x) \nabla_{\xv} \G(\xv,\yv) \ri] = \delta(\xv - \yv),	&	\mbox{for } \xv, \yv \in \B(\rsm),	\\
			\G(\xv,\yv) = 0,	&	\mbox{for } \xv \in \partial \B(\rsm).
		\end{cases}
	\eeq
	By means of $ \G $, we can rewrite the first term on the r.h.s. of \eqref{energy ub 1} as
	\bml{
 		\label{vortex interaction 1}
		\int_{\B(\rsm)}  \frac{1}{\tfm} \lf| \nabla h_{\nutrial} \ri|^2 = \int_{\B(\rsm)} \int_{\B(\rsm)} \: \G(\xv,\yv) \nutrial(\xv) \nutrial(\yv)	\\
		= \frac{4}{\eps^4} \sum_{i = 1}^{\neps} \int_{\B(\ai,\eps)} \diff \xv \int_{\B(\ai,\eps)} \diff \yv \: \G(\xv,\yv) +  \frac{4}{\eps^4} \sum_{\underset{i \neq j}{i,j= 1}}^{\neps} \int_{\B(\ai,\eps)} \diff \xv \int_{\B(\aj,\eps)} \diff \yv \: \G(\xv,\yv).
	}

	For the estimate of the diagonal term we can use the inequality \eqref{G est} proven in Lemma \ref{G est: lem} below, which yields
	\bml{
		\label{vortex interaction 2}
		\frac{4}{\eps^4} \sum_{i = 1}^{\neps} \int_{\B(\ai,\eps)} \diff \xv \int_{\B(\ai,\eps)} \diff \yv \: \G(\xv,\yv) \leq - \frac{2}{\pi \eps^4} \sum_{i = 1}^{\neps} \int_{\B(\ai,\eps)} \diff \xv \int_{\B(\ai,\eps)} \diff \yv \:\tfm(y) \log|\xv - \yv| + \OO(|\log\eps|)	\\
		\leq - \lf(1 + \OO(\eps) \ri) \frac{2}{\pi \eps^4} \sum_{i = 1}^{\neps} \tfm(a_i) \int_{\B(\ai,\eps)} \diff \xv \int_{\B(\ai,\eps)} \diff \yv \:  \log|\xv - \yv| + \OO(|\log\eps|)	\\
		\leq  2 \pi  \sum_{i = 1}^{\neps} \tfm(a_i) |\log\eps| + \OO(|\log\eps|),
	}
	where we have used the properties of harmonic functions and in particular the mean value theorem to compute
	\bml{
		- \int_{\B(\ai,\eps)} \diff \xv \int_{\B(\ai,\eps)} \diff \yv \:  \log|\xv - \yv| 	\\
		= - \half \int_0^{2\pi} \diff \vartheta \int_0^{2\pi} \diff \vartheta^{\prime} \int_0^{\eps} \diff \varrho \varrho\int_0^{\eps} \diff \varrho^{\prime} \varrho^{\prime} \: \log \lf[ \varrho^2 + {\varrho^{\prime}}^2 - 2 \varrho \varrho^{\prime} \cos\lf( \vartheta - \vartheta^{\prime} \ri) \ri] 	\\
		= - 4 \pi^2 \int_0^{\eps} \diff \varrho \varrho \lf[ \int_0^{\varrho} \diff \varrho^{\prime} \varrho^{\prime} \: \log \varrho + \int_{\varrho}^{\eps} \diff \varrho^{\prime} \varrho^{\prime} \: \log \varrho^{\prime} \ri] = \pi^2 \eps^4 |\log\eps| + \half \pi^2 \eps^4. 
	}
	To estimate of the difference $ \tfm(r) - \tfm(a_i) $, we have used the bounds $ \| \nabla \tfm \|_{\infty} \leq C $ and 
	\beq
		\tfm(a_i) \geq C > 0,	\quad	\mbox{for any } i = 1, \ldots, \neps,
	\eeq
	which is a simple consequence of the explicit expression of $ \tfm $, \eqref{rrs} and \eqref{ub kk}.

	The off-diagonal term on the other hand can be dealt with by using the inequality \eqref{G grad est} proven in Lemma \ref{G est: lem} and a Riemann sum approximation. The distribution of points $ \{ \ai \} $ identifies a cell decomposition of $ \B(\rrs) $: one can associate with each sector a two-dimensional cell $ \cell_{i_k} $ as
	\beq
		\label{cell ik}
		\cell_{i_k} : = \lf\{ \rv = (r,\vartheta) : \:  \vartheta \in \lf[ \vartheta_{i_k} - \half\Theta_k , \vartheta_{i_k} + \half \Theta_k \ri], r \in \lf[ (k - \half) |\log\eps|^{-1/2}, (k + \half) |\log\eps|^{-1/2} \ri] \ri\},
	\eeq
	with volume
	\beq
		\lf| \cell_{i_k} \ri| = \frac{\varrho_k \Theta_k}{\sqrt{|\log\eps|}}  = \frac{2 \pi \varrho_k}{ N_k \sqrt{|\log\eps|}} \geq  \frac{1}{\ms(\varrho_k)|\log\eps|},
	\eeq
	by \eqref{varrhok}, \eqref{nk} and \eqref{Thetak}. Given a $i_k$ it will be convenient to denote $n(i_k)$ the set of indices $j_h$ such that $\cell_{j_h}$ is a nearest neighbor of $\cell_{i_k}$ in the cell decomposition. Indeed, $\G(\xv,\yv)$ is singular for small $|\xv-\yv|$ and it is therefore useful to distinguish those pairs of cells $\cell_{i_k},\cell_{j_h}$ for which the distance $|\xv-\yv|$ can be arbitrarily small for $\xv\in \cell{i_k}, \yv \in \cell_{j_h}$.
	
	We first write, using \eqref{G grad est} and the fact that the distance between $\avi$ and $\avj$ is at least of order $|\log \ep| ^{-1/2}$ for $i\neq j$, 
	\beq
		\label{vortex interaction 3}
 		\frac{4}{\eps^4} \sum_{\underset{i \neq j}{i,j= 1}}^{\neps} \int_{\B(\ai,\eps)} \diff \xv \int_{\B(\aj,\eps)} \diff \yv \: \G(\xv,\yv) = (2\pi)^2  \sum_{\underset{i \neq j}{i,j= 1}}^{\neps} \G(\ai,\aj) + \OO(\eps |\log\eps|^2 \log|\log\eps|).
	\eeq
        Then we neglect the neighboring points in the sum, using \eqref{G est} and the fact that $|\avik-\avjh|$ is of order $|\log \ep| ^{-1/2}$ for $j_h \in n(i_k)$:
        \[
         \sum_{\underset{i \neq j}{i,j= 1}}^{\neps} \G(\ai,\aj) = \sum_{k,h \in \mathcal K}  \sum_{i_k= 1}^{N_k}  \sum_{\underset{ j_h\notin n(i_k)}{j_h= 1}}^{N_h} \G(\avik,\avjh) + \OO(|\log \ep|\log |\log \ep|).
        \]
	Next we use \eqref{G grad est} to write (here its is crucial to have excluded from the sum the nearest neighbors of each point to have lower bounds on $|\xv-\yv|$)
	\begin{multline}
		 (2\pi)^2  \sum_{k,h \in \mathcal K}  \sum_{i_k= 1}^{N_k}  \sum_{\underset{ j_h\notin n(i_k)}{j_h= 1}}^{N_h} \G(\avik,\avjh) =  \sum_{k,h \in \mathcal K}  \frac{(2\pi)^2}{\lf| \cell_{i_k} \ri| \lf| \cell_{j_h} \ri|} \sum_{i_k= 1}^{N_k}  \sum_{\underset{ j_h\notin n(i_k)}{j_h= 1}}^{N_h} \int_{\cell_{i_k}} \diff \xv \int_{\cell_{j_h}} \diff \yv \: \G(\xv,\yv) \\+ \OO(|\log\eps|^{3/2} \log|\log\eps|).
	\end{multline}
	Now, using \eqref{G est} again, one can see that neighboring cells do not weigh too much in the integrals:
	\[
	 \sum_{k,h \in \mathcal K}  \frac{(2\pi)^2}{\lf| \cell_{i_k} \ri| \lf| \cell_{j_h} \ri|} \sum_{i_k= 1}^{N_k}  \sum_{\underset{ j_h\in n(i_k)}{j_h= 1}}^{N_h} \int_{\cell_{i_k}} \diff \xv \int_{\cell_{j_h}} \diff \yv \: \G(\xv,\yv) = \OO(|\log \ep|\log |\log \ep|),
	\]
	which allows to conclude, gathering the above equations and using a Riemann sum approximation again, 
	\begin{multline}
	 \frac{4}{\eps^4} \sum_{\underset{i \neq j}{i,j= 1}}^{\neps} \int_{\B(\ai,\eps)} \diff \xv \int_{\B(\aj,\eps)} \diff \yv \: \G(\xv,\yv)= |\log\eps|^2  \int_{\B(\rrs)} \int_{\B(\rrs)} \: \G(\xv,\yv) \musta(x)\musta(y) \\+\OO(|\log \ep| ^{3/2} \log |\log \ep|).
	\end{multline}

	We can now go back to \eqref{energy ub 1} and replace \eqref{vortex interaction 2} and \eqref{vortex interaction 3}, finally obtaining
	\bml{ 
 		\label{energy ub 2}
		\E\lf[\vtrial\ri] \leq 
		2 \pi |\log\eps|  \sum_{i = 1}^{\neps} \lf[ \half \tfm(a_i) + \tfpot(a_i) \ri] +  (2\pi)^2  \sum_{\underset{i \neq j}{i,j= 1}}^{\neps} \G(\ai,\aj)	 + \OO(|\log\eps|)	\\
		\leq   2 \pi |\log\eps|  \sum_{i = 1}^{\neps} \tfH(a_i) + |\log\eps|^2  \int_{\B(\rrs)} \int_{\B(\rrs)} \: \G(\xv,\yv) \musta(x)\musta(y) + \OO(|\log\eps|^{3/2} \log|\log\eps|) \\
		\leq  |\log\eps|^2 \int_{\B(\tfr)} \diff \rv \: \lf\{ \frac{1}{\tfm} \lf| \nabla h_{\musta} \ri|^2 + \tfH(r) \ms(r) \ri\} + \OO(|\log \ep| ^{3/2} \log |\log \ep|)\\
		 = \tfI |\log\eps|^2 + \OO(|\log\eps|^{3/2} \log|\log\eps|),
	}
	where we have used  \eqref{musta approx} for the term involving $\tfH$. Recalling \eqref{reduced energy}, the proof of Proposition \ref{upper bound: pro} is now complete.
	
	\medskip

We conclude this section by proving some technical results about the Green function $ \G(\xv,\yv) $ used in the proof above. We recall that $\G$ is defined as the solution of (see \eqref{Green definition})
\bdm
	\begin{cases}
		- \nabla_{\xv} \lf[ \tfm^{-1}(x) \nabla_{\xv} \G(\xv,\yv) \ri] = \delta(\xv - \yv),	&	\mbox{for } \xv, \yv \in \B(\rsm),	\\
		\G(\xv,\yv) = 0,	&	\mbox{for } \xv \in \partial \B(\rsm).
	\end{cases}
\edm

\begin{lem}[\textbf{Properties of the Green function $ \G(\xv,\yv) $}]
	\label{G est: lem}
	\mbox{}	\\	
	There exists a unique solution $ \G(\xv,\yv) $ to \eqref{Green definition} in $ W^{1,p}(\B(\rsm) \otimes \B(\rsm)) $ for any $ 1 \leq p < 2 $. It is positive and symmetric and 
	\beq
		\label{G est}
		\sup_{\xv,\yv \in \B(\rsm)} \lf| \G(\xv,\yv) + \frac{1}{2\pi} \tfm(y) \log|\xv - \yv| \ri| \leq C
	\eeq
	for some fixed constant $C>0$.
	Moreover, for any $i,j\in \N$ and $\yv \in \cell_j$, 
	\beq
		\label{G grad est}
		\sup_{\xv \in \cell_{i}} \lf| \nabla_x \G(\xv,\yv) \ri| \leq C \left( \sup_{\xv \in \cell_{i}} \lf| \log \lf| \xv - \yv \ri| \ri| +1 \right),
	\eeq
	where the cell $ \cell_{i} $ is defined in \eqref{cell ik}.
\end{lem}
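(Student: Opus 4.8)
The plan is to treat $\G$ as the Green function of the divergence-form, self-adjoint, uniformly elliptic operator $A:=-\nabla\cdot(\tfm^{-1}\nabla\,\cdot\,)$ on the disc $\B(\rsm)$ with homogeneous Dirichlet condition, and to extract its behaviour near the diagonal by freezing the coefficient $\tfm^{-1}$ at the pole $\yv$ and comparing with the two-dimensional logarithm. First I would dispatch the soft statements. On $\overline{\B(\rsm)}$ the weight $\tfm$ is smooth and bounded above and below (the lower bound may depend on $\eps$, which is immaterial here), so $A$ is uniformly elliptic with smooth coefficients; following Stampacchia's construction one defines $\G(\cdot,\yv)$ as the limit in $W^{1,p}(\B(\rsm))$, $1\le p<2$, of the solutions $u_k$ of $Au_k=\phi_k$, $u_k=0$ on $\partial\B(\rsm)$, where $\phi_k\ge0$ is a mollified unit mass at $\yv$, using the a priori bound $\|u_k\|_{W^{1,p}}\le C_p\|\phi_k\|_{L^1}$ which holds for $p<2$ in dimension two exactly as for the Newtonian kernel; uniqueness in this (very weak) class is part of the same theory. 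Symmetry follows from symmetry of the bilinear form $(u,v)\mapsto\int_{\B(\rsm)}\tfm^{-1}\nabla u\cdot\nabla v$, and positivity from the weak maximum principle for $u_k$ together with passage to the limit. Interior elliptic regularity gives $\G(\cdot,\yv)\in C^\infty(\B(\rsm)\setminus\{\yv\})$, which will be used for the gradient bound.

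For the pointwise estimate \eqref{G est} I would set
\[
R(\xv,\yv):=\G(\xv,\yv)+\frac{1}{2\pi}\tfm(\yv)\log|\xv-\yv|,
\]
and compute, using $-\Delta_\xv\log|\xv-\yv|=-2\pi\,\delta(\xv-\yv)$ and $A_\xv=-\tfm^{-1}(\xv)\Delta_\xv-\nabla(\tfm^{-1})(\xv)\cdot\nabla_\xv$,
\[
A_\xv R(\xv,\yv)=\bigl(1-\tfm^{-1}(\xv)\tfm(\yv)\bigr)\,\delta(\xv-\yv)-\frac{\tfm(\yv)}{2\pi}\,\nabla(\tfm^{-1})(\xv)\cdot\frac{\xv-\yv}{|\xv-\yv|^{2}}=:F(\xv,\yv),
\]
the Dirac term disappearing because its prefactor vanishes at $\xv=\yv$. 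Since $|F(\xv,\yv)|\le C\,|\xv-\yv|^{-1}$ one has $F(\cdot,\yv)\in L^{q}(\B(\rsm))$ for every $q<2$, hence for some $q>1$, with norm uniformly bounded for $\yv$ in a fixed disc $\B(R)$, $R<\rtf$; for such $\yv$ the boundary values of $R$, namely $\tfrac{1}{2\pi}\tfm(\yv)\log|\xv-\yv|$ on $\partial\B(\rsm)$, are $\OO(1)$. The classical $L^{\infty}$ bound for the Dirichlet problem $A_\xv R=F$ with $F\in L^{q}$, $q>1$ (De Giorgi-Nash-Moser / Stampacchia truncation) then yields $\|R(\cdot,\yv)\|_{L^{\infty}}\le C$, which is \eqref{G est}.

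For the gradient estimate \eqref{G grad est} I would fix $\yv\in\cell_j$ and $\xv_0\in\cell_i$ and put $d:=|\xv_0-\yv|$; if $d=0$ the right-hand side is infinite and there is nothing to prove, so assume $d>0$. Since $\yv\notin B(\xv_0,d/2)$, the function $\xv\mapsto\G(\xv,\yv)$ solves the homogeneous equation $A_\xv\G=0$ there (with zero data on the portion, if any, of $\partial\B(\rsm)$ that this ball meets), so the scale-invariant interior/boundary gradient estimate for uniformly elliptic operators with smooth coefficients gives
\[
|\nabla_\xv\G(\xv_0,\yv)|\le\frac{C}{d}\sup_{B(\xv_0,d/2)}|\G(\cdot,\yv)|\le\frac{C}{d}\Bigl(\sup_{\xv\in B(\xv_0,d/2)}\bigl|\log|\xv-\yv|\bigr|+1\Bigr),
\]
where the second inequality uses \eqref{G est} and $\sup\tfm<\infty$. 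On $B(\xv_0,d/2)$ one has $|\xv-\yv|\ge d/2$, so $d^{-1}\bigl|\log|\xv-\yv|\bigr|\le C(|\log d|+1)$, and since $\diam(\cell_i)=\OO(|\log\eps|^{-1/2})$ the distances from points of $\cell_i$ to $\yv$ are all comparable; this upgrades the bound to \eqref{G grad est}.

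The hard part is the frozen-coefficient step behind \eqref{G est}: one has to peel off the logarithmic pole with exactly the $\yv$-dependent weight $\tfm(\yv)$ and then control the residual lower-order term in a scaling-invariant way. A subsidiary technical point is that $\tfm^{-1}$ degenerates as one approaches $\partial\B(\rsm)$; this degeneracy is only polynomial in $\eps$ and does not affect the orders above, and in every application of the lemma the points $\xv,\yv$ that occur lie in a fixed disc $\B(R)$ with $R<\rtf$ (indeed in $\B(\rrs)$), on which $\tfm$ and all the elliptic constants are bounded above and below uniformly in $\eps$, so the constant $C$ that matters there is genuinely fixed.
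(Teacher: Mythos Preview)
Your treatment of existence, uniqueness, positivity, symmetry, and the pointwise estimate \eqref{G est} is correct and is exactly the frozen-coefficient argument the paper has in mind (it refers to \cite[Lemma 2.4]{R} and \cite[Lemma 4.5]{AAB}, whose proofs proceed just as you describe, the crucial input being $\lf\|\nabla\tfm\ri\|_\infty\le C$ so that the residual source $F$ has only an integrable $|\xv-\yv|^{-1}$ singularity).

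The gap is in your proof of \eqref{G grad est}. The interior gradient estimate on $B(\xv_0,d/2)$ gives
\[
|\nabla_\xv\G(\xv_0,\yv)|\le\frac{C}{d}\sup_{B(\xv_0,d/2)}|\G(\cdot,\yv)|\le\frac{C}{d}\bigl(|\log d|+1\bigr),
\]
but your next line, ``$d^{-1}|\log|\xv-\yv||\le C(|\log d|+1)$'', is simply false when $d$ is small: the left-hand side is of order $d^{-1}|\log d|$, which diverges while the right-hand side grows only logarithmically. Since $\yv\in\cell_j$ and $\xv_0\in\cell_i$ can be as close as $c|\log\eps|^{-1/2}$ for non-adjacent cells, your argument only yields $|\nabla_\xv\G|\le C|\log\eps|^{1/2}\log|\log\eps|$, which is too weak by a factor $|\log\eps|^{1/2}$ for the Riemann-sum replacement following \eqref{vortex interaction 3}. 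The point is that an interior estimate on a ball of radius $d/2$ necessarily carries a $d^{-1}$ scaling that you cannot absorb.

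The paper avoids this loss by a different mechanism: on $\cell_i$ (which does not contain the pole $\yv$) the equation reads $\tfm\,\Delta_\xv\G=\nabla\tfm\cdot\nabla_\xv\G$, hence $\|\Delta_\xv\G\|_{L^\infty(\cell_i)}\le C\|\nabla_\xv\G\|_{L^\infty(\cell_i)}$ since $\tfm$ is bounded below on $\supp(\musta)$. Feeding this into the Gagliardo--Nirenberg interpolation inequality
\[
\|\nabla_\xv\G\|_\infty\le C\bigl(\|\Delta_\xv\G\|_\infty^{1/2}\|\G\|_\infty^{1/2}+\|\G\|_\infty\bigr)
\]
and absorbing via Young gives $\|\nabla_\xv\G\|_{L^\infty(\cell_i)}\le C\|\G\|_{L^\infty(\cell_i)}$ directly, and then \eqref{G est} produces \eqref{G grad est}. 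The essential difference is that bootstrapping the PDE through the interpolation inequality exploits the smallness of the first-order coefficient $\nabla\tfm$ rather than the distance to the pole, so no $d^{-1}$ appears.
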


\begin{proof}
	Most of the properties of $ \G $ follow from standard arguments applied to the analysis of the elliptic operator $ A $ (see, e.g., \cite{St}). Moreover analogous results including \eqref{G est} have already been proven in \cite[Lemma 2.4]{R} and \cite[Lemma 4.5]{AAB} (see also \cite{ASS}) and we skip here the details for the sake of brevity, only stressing that the crucial ingredient for \eqref{G est} is the fact that $ \| \nabla \tfm \|_{\infty} \leq C $.

	The second inequality is a consequence of the equation for $ \G $, which for $ \xv \in \cell_i $ and $ \yv \in \cell_j $, $i\neq j$, can be rewritten
	\bdm		
		\nabla_{\xv} \tfm(x) \cdot \nabla_{\xv} \G(\xv,\yv) - \tfm(x)  \Delta_{\xv} \G(\xv,\yv) = 0,			\edm
	together with the Gagliardo-Nirenberg inequality (see, e.g., \cite[Theorem 1]{N})
	\bdm
		\lf\| \nabla_{\xv}\G \ri\|_{\infty} \leq C \lf( \lf\| \Delta_{\xv}\G \ri\|^{1/2}_{\infty}  \lf\| \G \ri\|^{1/2}_{\infty} + \lf\|  \G \ri\|_{\infty} \ri),
	\edm
	where the suprema are taken with respect to $ \xv \in \celli $ and for almost every $ \yv \in \cell_j $. Combining this with the inequality above we get $  \lf\| \nabla_{\xv}\G \ri\|_{\infty} \leq C  \lf\|  \G \ri\|_{\infty} $ since $ \tfm $ is uniformly bounded from below inside the support of $ \musta $. Using \eqref{G est} one then obtains \eqref{G grad est}.	
\end{proof}

\section{Energy Lower Bound and Convergence of the Vorticity Measure}\label{sec:lower bound}

In this section we complete the proof of Theorem \ref{teo:gse asympt} by proving the following:

\begin{pro}[\textbf{GP energy lower bound}]
	\label{pro:lower bound}
	\mbox{}	\\
	If $ \Omega = \Omega_0 |\log\eps| $, with $ \Omega_0 > \Omega_1 $ as $ \eps \to 0 $, then
	\beq
		\gpe \geq \hgpe + \tfI |\log\eps|^2 \left( 1 + \OO\left( \frac{\log |\log \ep|}{|\log \ep|} \right) \right).
	\eeq
\end{pro}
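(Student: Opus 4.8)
The plan is to turn the lower-bound sketch of Section~\ref{sec:heuristics} into a quantitative argument. I would begin with the exact energy decoupling $\gpe = \hgpe + \E[v]$, $\gpm = g\,v$, as in~\eqref{reduced energy}, so that the statement reduces to proving $\E[v] \geq \tfI\,|\log\eps|^2\big(1 + \OO(\log|\log\eps|/|\log\eps|)\big)$ for the minimizer $v$. Using the exponential decay of $g$ outside a neighbourhood of $\B(\tfr)$ (Appendix) together with the already established upper bound of Proposition~\ref{upper bound: pro}, I would first restrict the integration defining $\E[v]$ to $\B(\rbulk)$ at a negligible cost; the bound $\E[v] = \gpe - \hgpe = \OO(|\log\eps|^2)$ that follows forces both $\int_{\B(\rbulk)}\diff\rv\;\tfm\,|\nabla v|^2$ and $\eps^{-2}\int_{\B(\rbulk)}\diff\rv\; g^4(1-|v|^2)^2$ to be $\OO(|\log\eps|^2)$, and the pointwise bound $g^2 \geq \tfm\,(1-o(1))$ valid in $\B(\rbulk)$ lets me replace $g^2$ by $\tfm$ in the kinetic term up to a controlled error.

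Next I would apply the vortex ball construction of~\cite{J,Sa} (in the inhomogeneous form used in~\cite{SS2,R}) to the rescaled problem, enclosing the set $\{|v|\leq 1/2\}$ in a disjoint family of balls $\B_j = B(\avj,\varrho_j)$, $j=1,\dots,J$, with $\sum_j\varrho_j$ small and with the kinetic lower bound
\[
\int_{\B_j}\diff\rv\;\half\,\tfm\,|\nabla v|^2 \;\geq\; \pi\,|d_j|\,\tfm(\avj)\,|\log\eps|\,(1-o(1)),
\]
$d_j$ being the degree of $v$ on $\partial\B_j$. Then I would integrate the rotation term by parts using the potential $\tfpot$ of~\eqref{eq:intro pot}, rewriting $-\Om\int_{\tfd}\diff\rv\;\tfm\,\rvp\cdot(iv,\nabla v)$ as $|\log\eps|\int_{\tfd}\diff\rv\;\tfpot\,\curl(iv,\nabla v)$, and invoke the Jacobian estimate~\cite{JS2} (Proposition~\ref{pro:jacest}) to replace $\curl(iv,\nabla v)$ by $2\pi\sum_j d_j\delta_{\avj}$ in the relevant negative norm, so that the rotation contribution becomes $2\pi\,|\log\eps|\sum_j d_j\,\tfpot(\avj)$ up to a remainder.

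The core step is the lower bound for the kinetic energy \emph{outside} the vortex balls. I would set $\jt$ equal to the current $(iv,\nabla v)$ on $\B(\rbulk)\setminus\bigcup_j\B_j$ and to $0$ inside the balls, put $\mut := |\log\eps|^{-1}\curl\jt$, and let $h_{\mut}$ solve~\eqref{eq ren energy}. Since $\curl\big(\jt + |\log\eps|\,\tfm^{-1}\nablap h_{\mut}\big)=0$, the two fields differ by the gradient of an $H^1$ function, and because $|v|\approx1$ off the balls a Cauchy--Schwarz and orthogonality argument (the content of Lemma~\ref{lem:low int estimate}) yields
\[
\int_{\B(\rbulk)\setminus\cup_j\B_j}\diff\rv\;\half\,\tfm\,|\nabla v|^2 \;\geq\; \frac{|\log\eps|^2}{2}\int_{\tfd}\diff\rv\;\frac{1}{\tfm}\,|\nabla h_{\mut}|^2 \;-\;(\text{remainder}),
\]
where the Dirichlet condition in~\eqref{eq ren energy} and the restriction to $\B(\rbulk)$ are exactly what make the argument work near $\partial\tfd$, where $\tfm$ degenerates. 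Checking that $\mut$ approximates the same measure $2\pi|\log\eps|^{-1}\sum_j d_j\delta_{\avj}$ as $\curl(iv,\nabla v)$ then lets me re-assemble the cost term as $\half\int_{\tfd}\diff\rv\;\tfm|\mut|$; collecting all the pieces and discarding the nonnegative last term of $\E[v]$ gives $\E[v] \geq |\log\eps|^2\,\tfIc[\mut] - (\text{remainder})$. Finally the stability estimate~\eqref{eq:stability I} of Theorem~\ref{theo:renorm energy}, applied through Corollary~\ref{min ren energy: cor} with $\rho=\tfm$, $F=\tfpot$, gives $\tfIc[\mut] \geq \tfI + \int_{\tfd}\frac{1}{2\tfm}|\nabla h_{\mut-\musta}|^2 \geq \tfI$, hence $\E[v] \geq \tfI\,|\log\eps|^2 - (\text{remainder})$, which is the assertion once the remainders are tracked through all the steps and shown to be $\OO\big(|\log\eps|\log|\log\eps|\big)$. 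The quantitative control on $\int_{\tfd}\frac{1}{2\tfm}|\nabla h_{\mut-\musta}|^2$ obtained along the way is precisely what then feeds the proof of Theorem~\ref{teo:vorticity}.

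I expect the main obstacle to be twofold. The first difficulty is the bookkeeping: the radii $\varrho_j$, the boundary-layer width and the regularization cutoffs must be chosen so that the errors from the vortex ball lower bound, from the Jacobian estimate, from replacing $g^2$ by $\tfm$ near $\partial\B(\rbulk)$, and from passing from $\curl(iv,\nabla v)$ to $\mut$ all combine into the single rate $\OO(\log|\log\eps|/|\log\eps|)$. The second, more conceptual, difficulty is the interaction lower bound of Lemma~\ref{lem:low int estimate}: making the orthogonality argument quantitative while $\tfm$ vanishes on $\partial\tfd$ is delicate, and it is this point that forces working inside $\B(\rbulk)$ rather than on all of $\tfd$ and that dictates the precise shape of the norm~\eqref{eq:norm} in which Theorem~\ref{teo:vorticity} is stated.
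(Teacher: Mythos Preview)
Your outline follows the paper's strategy closely and identifies the right ingredients (decoupling, vortex balls, Jacobian estimate, regularized current $\jt$, interaction lower bound, stability of $\tfIc$). There is, however, one step that would fail as stated. You write that you ``restrict the integration defining $\E[v]$ to $\B(\rbulk)$ at a negligible cost'', but the contribution of the boundary layer between $\B(\rbulk)$ and the edge of the condensate is \emph{not} a priori small: the rotation term is negative there and you have no independent control on $|\nabla v|$ in that region beyond the global $\OO(|\log\eps|^2)$ kinetic bound. The paper does not estimate this contribution; it arranges for it to be \emph{nonnegative} and then drops it. Concretely, one first restricts only to a slightly larger ball $\B(R_+)$, $R_+=\tfr+\eps^{1/4}$, using the exponential decay of $\gpm$, integrates the rotation term by parts against $F(r)=-\Omega_0\int_r^{R_+} t\,g^2(t)\,\diff t$ (defined with $g^2$, not $\tfm$), and introduces smooth radial cutoffs $\chiin+\chiout=1$ separating $\B(\rbulk)$ from the annulus. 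The crucial observation is that by choosing the constant $\cbulk$ in~\eqref{eq:low radii} small enough one has $|\log\eps|\,|F|\leq \tfrac14 g^2$ throughout $\B(R_+)\setminus\B(\rbulk)$; combined with the pointwise inequality $|\curl(iv,\nabla v)|\leq|\nabla v|^2$ this makes $\chiout\big(\tfrac14 g^2|\nabla v|^2+|\log\eps|\,F\,\curl(iv,\nabla v)\big)\geq 0$, so the outer piece can simply be discarded in a lower bound.

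A second, smaller inconsistency: you let $h_{\mut}$ solve~\eqref{eq ren energy} on $\tfd$ with weight $\tfm$, yet then say that working in $\B(\rbulk)$ is what makes the argument go through. These cannot both hold: the elliptic problem~\eqref{eq ren energy} is degenerate on $\partial\tfd$, and the orthogonality (Stokes) argument behind Lemma~\ref{lem:low int estimate} needs $h_{\mut}$ to be constant on the boundary of the domain where you integrate. In the paper $h_{\mut}$ is defined on $\B(\rbulk)$ with Dirichlet data there and with weight $g^2$ (see~\eqref{eq:defi hmut}); only afterwards is $g^2$ replaced by $\tfm$. The renormalized energy that emerges is then $\tfIc$ with integration domain $\B(\rbulk)$ rather than $\tfd$, and one needs the domain-independence observation of Remark~\ref{sec:limit problem}.\ref{rem:domain} (namely $\supp(\musta)=\B(\rrs)\subset\B(\rbulk)$ for small $\eps$) to conclude that its infimum is still $\tfI$.
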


In the course of the proof we will gather the estimates leading to the conclusion of Theorem \ref{teo:vorticity}, in particular the rigorous version of \eqref{eq:sk final}. The strategy we employ makes use of several classical techniques which have already been used in similar contexts, see, e.g., \cite{AAB,ABM,CRY,IM1,R}. It originates from the GL theory of type-II superconductors (see \cite{SS2} and references therein).

\subsection{Preliminary Steps}

We start as usual with the energy decoupling 
\begin{equation}\label{eq:low decouple}
\gpe = \hgpe + \int_{\R ^2} \diff \rv \:  \eg(u),
\end{equation}
where we have denoted 
\begin{equation}\label{eq:low u}
u := \frac{\gpm}{g}
\end{equation}
and the reduced energy density is
\begin{equation}\label{eq:low ener density}
\eg (u) : = \half g^2 |\nabla u | ^2 - \Omega g ^2 \rv^{\perp} \cdot \left(iu, \nabla u \right)+ \frac{g^4}{\ep ^2} \left( 1 -|u| ^2\right)^2,
\end{equation}
with, as usual,
\[
 \left( iu,\nabla u \right) = \frac{i}{2}\left( u\nabla u ^* - u ^* \nabla u\right).
\]
For a lower bound it is possible to neglect the contribution to the reduced energy in the region $\R ^2\setminus \B (R_+)$ with 
\begin{equation}\label{eq:rbl}
 R_+ : = \tfr + \ep ^{1/4}.
\end{equation}
Indeed, the only possibly negative term can be bounded below by a simple trick 
\begin{equation}\label{eq:low trick}
  - \Omega\int_{\R^2  \setminus \B (R_+)}  \diff \rv \: g ^2 \cdot \left(iu, \nabla u \right) \geq  
- \frac{1}{4} \int_{\R^2  \setminus \B (R_+)} \diff \rv \:  g ^2 \left| \nabla u \right| ^2 - \Om ^2 \int_{\R^2  \setminus \B (R_+)} \diff \rv \: g ^2 |u| ^2
\end{equation}
and in $\R^2  \setminus \B (R_+)$ we can use the exponential smallness (see Proposition \ref{pro:gpm pointwise}) of $\gpm = gu$ to bound the second term and obtain 
\begin{equation}\label{eq:low cut domain 1}
\int_{\R ^2} \diff \rv \: \eg(u) \geq \int_{\B (R_+)} \diff \rv \: \eg(u) - \OO (\ep ^{\infty})
\end{equation}
where positive but useless contributions have been dropped.

Note for later use that the trick \eqref{eq:low trick} used on the whole of $\R^2$ together with the trivial bound
\[
\E[u]\leq 0, 
\]
which follows by taking $g$ as a trial state for $\gpf$, imply
\begin{equation}\label{eq:low bound F}
\int_{\R ^2} \diff \rv \: \lf\{ \frac{1}{4} g ^2 |\nabla u| ^2 + \frac{g ^4}{\ep ^2} \left(1-|u| ^2 \right)^2 \ri\} \leq C |\log \ep| ^2,
\end{equation}
because of the normalization of $g^2 |u| ^2 = |\gpm| ^2$.

We now define
\begin{equation}\label{eq:low pot}
F(r) := -\Omega_0 \int_{r} ^{R_+} \diff t \: t \: g ^2 (t), 
\end{equation}
which satisfies $|\log \ep |\nablap F = g ^2 \Omega \rvp $ in $\B(R_+)$ and $F(R_+) = 0$. We can then integrate by parts the momentum term in \eqref{eq:low cut domain 1} to obtain 
\begin{equation}\label{eq:low ipp}
\int_{\B (R_+)} \diff \rv \: \eg(u) = \int_{\B(R_+)}\diff \rv  \lf\{ \frac{1}{2} g^2 |\nabla u| ^2 + |\log \ep|  F \curl (iu,\nabla u) + \frac{g ^4}{\ep ^2 }\left(1- |u | ^2 \right) ^2\ri\}. 
\end{equation}
We will soon construct vortex balls to bound this energy from below by the energy of vortices of $u$. There is however one more reduction we need to perform before: as it is well-known, the vortex balls construction is feasible only in the region where the matter density is large enough. We thus split the domain again, introducing for some constant $\cbulk > 0 $ to be chosen later on ($\rbulk$ is the radius appearing in the statement of Theorem \ref{teo:gse asympt})
\begin{equation}\label{eq:low radii}
\rbulk : = R_+ - \cbulk \Omega ^{-1},\qquad \Rc := R_+ - \half \cbulk \Om ^{-1}. 
\end{equation}
Note that we have now four radii in our construction:
\[
\rbulk < \Rc < \tfr < R_+
\]
and in particular 
\begin{equation}\label{eq:Rcut}
\Rc \leq \tfr - C \Om ^{-1}, 
\end{equation}
which is a consequence of \eqref{eq:rbl} and \eqref{eq:low radii}. Using \eqref{g point est} and the explicit form of $\tfm$ (in particular the fact that it vanishes linearly on $\dd \B(\rtf)$), this implies
\begin{equation}\label{eq:dens BRc}
g^2 (r) \geq C |\log \ep| ^{-1},	\quad \mbox{for any } \rv \in \B(\Rc),
\end{equation}
and for some $ C > 0 $.

We now perform a smooth cut-off at $\Rc$ by introducing regular radial  functions $\chiin$ and $\chiout$ satisfying
\begin{equation}\label{eq:low chiinout}
\chiin + \chiout = 1 \mbox{ in } \R ^2, \quad \chiin (r) = 1 \mbox{ for } r \leq \rbulk, \quad \chiin (r ) = 0 \mbox{ for } r \geq \Rc.   
\end{equation}
Using \eqref{eq:low radii} and recalling that we consider the regime $\Om = \OO(|\log\ep|)$ we can impose
\begin{equation}\label{eq:grad chi}
\left| \nabla \chiin \right|\leq C |\log \ep| ^{-1},\qquad \left| \nabla \chiout \right|\leq C |\log \ep| ^{-1}  .
\end{equation}
The point of performing this cut-off is that we choose $\cbulk$ small enough in \eqref{eq:low radii} in such a way that
\beq
	\label{bound F outside}
|\log \ep| |F| \leq \frac{1}{4} g^2,	\quad		 \mbox{in } \B(R_+) \setminus \B(\rbulk).
\eeq
This is an easy consequence of the definition of $F$ \eqref{eq:low pot}: for any $ \rv \in  \B(R_+) \setminus \B(\rbulk) $ one has (recall that $ g $ is decreasing as discussed in the Appendix)
\beq
	\label{est F outside}
 	\lf| F(r) \ri| = \Omega_0 \bigg| \int_{r} ^{R_+} \diff t \: t \: g ^2 (t) \bigg| \leq \half \Omega_0 g^2(r) \lf(R_+^2 - r^2 \ri) \leq \half \cbulk R_+ |\log\eps|^{-1} g^2(r), 
\eeq
so that the above inequality holds true if one picks, e.g., $ \cbulk \leq \half R_+^{-1} $.
Using \eqref{bound F outside} and the trivial bound (the two line computation is in \cite[Lemma 3.4]{CPRY3})
\[
 |\curl (iu,\nabla u)| \leq |\nabla u| ^2,
\]
we see that 
$$\chiout\left( \frac{1}{4}g ^2 |\nabla u| ^2 + |\log \ep| F \curl(iu,\nabla u) \right) \geq 0.$$
This leads to 
\begin{equation}\label{eq:low cut domain 2}
\int_{\B(R_+)} \diff \rv \: \eg (u) \geq \int_{\B(R_+)} \diff \rv \left\{\chiin \left[ \half g ^2 |\nabla u| ^2 + F \curl(iu,\nabla u) \right] + \frac{1}{4}\chiout  g ^2 |\nabla u| ^2 + \displaystyle\frac{g ^4}{\ep ^2} (1-|u| ^2) ^2\right\}.
\end{equation}

\subsection{Evaluating the Individual Energy of Vortices}

We now state two classical results allowing to spot the vortices of the condensate and evaluate their self-energy.
The methods (vortex balls construction and Jacobian estimates) are classical tools by now and we thus skip the proofs. The reader can refer to \cite[Chapter 4 and 6]{SS2} for a self-contained presentation of the techniques and to \cite[Proposition 4.1]{IM1}, \cite[Proposition 4.1]{AAB}, \cite[Proposition 4.2 and 4.3]{CRY} for applications in contexts similar to the present one. The original references are \cite{J,JS2,Sa}.

The first result isolates the possible zeros and phase singularities, i.e., vortices, of $u$ in small balls. It is a consequence of the bound \eqref{eq:low bound F} and uses \eqref{eq:dens BRc}, which makes the reduction to $\B(\Rc)$ mandatory.  

\begin{pro}[\textbf{Vortex ball construction}]\label{pro:vortexballs}
 	\mbox{}\\
	There is a certain $\ep_0>0$ such that, for $\ep \leq \ep_0$ there exists a finite collection $ \{ \B_j \}_{j \in J} := \left\lbrace \B (\avj, \varrho_j)\right\rbrace_{j\in J}$ of disjoint balls with centers $ \avj $ and radii $ \varrho_j $ such that
	\begin{enumerate}
		\item $\left\lbrace \rv \in \B (\Rc): \: \left| |u(\rv)| - 1  \right| > |\log \ep| ^{-5}  \right\rbrace \subset \bigcup_{j\in J}  \B_j$,
		\item  $\displaystyle\sum_{j\in J} \varrho_j = |\log \ep |^{-10} $.		   
	\end{enumerate}
	Setting $d_j:= \dg \{ u, \partial \B_j \} $, if $ \B_j \subset \B (\Rc) $, and $d_j=0$ otherwise, we have the lower bounds
		\begin{equation}\label{eq:lowboundballs}
		 	\int_{\B_j}  \diff \rv \: g ^2 \left|\nabla u\right|^2 \geq \pi |d_j|   g^2 (a_j) \left| \log \ep \right| \left(1-C \frac{\log \left| \log \ep \right|}{\left|\log \ep\right|}\right)
		\end{equation}
	for some fixed constant $C>0$.
\end{pro}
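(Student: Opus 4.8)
The plan is to derive Proposition~\ref{pro:vortexballs} as a weighted version of the by-now-standard vortex ball construction of Jerrard and Sandier \cite{J,Sa} (see \cite[Chapter~4]{SS2}, and, for applications in rotating-condensate settings very close to the present one, \cite[Proposition~4.1]{IM1}, \cite[Proposition~4.1]{AAB} and \cite[Propositions~4.2--4.3]{CRY}), adapted to the slowly varying weight $g^2$. The only inputs needed are the a priori bound \eqref{eq:low bound F} on the weighted Dirichlet and potential energies of $u$ and the lower bound \eqref{eq:dens BRc} on $g^2$ in $\B(\Rc)$; the latter is precisely the reason why the construction has to be confined to $\B(\Rc)$.

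First I would produce the initial covering. From \eqref{eq:low bound F} together with \eqref{eq:dens BRc} one obtains $\int_{\B(\Rc)}(1-|u|^2)^2\le C\ep^2|\log\ep|^4$ and $\int_{\B(\Rc)}|\nabla u|^2\le C|\log\ep|^3$. On the set $S:=\{\rv\in\B(\Rc):\ \bigl|\,|u(\rv)|-1\,\bigr|>|\log\ep|^{-5}\}$ one has $(1-|u|^2)^2\ge\tfrac14|\log\ep|^{-10}$, whence $|S|\le C\ep^2|\log\ep|^{14}$, and a standard covering lemma (see \cite[Chapter~4]{SS2}), using also the bound on $\int|\nabla u|^2$, then encloses $S$ in a finite disjoint family of balls whose radii sum to some $r_0$ with $r_0\le\ep|\log\ep|^{C}$, i.e.\ $r_0=\ep^{1-o(1)}$.

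Then I would run the Sandier ball-growth and merging procedure starting from that family, stopping at the instant when the total radius first reaches $\sigma:=|\log\ep|^{-10}$. This yields the collection $\{\B_j\}_{j\in J}=\{\B(\avj,\varrho_j)\}_{j\in J}$ with $\sum_{j\in J}\varrho_j=\sigma$, which is item~(2); item~(1) holds because the growing balls always contain the initial covering of $S$, and one sets $d_j:=\dg\{u,\partial\B_j\}$ when $\B_j\subset\B(\Rc)$ and $d_j:=0$ otherwise. The weighted version of the growth lemma (as in \cite[Proposition~4.1]{AAB} or \cite[Proposition~4.2]{CRY}, using $\|\nabla g^2\|_\infty\le C$, \eqref{eq:dens BRc}, and the fact that every radius occurring in the construction is $\le\sigma=|\log\ep|^{-10}$, so that $g^2$ varies by a relative factor $1-\OO(|\log\ep|^{-9})$ over each ball) then gives, for every $j\in J$,
\bdm
\int_{\B_j}\tfrac12\,g^2|\nabla u|^2\ \ge\ \pi\,|d_j|\,g^2(a_j)\,\log\!\lf(\frac{\sigma}{r_0}\ri)\,\bigl(1-\OO(|\log\ep|^{-5})\bigr).
\edm
Since $\log(\sigma/r_0)=|\log\ep|-\OO(\log|\log\ep|)=|\log\ep|\bigl(1-\OO(\log|\log\ep|/|\log\ep|)\bigr)$, the lower bound \eqref{eq:lowboundballs} follows.

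Since the method is classical, what remains is bookkeeping; the only points that require some care are carrying the non-constant weight $g^2$ through the successive merging steps --- which is harmless here because $g^2$ is Lipschitz and bounded below by $C|\log\ep|^{-1}$ on $\B(\Rc)$ while every radius that occurs is $\le|\log\ep|^{-10}$ --- and verifying that the accumulated relative error in \eqref{eq:lowboundballs} is controlled by $\log(r_0/\ep)$ and by $\log(1/\sigma)$, each of which contributes only $\OO(\log|\log\ep|/|\log\ep|)$; this, rather than any conceptual difficulty, is the main (and minor) obstacle. For the remaining details I would refer to \cite[Chapter~4]{SS2} and \cite[Propositions~4.2--4.3]{CRY}.
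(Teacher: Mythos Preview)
Your proposal is correct and follows exactly the approach the paper intends: the paper in fact does not give a proof of this proposition at all, stating only that ``the methods (vortex balls construction and Jacobian estimates) are classical tools by now and we thus skip the proofs,'' with references to \cite[Chapter~4]{SS2}, \cite[Proposition~4.1]{IM1}, \cite[Proposition~4.1]{AAB} and \cite[Propositions~4.2--4.3]{CRY}, precisely as you do. Your sketch supplies more detail than the paper --- the covering of the bad set via \eqref{eq:low bound F} and \eqref{eq:dens BRc}, the growth-and-merging step stopped at total radius $|\log\ep|^{-10}$, and the bookkeeping showing that both the slowly varying weight $g^2$ and the choice of initial/final radii contribute only $\OO(\log|\log\ep|/|\log\ep|)$ to the relative error --- and this matches the paper's remark after the proposition that the $\log|\log\ep|$ error has exactly these two origins.
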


Note that the error term $\OO (\log |\log \ep|)$ in \eqref{eq:lowboundballs} has two origins: one is the use of the growth and merging method \cite{Sa,J}. It could probably be improved by using more refined versions of the technique, as presented, e.g., in \cite{SS3}. The other is the approximation of $g^2$ in each vortex ball by its value at the center using \eqref{g point est} and the regularity of $\tfm$. The second error is much smaller than the first one and is thus absorbed in the $\OO (\log |\log \ep|)$ term. It is also much more intrinsic to our setting since it originates from the inhomogeneity of the matter density. 

Remark that there is some freedom in the definition of the vortex balls: we could take any negative power of $|\log \ep|$ at point 1 of the statement above, obtaining ball families such that the sum of the radii would be bounded by an arbitrary negative power of $|\log \ep|$. This would not affect the order of magnitude of the remainder in \eqref{eq:lowboundballs} and would improve some other remainders that will appear in the sequel. Since the error we make in \eqref{eq:lowboundballs} will necessarily be present in the final result, we see no point in over-optimizing other estimates and thus stick to the concrete choice made above (which by the way is the same as in \cite[Proposition 4.1]{IM1}).

We denote 
\begin{equation}\label{eq:low F(u)}
\F [u] : = \int_{\B(R_+)} \diff \rv \lf\{ \frac{1}{2} g^2 |\nabla u| ^2 + \frac{g ^4}{\ep ^2} \left(1-|u|^2 \right) ^2 \ri\}. 
\end{equation}
It is well-known \cite{JS2} that this quantity can be used to control the vorticity measure (recall our normalization choice)
\begin{equation}\label{eq:low vorticity}
\mu = |\log \ep| ^{-1} \curl \left( iu,\nabla u\right), 
\end{equation}
which is the content of the following result.

\begin{pro}[\textbf{Jacobian estimate}]\label{pro:jacest}
 	\mbox{}\\
	Let $\phi$ be any piecewise-$C^1$ test function with compact support 
	\[
	 {\rm supp}(\phi) \subset \B(\rbulk).  
	\]
	Let $\left\lbrace \B_j \right\rbrace_{j\in J} : = \lf\{ \B(\avj,r_j) \ri\}_{j \in J} $ be a collection of disjoint balls as in Proposition \ref{pro:vortexballs}. Setting $d_j:= \dg \{ u, \partial \B_j \} $, if $ \B_j \subset \B(\Rc) $, and $d_j=0$ otherwise, one has
	\begin{equation}\label{eq:JE}
		\bigg|\sum_{j\in J}  2 \pi d_j \phi (\avj)- |\log \ep|\int_{\B(\rbulk)} \phi \:  \mu \bigg| \leq  C |\log \ep| ^{-9} \left\Vert \nabla \phi \right\Vert_{L^{\infty}(\B(\rbulk))}  \F [u]
	\end{equation}	
	where the constant $C$ is independent of $\phi$.
\end{pro}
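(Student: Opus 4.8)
Proposition \ref{pro:jacest} is a weighted version of the Jacobian estimate of Jerrard and Soner \cite{JS2} (see also \cite[Chapter 6]{SS2}, and \cite[Proposition 4.1]{AAB}, \cite[Proposition 4.3]{CRY} for closely related statements in similar settings), and the plan is to reproduce the standard argument while keeping track of the weight $ g^2 $ and of the scale $ \sum_j \varrho_j = |\log\ep|^{-10} $. Since $ \phi $ is compactly supported in $ \B(\rbulk) $, an integration by parts together with \eqref{eq:low vorticity} gives
\[
 |\log\ep| \int_{\B(\rbulk)} \phi \, \mu = \int_{\B(\rbulk)} \phi \, \curl(iu,\nabla u) = - \int_{\B(\rbulk)} \nablap\phi \cdot (iu,\nabla u),
\]
so the task is to compare the last integral with $ 2\pi\sum_j d_j\phi(\avj) $. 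Only the balls $ \B_j $ meeting $ \supp\phi $ contribute, and for $ \ep $ small each of them is contained in $ \B(\Rc) $: indeed $ \Rc - \rbulk = \tx\frac12 \cbulk \Om^{-1} $ is much larger than $ \sum_j\varrho_j = |\log\ep|^{-10} $ (cf.\ \eqref{eq:low radii}), so $ d_j $ is the genuine degree of $ u $ on $ \partial\B_j $ and, by \eqref{eq:dens BRc}, $ g^2 \geq C|\log\ep|^{-1} $ on $ \B_j $.

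I would then run the unweighted part of the argument. Split $ \B(\rbulk) $ into the good region $ \Omega' := \B(\rbulk)\setminus\bigcup_j\B_j $ and $ \bigcup_j\B_j $. On $ \Omega' $, Proposition \ref{pro:vortexballs} gives $ \bigl||u|-1\bigr| \leq |\log\ep|^{-5} $, so $ v := u/|u| $ is a well-defined $ S^1 $-valued $ H^1 $ map there, with $ (iu,\nabla u) = |u|^2 (iv,\nabla v) $, and $ (iv,\nabla v) $ is locally the gradient of a lifting of $ v $, whence $ \curl(iv,\nabla v) = 0 $ in $ \Omega' $. Writing $ (iu,\nabla u) = (iv,\nabla v) + (|u|^2-1)(iv,\nabla v) $ on $ \Omega' $, integrating the first term by parts back (the boundary term on $ \partial\B(\rbulk) $ vanishes since $ \phi = 0 $ there), and using $ \oint_{\partial\B_j}(iv,\nabla v)\cdot\tau\,\diff\sigma = 2\pi d_j $ together with $ \diam\B_j \leq 2|\log\ep|^{-10} $, one recovers the main term $ 2\pi\sum_j d_j\phi(\avj) $ up to errors which the standard analysis controls by $ C\bigl(\sum_j\varrho_j\bigr)\|\nabla\phi\|_{L^\infty}\int_{\B(\Rc)}|\nabla u|^2 $, plus lower order contributions stemming from the smallness of $ |u|^2-1 $ on $ \Omega' $ and of $ \bigl|\bigcup_j\B_j\bigr| \leq \pi\bigl(\sum_j\varrho_j\bigr)^2 $ (one may always take a larger negative power of $ |\log\ep| $ in point 1 of Proposition \ref{pro:vortexballs} to render these negligible). \textbf{The main obstacle is the control of the circumferential integrals} $ \oint_{\partial\B_j}|(iv,\nabla v)|\,\diff\sigma $, since an individual circle $ \partial\B_j $ need not carry a good energy bound; this is exactly the delicate point handled in \cite{JS2,SS2}, by exploiting the freedom in the vortex-balls construction (equivalently a co-area argument over slightly perturbed radii, or a Fubini argument after fattening the balls) so as to bound $ \sum_j\oint_{\partial\B_j}|\nabla u|^2\,\diff\sigma $ by $ C\int_{\B(\Rc)}|\nabla u|^2 $.

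It only remains to absorb the weight. On $ \B(\Rc) $ the lower bound \eqref{eq:dens BRc} gives $ g^2 \geq C|\log\ep|^{-1} $, hence
\[
 \int_{\B(\Rc)} |\nabla u|^2 \leq C|\log\ep|\int_{\B(R_+)} g^2|\nabla u|^2 \leq C|\log\ep|\,\F[u],
\]
the last inequality simply because $ \F[u] \geq \tx\frac12\int_{\B(R_+)} g^2|\nabla u|^2 $ by its very definition. Combined with $ \sum_j\varrho_j = |\log\ep|^{-10} $, this turns the error bound of the previous step into $ C|\log\ep|^{-9}\|\nabla\phi\|_{L^\infty}\,\F[u] $ — the decisive power $ -9 $ being $ -10 $ (from $ \sum_j\varrho_j $) plus $ +1 $ (from the density lower bound \eqref{eq:dens BRc}) — which is the claimed \eqref{eq:JE}.
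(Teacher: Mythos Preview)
Your proposal is correct and follows the standard Jacobian estimate argument of \cite{JS2,SS2}, adapted to the weighted setting via the density lower bound \eqref{eq:dens BRc}; the paper itself does not give a proof but simply refers to these same sources (and to \cite{IM1,AAB,CRY} for analogous applications), so your sketch is precisely what the omitted argument amounts to. Your accounting of the exponent $-9 = -10 + 1$ from $\sum_j\varrho_j = |\log\ep|^{-10}$ and $g^2 \geq C|\log\ep|^{-1}$ is exactly the point.
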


Taken together, these results imply
\begin{multline}\label{eq:low use balls}
\int_{\B(R_+)} \diff \rv \left\{\chiin \left( \frac{1}{2}g ^2 |\nabla u| ^2 + |\log \ep| F \curl \left( iu,\nabla u\right) \right) \right\}	\geq  \frac{1}{2} \int_{\B(R_+) \setminus \cup_{i\in I} \B_i}\diff \rv \: \chiin g ^2 |\nabla u| ^2	\\ + \sum_{j\in J} 2\pi |\log \ep|\chiin(a_j)\left( \half |d_i| g^2 (a_j) + d_i F(a_j)\right) \left(1 -C \frac{\log |\log \ep|}{|\log \ep|} \right) - \OO(|\log\eps| \log|\log\eps|),
\end{multline}
where the bound \eqref{eq:grad chi} has been used. The last remainder terms contains two contributions: the smaller one is due to \eqref{eq:JE} can be easily seen to be of order $ \OO(|\log\eps|^{-7}) $ by using the a priori bound \eqref{eq:low bound F} on $ \F[u] $. The second one is due to the factor $ - C \log|\log\eps| |\log\eps|^{-1} $ multiplying the (negative) vortex energy gain $  \sum 2\pi |\log \ep| d_i F(a_j) $, i.e.,
\bdm	 
	\log|\log \ep| \int_{\B(R_+)} \diff \rv \: F \curl \left( iu,\nabla u\right) = - \Omega_0 \log|\log\eps| \int_{\B(R_+)} \diff \rv \: g^2 \left( iu,\nabla u\right)
\edm
and it can be estimated precisely as in \eqref{eq:low trick}, again with the help of \eqref{eq:low bound F}, yielding an error term of order $ |\log\eps| \log|\log\eps| $. 

The purpose of the next section is to prove a lower bound to the first term of the right-hand side of this inequality. As we will see, it is this term that contains the energy due to the interaction between vortices.

\subsection{Evaluating the Interaction Energy of Vortices}

We now estimate the first term of the right-hand side of \eqref{eq:low use balls}. Following \cite{ABM,R} we first regularize the vorticity measure $\mu$, by introducing the modified current
\begin{equation}\label{eq:low jtilde}
\jt = \begin{cases}
(iu,\nabla u),	& \mbox{in } \B(\Rc) \setminus \bigcup_{j \in J} \B_j \\
0,			& \mbox{otherwise}. 
\end{cases}
\end{equation} 
The regularized vorticity measure $\mut$ is the one naturally associated with $\jt$ (renormalized as in \eqref{eq:intro mu}):
\begin{equation}\label{eq:low mutilde}
\mut := |\log \ep| ^{-1}\curl   \jt.
\end{equation}
It is more regular than the original $\mu$ in the sense that we have removed the singular part of the phase generated by vortices by neglecting in \eqref{eq:low jtilde} the part of the current inside the vortex balls. There is still of course a line singularity along the boundary of the balls because the current $\jt$ goes brutally to $0$ there, but this is much less problematic that the vortices lying inside the balls, which are point singularities.

The following lemma shows that the regularized vorticity measure is close to the original one in the $(C^1_c) ^*$ norm, allowing to control the error term due to the replacement of $ \mu $ by its regularization $ \mut $.

\begin{lem}[\textbf{Approximation by a regularized vorticity measure}]\label{lem:regul vortic}\mbox{}\\
There exists a constant $C>0$ such that, for any piecewise-$C^1$ test function $\phi$ with compact support 
	\[
	 {\rm supp}(\phi) \subset \B(\Rc),
	\] 
we have
\begin{equation}\label{eq:prop mutilde}
\bigg| \int_{\B(\Rc)} \left( \mu - \mut \right)\phi \bigg| \leq C |\log \ep| ^{-8} \left\Vert \nabla \phi\right\Vert_{L ^{\infty} (\B(\Rc))}.
\end{equation}
\end{lem}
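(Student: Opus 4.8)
The plan is to observe that $\mu$ and $\mut$ differ only through the part of the superfluid current lying inside the (very small) vortex balls, and then to bound that contribution crudely using the a priori estimate \eqref{eq:low bound F}. Write $\mathbf{j}:=(iu,\nabla u)$. From \eqref{eq:low jtilde} one has $\mathbf{j}-\jt=\mathbf{j}\,\one_{\cup_{j\in J}\B_j}$ on $\B(\Rc)$, and this field lies in $L^1$ near $\mathrm{supp}(\phi)$ since $|\mathbf{j}|\leq|u|\,|\nabla u|$ with $u=\gpm/g\in H^1_{\mathrm{loc}}$. First I would use the definition of the distributional curl: for a piecewise-$C^1$, compactly supported $\phi$ with $\mathrm{supp}(\phi)\subset\B(\Rc)$,
\[
\int_{\B(\Rc)}(\mu-\mut)\,\phi
=|\log\ep|^{-1}\int_{\B(\Rc)}\phi\,\curl\bigl(\mathbf{j}\,\one_{\cup_j\B_j}\bigr)
=-|\log\ep|^{-1}\int_{\cup_j\B_j}\nablap\phi\cdot\mathbf{j},
\]
whence
\[
\Bigl|\int_{\B(\Rc)}(\mu-\mut)\,\phi\Bigr|
\leq|\log\ep|^{-1}\,\|\nabla\phi\|_{L^\infty(\B(\Rc))}\int_{\cup_j\B_j}|\mathbf{j}|.
\]
The only point deserving care here is that $\mathbf{j}\,\one_{\cup_j\B_j}$ is discontinuous across $\partial\B_j$ and $\phi$ is merely piecewise-$C^1$; but the integration by parts is legitimate for an $L^1_{\mathrm{loc}}$ field tested against a compactly supported Lipschitz function, and since $\mathrm{supp}(\phi)$ is a compact subset of $\B(\Rc)$ there is no contribution from $\partial\B(\Rc)$ nor from vortex balls not meeting $\B(\Rc)$.

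There remains to estimate $\int_{\cup_j\B_j}|\mathbf{j}|$. By $|\mathbf{j}|\leq|u|\,|\nabla u|$ and Cauchy--Schwarz,
\[
\int_{\cup_j\B_j}|\mathbf{j}|
\leq\Bigl(\int_{\cup_j\B_j}|u|^2\Bigr)^{1/2}\Bigl(\int_{\cup_j\B_j}|\nabla u|^2\Bigr)^{1/2}.
\]
For the first factor I would use $|u|^2=|\gpm|^2/g^2$, the uniform bound $\|\gpm\|_{L^\infty}\leq C$ from the Appendix, and the lower bound $g^2\geq C|\log\ep|^{-1}$, which holds on $\B(\Rc)$ by \eqref{eq:dens BRc} (and, for $\ep$ small, on the slightly larger set $\cup_j\B_j$ as well, since the balls have radius $|\log\ep|^{-10}\ll\Om^{-1}$), together with $|\cup_j\B_j|\leq\pi(\sum_j\varrho_j)^2=\pi|\log\ep|^{-20}$ from Proposition \ref{pro:vortexballs}; this gives $\int_{\cup_j\B_j}|u|^2\leq C|\log\ep|^{-19}$. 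For the second factor I would write $|\nabla u|^2=g^{-2}\,g^2|\nabla u|^2$, use the same lower bound on $g^2$ and the global a priori bound \eqref{eq:low bound F} to get $\int_{\cup_j\B_j}|\nabla u|^2\leq C|\log\ep|\int_{\R^2}g^2|\nabla u|^2\leq C|\log\ep|^3$. Combining, $\int_{\cup_j\B_j}|\mathbf{j}|\leq C|\log\ep|^{-8}$, and plugging into the previous display yields $|\int_{\B(\Rc)}(\mu-\mut)\phi|\leq C|\log\ep|^{-9}\|\nabla\phi\|_{L^\infty(\B(\Rc))}$, which is in fact slightly stronger than \eqref{eq:prop mutilde}.

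I do not expect a genuine obstacle here: the whole argument is a bookkeeping of powers of $|\log\ep|$, the essential mechanism being that the vortex balls have total area a large negative power of $|\log\ep|$, which comfortably dominates the at most polynomial-in-$|\log\ep|$ growth of $|u|$ and $|\nabla u|$ inside them. The only mildly delicate step, already flagged above, is making the integration by parts rigorous against the discontinuous field $\mathbf{j}\,\one_{\cup_j\B_j}$ and checking that, on a neighbourhood of $\mathrm{supp}(\phi)$, one indeed has $\mathbf{j}-\jt=\mathbf{j}\,\one_{\cup_j\B_j}$ with no spurious boundary terms.
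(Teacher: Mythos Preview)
Your argument is correct and essentially identical to the paper's: both reduce to bounding $\int_{\cup_j\B_j}|u||\nabla u|$ using the smallness of the total ball area, the lower bound $g^2\geq C|\log\ep|^{-1}$ on $\B(\Rc)$, the a priori bound \eqref{eq:low bound F}, and the uniform bound on $|\gpm|$. The only cosmetic difference is that the paper applies Young's inequality $g^2|u||\nabla u|\leq \delta g^2|u|^2+\delta^{-1}g^2|\nabla u|^2$ and then optimizes $\delta$, whereas you use Cauchy--Schwarz directly; these are equivalent and yield the same power of $|\log\ep|$ (indeed $|\log\ep|^{-9}$, slightly better than the stated $|\log\ep|^{-8}$).
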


\begin{proof}
By the definitions \eqref{eq:low vorticity} and \eqref{eq:low mutilde} 
\[
\bigg| \int_{\B(\Rc)} \left( \mu - \mut \right)\phi \bigg| = |\log \ep| ^{-1} \bigg| \int_{\B(\Rc)} \diff \rv \: \nablap \phi \cdot \left(\jv -\jt \right)\bigg| = |\log \ep| ^{-1}  \bigg| \int_{\cup_{j\in J} \B_j} \diff \rv \: \nablap \phi \cdot \jv \bigg|.
\]
Then, using \eqref{eq:dens BRc},
\begin{eqnarray*}
\bigg| \int_{\cup_{j\in J} \B_j}  \diff \rv \: \nablap \phi \cdot \jv \bigg| &\leq&  \left\Vert \nabla \phi \right\Vert_{L ^{\infty} (\B(\Rc))} \int_{\cup_{j\in J} \B_j} \diff \rv \: |u||\nabla u| \\
&\leq& C|\log \ep|  \left\Vert \nabla \phi \right\Vert_{L ^{\infty} (\B(\Rc))} \int_{\cup_{j\in J} \B_j}  \diff \rv \: g ^2 |u||\nabla u|\\
&\leq& C|\log \ep| \left\Vert \nabla \phi \right\Vert_{L ^{\infty} (\B(\Rc))} \bigg( \delta \int_{\cup_{j\in J} \B_j}  \diff \rv \: g^2 |u| ^2 +\delta^{-1} \int_{\R ^2}  \diff \rv \:  g^2 |\nabla u| ^2 \bigg)
\end{eqnarray*}
for some $\delta >0$. We now use the upper bound $g^2|u|^2 = |\gpm| ^2 \leq C$ (see Proposition \ref{pro:gpm pointwise}) and the upper bound on $\sum_j \varrho_j$ of Proposition \ref{pro:vortexballs}, 
to obtain
\[
\int_{\cup_{j\in J} \B_j}  \diff \rv \: g^2 |u| ^2 \leq C |\log \ep| ^{-20}. 
\]
Recalling the bound \eqref{eq:low bound F}, we also have
\[
\int_{\R ^2}  \diff \rv \: g ^2 |\nabla u| ^2 \leq C |\log \ep| ^2 
\]
and, choosing $\delta = |\log \ep| ^{11}$, we conclude the proof.
\end{proof}

We now define $h_{\mut}$ as the unique solution to 
\begin{equation}\label{eq:defi hmut}
\begin{cases}
		-\nabla \left( \frac{1}{g^2} \nabla  h_{\mut}\right) = \mut,	& 	\mbox{in } \B(\rbulk),	\\
		h_{\mut} = 0,								&	\mbox{on } \partial \B(\rbulk),
	\end{cases}  
\end{equation}
and claim that the following holds:

\begin{lem}[\textbf{Vortex interaction energy}]\label{lem:low int estimate}\mbox{}\\
Let $\left\lbrace \B_j \right\rbrace_{j\in J} : = \lf\{ \B(\avj,r_j) \ri\}_{j \in J} $ be a collection of disjoint balls as in Proposition \ref{pro:vortexballs}. We have
\begin{equation}\label{eq:low int energy}
\int_{\B(R_+) \setminus \cup_{j\in J} \B_j}  \diff \rv \: \chiin g ^2 |\nabla u| ^2 \geq \left(1-C|\log \ep|^{-5}\right)|\log \ep| ^2\int_{\B(\rbulk)}  \diff \rv \:  \frac{1}{g ^2} |\nabla h_{\mut}| ^2
\end{equation}
for some given constant $C>0$.
\end{lem}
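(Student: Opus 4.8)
The plan is to run the classical orthogonality argument for the interaction energy (in the spirit of \cite{ABM,R}), whose crucial feature is that the Dirichlet condition imposed on $h_{\mut}$ in \eqref{eq:defi hmut} kills a cross term; this is exactly the point stressed in Remark \ref{rem:dirichlet bc}.

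First I would discard the region where $\chiin<1$. Since the integrand $\chiin g^2|\nabla u|^2$ is nonnegative, since $\chiin\equiv1$ on $\B(\rbulk)$ by \eqref{eq:low chiinout}, and since $\B(\rbulk)\setminus\cup_{j}\B_j\subset\B(R_+)\setminus\cup_{j}\B_j$, the left-hand side is bounded below by $\int_{\B(\rbulk)\setminus\cup_{j}\B_j}g^2|\nabla u|^2$. On this set, which is contained in $\B(\Rc)\setminus\cup_j\B_j$, Proposition \ref{pro:vortexballs}(1) gives $\big|\,|u|-1\,\big|\leq|\log\ep|^{-5}$, so writing $u=|u|e^{i\varphi}$ and using $(iu,\nabla u)=|u|^2\nabla\varphi$ one gets $|\nabla u|^2\geq|u|^2|\nabla\varphi|^2=|(iu,\nabla u)|^2/|u|^2\geq(1-C|\log\ep|^{-5})\,|(iu,\nabla u)|^2$. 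Since $\jt=(iu,\nabla u)$ outside the vortex balls and vanishes inside them, this reduces the problem to proving
\[
\int_{\B(\rbulk)}g^2|\jt|^2\ \geq\ |\log\ep|^2\int_{\B(\rbulk)}\frac{1}{g^2}|\nabla h_{\mut}|^2.
\]

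For this I would introduce $\mathbf V:=\jt+|\log\ep|\,g^{-2}\nablap h_{\mut}$ on $\B(\rbulk)$. A direct computation using $\curl(g^{-2}\nablap h_{\mut})=\nabla\cdot(g^{-2}\nabla h_{\mut})=-\mut$ (by \eqref{eq:defi hmut}) together with $\curl\jt=|\log\ep|\,\mut$ (by \eqref{eq:low mutilde}) shows $\curl\mathbf V=0$ in $\mathcal D'(\B(\rbulk))$; the line singularities of $\jt$ across the boundaries $\partial\B_j$ and those of $g^{-2}\nablap h_{\mut}$ cancel, which is precisely why one must work with $\mathbf V$ rather than with $\jt$ alone. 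Checking that $\mathbf V\in L^2(\B(\rbulk))$ — using that $g$ is bounded above and, by \eqref{eq:dens BRc}, below on $\B(\Rc)$, plus the a priori bound \eqref{eq:low bound F} and the boundedness of $|\gpm|$ — and that $\B(\rbulk)$ is simply connected, I conclude $\mathbf V=\nabla\psi$ for some $\psi\in H^1(\B(\rbulk))$. Expanding the square then gives
\[
\int_{\B(\rbulk)}g^2|\jt|^2=\int_{\B(\rbulk)}g^2|\nabla\psi|^2-2|\log\ep|\int_{\B(\rbulk)}\nabla\psi\cdot\nablap h_{\mut}+|\log\ep|^2\int_{\B(\rbulk)}\frac{1}{g^2}|\nabla h_{\mut}|^2,
\]
and the cross term vanishes because $\nabla\psi\cdot\nablap h_{\mut}=\nabla\cdot(\psi\,\nablap h_{\mut})$, so by the divergence theorem it equals $\int_{\partial\B(\rbulk)}\psi\,\partial_\tau h_{\mut}\,\diff\sigma=0$ since $h_{\mut}=0$ on $\partial\B(\rbulk)$. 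Dropping the nonnegative term $\int g^2|\nabla\psi|^2$ and assembling the two reductions above finishes the proof.

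I expect the only delicate points to be the justification that $\mathbf V$ is a genuine $L^2$ gradient even though $\jt$ jumps across each $\partial\B_j$ — resolved by the cancellation noted above — and the rigor of the integration by parts for the cross term with the limited regularity of $\psi$ and $h_{\mut}$, which is handled by a standard density argument (approximating $h_{\mut}\in H^1_0(\B(\rbulk))$ by $C^\infty_c$ functions). Everything else is routine bookkeeping of the various small powers of $|\log\ep|$.
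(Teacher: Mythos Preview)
Your proof is correct and follows essentially the same argument as the paper: restrict to $\B(\rbulk)$ where $\chiin=1$, use $\big||u|-1\big|\leq|\log\ep|^{-5}$ outside the vortex balls to pass to $|(iu,\nabla u)|^2$, then write $\jt + |\log\ep|\,g^{-2}\nablap h_{\mut}$ as a gradient (curl-free on a simply connected domain), expand the square, and kill the cross term via the Dirichlet condition on $h_{\mut}$. The only cosmetic differences are that the paper bounds $|\nabla u|^2\geq(1-C|\log\ep|^{-5})|u|^2|\nabla u|^2\geq(1-C|\log\ep|^{-5})|(iu,\nabla u)|^2$ directly (without writing $u=|u|e^{i\varphi}$), and it does not dwell on the distributional cancellation of the line singularities, simply stating that the curl vanishes in $H^{-1}$ by construction.
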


\begin{proof}
The proof is the same as in \cite[Lemma 3.3]{R}, only a bit simpler because we work in a simply connected geometry. We recall that $\chiin = 1$ in $\B(\rbulk) \setminus \cup_{j\in J} \B_j$ and $|u|$ is close to $1$ there, according to point 1 in Proposition \ref{pro:vortexballs}. Thus  
\begin{eqnarray*}
\int _{\B(R_+) \setminus \cup_{j\in J} \B_j}  \diff \rv \: \chiin g^2 |\nabla u| ^2 &\geq& \int _{\B(\rbulk)  \setminus \cup_{j\in J} \B_j}  \diff \rv \: g^2 |\nabla u| ^2 \nonumber \\
&\geq& \left( 1 - C |\log \ep| ^{-5}\right) \int _{\B(\rbulk)  \setminus \cup_{j\in J} \B_j}  \diff \rv \: g^2 |u|^2 |\nabla u | ^2 \nonumber \\
&\geq & \left( 1 - C |\log \ep| ^{-5}\right)\int _{\B(\rbulk)  \setminus \cup_{j\in J} \B_j}  \diff \rv \: g^2 \left| (iu,\nabla u) \right| ^2 \nonumber \\
&=& \left( 1 - C |\log \ep| ^{-5}\right)\int _{\B(\rbulk)}  \diff \rv \: g^2 |  \jt | ^2.  
\end{eqnarray*}
By definition
\[
\curl \left(|\log \ep| ^{-1}\jt + \frac{1}{g^2} \nablap h _{\mut}\right) = 0,	\quad \mbox{in } H^{-1}\left(\B(\rbulk)\right),
\]
so that there exists $f\in H^{1} (\B(\rbulk))$ satisfying
\[
\jt = - |\log \ep| \frac{1}{g ^2} \nablap h_{\mut} + \nabla f.
\]
Using the fact that $h_{\mut}$ is constant on $ \partial \B (\rbulk)$, we have 
$$\int_{\B(\rbulk)} \diff \rv \:  \nablap h_{\mut} \cdot \nabla f = 0$$
by Stokes' formula, and thus
\[
\int_{\B(\rbulk)}  \diff \rv \: g ^2 |\jt |^2 = |\log \ep| ^2 \int_{\B(\rbulk)}  \diff \rv \: \frac{1}{g ^2 } |\nabla h_{\mut}| ^2 + \int_{\B(\rbulk)}  \diff \rv \: g^2 |\nabla f| ^2
\] 
and there only remains to drop the last positive term to conclude the proof.
\end{proof}

\subsection{Completion of the Lower Bound Proof}

Injecting \eqref{eq:low int energy} in \eqref{eq:low use balls}, we have 
\begin{multline*}
\int_{\B(R_+)} \diff \rv  \left\{\chiin \left( \half g ^2 |\nabla u| ^2 + |\log \ep| F \curl(iu,\nabla u) \right) \right\}\geq \left(1-C|\log \ep|^{-5}\right) |\log \ep| ^2 \int_{\B(\rbulk)}  \diff \rv \: \frac{1}{2g ^2} |\nabla h_{\mut}| ^2
\\ + \sum_{j\in J} 2\pi |\log \ep|\chiin(a_j)\left( \half  |d_i|g^2 (a_i) + d_i F(a_i)\right) \left(1 -C \frac{\log |\log \ep|}{|\log \ep|} \right) - \OO(|\log\eps| \log|\log\eps|)
\end{multline*}
and because of Proposition \ref{pro:g point est} we can replace $g^2$ and $F$ by $\tfm$ and $\tfpot$ respectively and absorb the new remainder in the old ones. Using in addition Proposition \ref{pro:jacest} we can also replace the sum over vortex balls and obtain  
\begin{multline}\int_{\B(R_+)} \diff \rv \left\{\chiin \left( \frac{1}{2}g^2 |\nabla u| ^2 + |\log \ep| F \curl(iu,\nabla u) \right) \right\}\geq \left(1-C|\log \ep|^{-5}\right) |\log \ep| ^2 \int_{\B(\rbulk)}  \diff \rv \: \frac{1}{2\tfm} |\nabla h_{\mut}| ^2
\\ + |\log \ep| ^2 \left(1 -C \frac{\log |\log \ep|}{|\log \ep|} \right) \int_{\B(\Rc)}  \chiin \left( \half \tfm  |\mu| +  \tfpot \mu \right) - \OO(|\log\eps| \log|\log\eps|).
\end{multline}
We can apply Lemma \ref{lem:regul vortic} with either $\phi = \chiin \tfm$ or $\phi = \chiin \tfpot$, to deduce
\begin{multline}
\int_{\B(R_+)} \diff \rv \: \chiin \left( \frac{1}{2}g ^2 |\nabla u| ^2 + |\log \ep| F \curl(iu,\nabla u) \right) \geq  
\left(1-C|\log \ep|^{-5}\right)  |\log \ep| ^2 \int_{\B(\rbulk)} \diff \rv \: \frac{1}{2\tfm} |\nabla h_{\mut}| ^2 \\
+  |\log \ep| ^2  \left(1 -C \frac{\log |\log \ep|}{|\log \ep|} \right) \int_{\B(\Rc)} \chiin  \left( \half \tfm  |\mut| +  \tfpot \mut \right) - \OO(|\log\eps| \log|\log\eps|).
\end{multline}
Now the definitions \eqref{eq:low pot} and \eqref{eq:low radii} guarantees that inside $\B(\Rc)\setminus \B(\rbulk)$
\[
|F| \leq \frac{1}{2} g ^2 
\]
by the same computation as in \eqref{est F outside}, and we can thus drop one more positive term  to arrive at (recall that $\chiin = 1$ in $\B(\rbulk)$)
\begin{multline}\label{eq:lowbound final-}
\int_{\B(R_+)}  \diff \rv \: \chiin \left( \frac{1}{2}g ^2 |\nabla u| ^2 + |\log \ep| F \curl(iu,\nabla u) \right) \geq  \\
|\log \ep| ^2  \left(1 -C \frac{\log |\log \ep|}{|\log \ep|} \right) \int_{\B(\rbulk)}  \lf( \frac{1}{2\tfm} |\nabla h_{\mut}| ^2 
+   \half \tfm  |\mut| +  \tfpot \mut   \ri) - \OO(|\log\eps| \log|\log\eps|).
\end{multline}
Gathering \eqref{eq:low decouple}, \eqref{eq:low cut domain 1}, \eqref{eq:low cut domain 2} and the above inequality, we thus conclude
\begin{multline}\label{eq:lowbound final}
\gpe \geq \hgpe +  |\log \ep| ^2 \left(1 -C \frac{\log |\log \ep|}{|\log \ep|} \right)  \int_{\B(\rbulk)} \left(\frac{1}{2\tfm} |\nabla h_{\mut}| ^2 
+  \frac{1}{2} \tfm  |\mut| +  \tfpot \mut   \right) \\	- \OO(|\log\eps| \log|\log\eps|)
\end{multline}
and one can recognize the renormalized energy $\tfIc$ of the regularized vorticity measure $\mut$ on the right-hand side, except that the domain is $\B(\rbulk)$ instead of $\B(\tfr)$. However it is clear from Theorem \ref{theo:renorm energy} that  $\B(\rrs) \subset \B(\rbulk)$ for any fixed $\Omega_0$ and $\ep$ small enough since $\rrs$ does not depend on $\ep$ and $\rbulk \to \tfr$ when $\ep \to 0$. We can thus invoke Remark \ref{sec:limit problem}.\ref{rem:domain} to deduce that the infimum of the renormalized energy appearing on the right-hand side of \eqref{eq:lowbound final} (taken with respect to $\mut$) is precisely $|\log \ep|^2\tfI$, so that one has 
\[
\gpe \geq \hgpe +  \tfI |\log \ep| ^2 - \OO\lf( |\log \ep| \log |\log \ep|\ri),
\]
which concludes the proof of Proposition \ref{pro:lower bound} and thus of Theorem \ref{teo:gse asympt}.


\subsection{Convergence of the Vorticity Measure}

It is now a short way to the proof of Theorem~\ref{teo:vorticity}. We apply the stability estimate \eqref{eq:stability I} to the renormalized energy appearing in \eqref{eq:lowbound final}, with integration domain restricted to $\B(\rbulk)$. Since such an energy has the same infimum and minimizer as $\tfIc$ (see again Remark \ref{sec:limit problem}.\ref{rem:domain}),
\[
\gpe \geq \hgpe +   |\log \ep| ^2\left( \tfI + \int_{\B(\Rc)} \frac{1}{2\tfm} |\nabla h_{\mut-\musta}| ^2 \right) \left(1 -C \frac{\log |\log \ep|}{|\log \ep|} \right) -\OO(\ep ^{\infty})
\]
where for any measure $\nu$ we denote now by $h_{\nu}$ the solution of
\beq
	\label{hnu}
\begin{cases}
-\nabla \lf( \frac{1}{\tfm} \nabla h_{\nu} \ri) = \nu,	&	\mbox{in } \B (\rbulk), \\
h_{\nu} = 0,							& 	\mbox{on } \dd \B(\rbulk).
\end{cases}
\eeq
Combining this with the energy upper bound of Proposition \ref{upper bound: pro}, we obtain
\begin{equation}
	\label{h1 convergence}
C \frac{\log |\log \ep|}{|\log \ep|^{1/2}}  \geq \int_{\B(\Rc)} \frac{1}{\tfm} |\nabla h_{\mut-\musta}| ^2 
\end{equation}
which clearly implies that
\begin{equation}\label{eq:estim mut}
 \sup_{\phi \in C^1_c (\B(\rbulk))} \frac{\bigg| \displaystyle\int_{\B(\rbulk)}  \left(\mut - \musta \right)\phi\bigg|}{\bigg(\displaystyle\int_{\B(\rbulk)}  \frac{1}{\tfm} |\nabla \phi| ^2\bigg) ^{1/2}} \leq C \left(\frac{\log |\log \ep|}{|\log \ep| ^{1/2}}\right) ^{1/2}.
\end{equation}
Indeed the estimate \eqref{h1 convergence} together with \eqref{hnu} imply that, for any differentiable function $ \phi $ with compact support strictly contained in $ \B(\Rc) $,
\bml{
 	\bigg| \displaystyle\int_{\B(\rbulk)}  \left(\mut - \musta \right)\phi\bigg| = \bigg|\displaystyle\int_{\B(\rbulk)}  \frac{1}{\tfm} \nabla h_{\mut - \musta} \cdot \nabla \phi \bigg|	\\
	 \leq C  \left(\frac{\log |\log \ep|}{|\log \ep| ^{1/2}}\right) ^{1/2} \bigg(\displaystyle\int_{\B(\rbulk)}  \frac{1}{\tfm} |\nabla \phi| ^2\bigg) ^{1/2},
}
by the Cauchy-Schwarz inequality. Note that we have also vindicated the claim contained in Remark \ref{sec:results}.\ref{rem:norm} that the regularized vorticity measure can be estimated in a better norm that $\mu$. 

Finally, combining \eqref{eq:estim mut} with Lemma \ref{lem:regul vortic}, we obtain
\[
 \sup_{\phi \in C^1_c (\B(\rbulk))} \frac{\bigg| \displaystyle\int_{\B(\rbulk)}  \left(\mu - \musta\right) \phi\bigg|}{\bigg(\displaystyle\int_{\B (\rbulk)} \frac{1}{\tfm} |\nabla \phi| ^2\bigg)^{1/2} + \left\Vert \nabla \phi \right\Vert_{L^{\infty}(\B(\rbulk)}} \leq C \left(\frac{\log |\log \ep|}{|\log \ep| ^{1/2}}\right) ^{1/2}
\]
where $\mu$ is defined in \eqref{eq:intro mu}.

\section*{Appendix}
\addcontentsline{toc}{section}{Appendix}

\renewcommand{\theequation}{A.\arabic{equation}}
\setcounter{equation}{0}
\setcounter{subsection}{0}
\renewcommand{\thesection}{A}

In this Appendix we collect some useful but technical estimates. Most of them have already been proven in very similar contexts so, for the sake of brevity, we keep the proofs as short as possible, referring to other papers for full details.

We start by investigating the properties of $ g $, i.e., the minimizer of $ \hgpf $. By standard arguments one can show that $ g $ is unique, smooth, positive, radial and decreasing and it solves the variational equation
\beq
	\label{g var eq}
	- g^{\prime\prime} - r^{-1} g^{\prime} + r^s g + 2 \eps^{-2} g^3 = \eps^{-2} \hchem g,
\eeq
for $ r \in \R^+ $. 

The evaluation of $ \hgpf $ on a suitable regularization of $ \sqrt{\tfm} $ as well as the trivial lower bound $ \hgpe \geq \tfe $ yield the estimate
\beq
	\label{hgpe asympt}
	\hgpe = \tfe + \OO(|\log\eps|),
\eeq
which in turn implies the convergence of $ g^2 $ to $ \tfm $ and a similar estimate of the difference between the chemical potentials:
\beq
	\label{g l2 est}
	\lf\| g^2 - \tfm \ri\|_{L^2(\R^2)} \leq \OO(\eps\sqrt{|\log\eps|}),	\qquad	\lf| \hchem - \tfchem \ri| \leq \OO(\eps\sqrt{|\log\eps|}).
\eeq
A very simple investigation of the variational equation \eqref{g var eq} shows that the maximum of $ g $ is bounded by the chemical potential $ \hchem $ and thus by a constant, i.e., 
\beq
	\label{g sup}
	\lf\| g \ri\|_{L^{\infty}(\R^2)} \leq \OO(1).
\eeq

The next result is an improvement of the estimate of the difference $ g^2 - \tfm $:

\begin{pro}[\textbf{Pointwise estimate of $ g^2 $}]
	\label{pro:g point est}
	\mbox{}	\\
	Let $ \Omega = \Om_0 |\log \ep|$ with $ \Omega_0 >0 $ fixed, then
	\beq
		\label{g point est}
		\sup_{r \in [0, \rtf - \eps^{2/3} |\log\eps|^{2/3}]} \lf| g^2(r) - \tfm(r) \ri| \leq \OO(\eps^{2/3} |\log\eps|^{2/3}).
	\eeq
\end{pro}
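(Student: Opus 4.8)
The plan is to reduce the statement, through the variational equation \eqref{g var eq}, to a pointwise bound on the quantum-pressure term of $g$, and then to obtain that bound by a comparison argument confined to the bulk of the condensate. Dividing \eqref{g var eq} by $g>0$ and using that $\tfm=\tfrac12(\tfchem-r^s)$ on $\{r^s<\tfchem\}$ (where the positive part in \eqref{tfm} is inactive), one arrives at an identity of the form
\[
g^2(r)-\tfm(r)=\tfrac12\bigl(\hchem-\tfchem\bigr)+c\,\eps^2\,\frac{g''(r)+r^{-1}g'(r)}{g(r)}
\]
for a fixed constant $c$. By \eqref{g l2 est} the first term on the right is $\OO(\eps\sqrt{|\log\eps|})$, far below the claimed $\eps^{2/3}|\log\eps|^{2/3}$, so everything reduces to controlling
\[
\sup_{0\le r\le\rtf-\eps^{2/3}|\log\eps|^{2/3}}\ \eps^2\,\frac{\bigl|g''(r)+r^{-1}g'(r)\bigr|}{g(r)}.
\]

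Two ingredients are then needed. First, a pointwise lower bound $g^2(r)\ge c'\,\eps^{2/3}|\log\eps|^{2/3}$ on the bulk interval: since $\tfm$ is decreasing and vanishes \emph{linearly} at $\rtf$ (recall $\tfm'(\rtf)=-\tfrac{s}{2}\rtf^{s-1}\neq0$), we have $\tfm(r)\ge\tfm\bigl(\rtf-\eps^{2/3}|\log\eps|^{2/3}\bigr)\ge c'\,\eps^{2/3}|\log\eps|^{2/3}$ there, so it suffices to know $g^2\ge\tfrac12\tfm$ on the bulk. This follows from a maximum-principle comparison of $g$ with a down-shifted rescaled Thomas--Fermi profile $\bigl(\tfrac12\bigl[\tfchem-\delta-r^s\bigr]_+\bigr)^{1/2}$, with $\delta$ of order $\eps^{2/3}|\log\eps|^{2/3}$ and a small enough constant; as usual the lower comparison needs a little care near the edge of the support of this profile, handled by running it in stages as in the references cited at the start of the Appendix. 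Second, an $L^\infty$ bound showing $\eps^2|g''+r^{-1}g'|=\OO(\eps)$ on the bulk: one rescales \eqref{g var eq} around any bulk point $r$ on a ball of radius comparable to the local healing length $\sim\eps/\sqrt{\tfm(r)}$, which in the bulk is $\lesssim\eps^{2/3}|\log\eps|^{-1/3}$ and hence much smaller than the distance $\eps^{2/3}|\log\eps|^{2/3}$ to $\rtf$, so the rescaled ball sits where $g^2$ is comparable to $\tfm(r)$, and then applies standard interior elliptic ($W^{2,p}$ and Sobolev) estimates together with $\|g\|_\infty\le\OO(1)$ and $\|g^2-\tfm\|_{L^2}\le\OO(\eps\sqrt{|\log\eps|})$. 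Dividing the second bound by the first and inserting the result in the identity above completes the proof; alternatively one may bypass the identity and directly sandwich $g^2$ between $\tfrac12\bigl[\tfchem\pm C\eps^{2/3}|\log\eps|^{2/3}-r^s\bigr]_+$ by the same comparison principle, which on the bulk already yields $|g^2-\tfm|\le\OO(\eps^{2/3}|\log\eps|^{2/3})$.

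The one genuinely delicate point — and the reason the power $\eps^{2/3}|\log\eps|^{2/3}$ appears in the statement — is the bulk/boundary-layer separation. Since \eqref{g var eq} only relates $\eps^2\Delta g/g$ back to $g^2-\tfm$ (up to $\hchem-\tfchem$), a naive iteration in $\sup|g^2-\tfm|$ is circular; one genuinely needs the independent second-derivative input, which is licit only at a distance from the Thomas--Fermi edge much larger than the local healing length. This forces the removal of a boundary layer of width $\eps^{2/3}|\log\eps|^{2/3}$: the scale $\eps^{2/3}$ is where the healing length catches up with the distance to $\rtf$, while the extra $|\log\eps|$ is just a comfortable safety margin (and, as $g$ is independent of $\Omega$, it is purely a technical convenience suited to the rest of the paper, not a rotation effect). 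The lower bound of the first ingredient, together with the preliminary facts \eqref{hgpe asympt}--\eqref{g sup}, are standard and can be taken essentially verbatim from \cite{CPRY3,CRY,R}.
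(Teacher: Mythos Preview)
Your main line of argument does not close. The identity you write down is literally the variational equation divided by $g$, so $c\,\eps^2\,\Delta g/g$ \emph{equals} $g^2-\tilde\rho$ (with $\tilde\rho=\tfrac12(\hchem-r^s)$); any bound on the former is exactly a bound on the latter, and there is nothing ``independent'' to feed in. Your proposed source for the second ingredient---rescaling on the healing length and interior $W^{2,p}$ estimates---only produces $|\Delta_y\hat g|\le C$, i.e.\ $\eps^2|\Delta g|=\OO(\tfm(r_0))$ after undoing the rescaling, which when divided by the lower bound on $g$ just reproduces the trivial $|g^2-\tfm|=\OO(\tfm)$. The global bound $\|g^2-\tfm\|_{L^2}=\OO(\eps\sqrt{|\log\eps|})$ does not localize to a pointwise $\OO(\eps)$ bound on $\eps^2\Delta g$ either. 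In fact the claim $\eps^2|\Delta g|=\OO(\eps)$ is generically false in the bulk: once the result is known one has $\eps^2|\Delta g|=2|g^2-\tilde\rho|\,g=\OO(\eps^{2/3}|\log\eps|^{2/3})$, not $\OO(\eps)$. You correctly diagnose the circularity in your last paragraph, but the elliptic input you invoke does not break it.

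What does work is precisely your ``alternative'', and that is the paper's route. One fixes $r_0$ in the bulk, works on $[r_0-\delta,r_0+\delta]$ with $\delta=\tfrac12\eps^{2/3}|\log\eps|^{2/3}$, and constructs explicit sub- and super-solutions to \eqref{g var eq} there (as in \cite{AS,Ser,CRY}), yielding
\[
\tilde\rho(r_0+\delta)\bigl(1+\OO(\eps^\infty)\bigr)\le g^2(r_0)\le \tilde\rho(r_0-\delta)\bigl(1+\OO(\eps^\infty)\bigr).
\]
The choice of $\delta$ is forced by the requirement that $\delta$ exceed the local healing length by a factor $\gg|\log\eps|$ (so the barrier tails are $\OO(\eps^\infty)$) at the worst point, namely the inner edge of the excluded layer where $\tfm\sim\eps^{2/3}|\log\eps|^{2/3}$. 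The Lipschitz bound on $\tilde\rho$ then gives $|g^2(r_0)-\tilde\rho(r_0)|=\OO(\delta)$, and \eqref{g l2 est} passes from $\tilde\rho$ to $\tfm$. The crucial point---absent from your main line---is that the barriers are engineered so that the comparison error is $\OO(\eps^\infty)$ in the interior of the interval, not merely $\OO(\tfm)$; this is the non-circular input. Your ingredient~(a) is the sub-solution half of this; the matching super-solution, not an elliptic estimate, supplies the other half.
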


\begin{proof}
	The result has already been proven in a slightly different setting in \cite[Proposition 2.6]{CRY} (see also \cite{AAB,IM1} for similar results) but we repeat the crucial steps for convenience of the reader.
	
	The result is obtained through a local analysis of the variational equation \eqref{g var eq}, which can be rewritten in the form $ - \Delta g = 2 \eps^{-2} (\tilde{\rho} - g^2) g $, with
	\beq
		\tilde\rho(r) : = \half \lf[ \hchem - r^s \ri].
	\eeq
	Note that in the ball $ \B(\rtf - \frac{1}{2} \eps^{2/3} |\log\eps|^{2/3}) $ 
	one has the lower bound $ \tilde\rho(r) \geq C \eps^{2/3} |\log\eps|^{2/3} $, 
	since
	\beq
		\label{density diff}
		\lf|\tilde\rho(r) - \tfm(r)\ri| \leq \OO(\eps\sqrt{|\log\eps|}) \ll \tfm(r) = \OO(\eps^{2/3} |\log\eps|^{2/3}),	\quad	\mbox{if } r \leq \rtf - \half \eps^{2/3} |\log\eps|^{2/3},
	\eeq
	by \eqref{g l2 est}. 
	
	Now we consider an interval $ [r_0 - \delta, r_0 + \delta] $ with
	\bdm
		\delta = \half \eps^{2/3} |\log\eps|^{2/3},
	\edm
	for some $ \half \eps^{2/3} |\log\eps|^{2/3} \leq r_0 \leq \rtf - \eps^{2/3} |\log\eps|^{2/3} $. In this region it is possible to construct explicit super- (see \cite[Eq. (2.36)]{CRY} or \cite[Proof of Proposition 2.1]{AS}) and sub-solutions (see \cite[Eq. (2.47)]{CRY} or \cite{Ser}) to \eqref{g var eq}, which provide the estimates
	\bdm
		\tilde\rho\lf(r_0 + \delta \ri) \lf( 1 + \OO(\eps^{\infty}) \ri) \leq g^2(r_0) \leq \tilde\rho\lf(r_0 -  \delta \ri) \lf( 1 + \OO(\eps^{\infty}) \ri),
	\edm
	where we have also exploited the fact that $ \tilde{\rho} $ is decreasing. It remains now to replace $ \tilde\rho(r_0 \pm\delta) $ with $ \tilde\rho(r_0) $, which yields an error of order $ \eps^{2/3} |\log\eps|^{2/3} $ and use \eqref{density diff}. 
\end{proof}

Another important property of $ g $ is its exponential decay outside the support of $ \tfm $, i.e., for $ r > \rtf $:

\begin{pro}[\textbf{Exponential decay of $ g $}]
	\label{g exp decay: pro}
	\mbox{}	\\
	Let $ \Omega $ be as in Proposition \ref{pro:g point est}, then there exists two constants $ c > 0 $ and $ C < \infty $ such that for any $ r \geq \rtf + \eps^{2/3} $
	\beq
		\label{g exp decay}
		g^2(r) \leq C \eps^{2/3} |\log\eps|^{2/3} \exp \lf\{ - \frac{c(r - \rtf)}{\eps^{2/3}} \ri\}.
	\eeq
\end{pro}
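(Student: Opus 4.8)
The plan is to run a standard barrier (comparison) argument on the variational equation \eqref{g var eq}, in the spirit of \cite{AAB,IM1,CRY} and \cite[Proposition 2.1]{AS}; only the behaviour of the effective potential near $\rtf$ has to be tracked with some care. First I would rewrite \eqref{g var eq} in the form
\[
-\Delta g = 2\eps^{-2}\bigl(\tilde\rho - g^2\bigr)\,g, \qquad \tilde\rho(r):=\tfrac12\bigl[\hchem-r^s\bigr],
\]
exactly as in the proof of Proposition \ref{pro:g point est}. Since $\rtf=(\tfchem)^{1/s}$ (see \eqref{tfchem}) and $|\hchem-\tfchem|=\OO(\eps\sqrt{|\log\eps|})$ by \eqref{g l2 est}, while $-\partial_r\tilde\rho(r)=\tfrac{s}{2}r^{s-1}$ is an $\OO(1)$ strictly positive constant for $r$ in a fixed neighbourhood of $\rtf$ (recall $\rtf$ is bounded above and away from $0$ by \eqref{tfchem}), one gets, for $\eps$ small and all $r\geq\rtf+\eps^{2/3}$,
\[
-\tilde\rho(r)\;\geq\;\kappa\,(r-\rtf)\;\geq\;\kappa\,\eps^{2/3}\;>\;0
\]
for a fixed $\kappa>0$, the shift $\OO(\eps\sqrt{|\log\eps|})$ being negligible against $r-\rtf\geq\eps^{2/3}$. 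Using $g\geq0$ and $g^2\geq0$ this turns the equation into the differential inequality $\Delta g\geq 2\kappa\eps^{-2}(r-\rtf)\,g\geq 2\kappa\eps^{-4/3}\,g$ on $\{r\geq\rtf+\eps^{2/3}\}$; in other words $g$ is a subsolution of the operator $\Delta-V$ with $V(r):=2\kappa\eps^{-2}(r-\rtf)\geq0$ on that set.

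Next I would record a pointwise bound on $g$ at the inner endpoint $r=\rtf+\eps^{2/3}$. Evaluating the pointwise estimate \eqref{g point est} at $r=\rtf-\eps^{2/3}|\log\eps|^{2/3}$ and using the explicit form \eqref{tfm} of $\tfm$ (so that $\tfm(\rtf-\eps^{2/3}|\log\eps|^{2/3})=\OO(\eps^{2/3}|\log\eps|^{2/3})$) gives $g^2(\rtf-\eps^{2/3}|\log\eps|^{2/3})\leq C\eps^{2/3}|\log\eps|^{2/3}$; since $g$ is radial and decreasing this bound propagates to every larger radius, so in particular $g(\rtf+\eps^{2/3})\leq C^{1/2}\eps^{1/3}|\log\eps|^{1/3}$.

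Finally I would compare $g$ on $\{r>\rtf+\eps^{2/3}\}$ with the explicit barrier $w(r):=M\exp\{-\lambda(r-\rtf)\}$, with $\lambda:=\sqrt{2\kappa}\,\eps^{-2/3}$ and $M:=e^{\sqrt{2\kappa}}C^{1/2}\eps^{1/3}|\log\eps|^{1/3}$. A one-line computation gives $\Delta w=(\lambda^2-\lambda/r)\,w\leq\lambda^2 w\leq V(r)\,w$ there, so $w$ is a supersolution of $\Delta-V$; moreover $w(\rtf+\eps^{2/3})=Me^{-\lambda\eps^{2/3}}=C^{1/2}\eps^{1/3}|\log\eps|^{1/3}\geq g(\rtf+\eps^{2/3})$ by the previous step, and both $g$ and $w$ tend to $0$ at infinity. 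The maximum principle for $\Delta-V$ (applied on $\{\rtf+\eps^{2/3}<r<R\}$ and then letting $R\to\infty$, using $g(R)\to0$) yields $g\leq w$ on all of $\{r\geq\rtf+\eps^{2/3}\}$; squaring gives $g^2(r)\leq M^2e^{-2\lambda(r-\rtf)}$, which is precisely \eqref{g exp decay} once the constants are renamed ($c:=2\sqrt{2\kappa}$, and $C$ replaced by $e^{2\sqrt{2\kappa}}C$). I do not expect a genuine difficulty here; the only mildly delicate points are (i) checking that $-\tilde\rho$ is bounded below by a multiple of $r-\rtf$ despite only knowing $\hchem-\tfchem=\OO(\eps\sqrt{|\log\eps|})$, which works because $r-\rtf\geq\eps^{2/3}\gg\eps\sqrt{|\log\eps|}$, and (ii) transporting \eqref{g point est} from $\rtf-\eps^{2/3}|\log\eps|^{2/3}$ up to $\rtf+\eps^{2/3}$, for which monotonicity of $g$ suffices. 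One could alternatively quote the analogous statements in \cite{AAB,CRY,IM1} almost verbatim.
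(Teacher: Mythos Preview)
Your proposal is correct and follows essentially the same approach as the paper: both argue by comparison with an exponential supersolution $W(r)=c_1\eps^{1/3}|\log\eps|^{1/3}\exp\{-c_2(r-\rtf)/\eps^{2/3}\}$, using that $\tilde\rho\leq -C\eps^{2/3}$ for $r\geq\rtf+\eps^{2/3}$ turns \eqref{g var eq} into $-\Delta g + C\eps^{-4/3}g\leq 0$, while $W$ satisfies $-\Delta W + c_2^2\eps^{-4/3}W\geq 0$, so the maximum principle applies once the boundary value at $\rtf+\eps^{2/3}$ is matched via \eqref{g point est} and monotonicity of $g$. Your write-up is simply more explicit about the two points (i) and (ii) that the paper leaves implicit.
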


\begin{proof}
	It suffices to notice that $ W(r) : = c_1 \eps^{1/3} |\log\eps|^{1/3} \exp \{ - c_2(r - \rtf)/\eps^{2/3} \} $ is a supersolution to \eqref{g var eq} for $ r \geq \rtf + \eps^{2/3} $ and for some constants $ c_1, c_2 $ with $ c_2 > 0  $ and $ c_1 < \infty $: Indeed at the boundary by \eqref{g point est}
	\bdm
		W(\rtf +\eps^{2/3}) = c_1 \eps^{1/3} |\log\eps|^{1/3} e^{-c_2} \geq g(r),	\quad \mbox{for } r \geq \rtf,
	\edm
	 if we pick $ c_1 = \OO(1) $ large enough. Moreover $ W $ satisfies $ - \Delta W + c_2^2 \eps^{-4/3} W \geq 0 $. On the other hand \eqref{g var eq} yields $ - \Delta g + C \eps^{-4/3} g \leq 0 $ because $ \tilde\rho \leq - C \eps^{2/3} $ in the same region, so that $ W $ is a supersolution for $ c_2 $ equal to $ \sqrt{C} > 0 $ and the result is proven. 
\end{proof}

We now turn to properties of $\gpm$, that we summarize in 

\begin{pro}[\textbf{Pointwise estimates of $ \gpm $}]
	\label{pro:gpm pointwise}
	\mbox{}	\\
	Let $ \Omega $ be as in Proposition \ref{pro:g point est}, then there exists two constants $ c > 0 $ and $ C < \infty $ such that for any $ r \geq \rtf + \eps^{2/3} $
	\beq
		\label{gpm exp decay}
		|\gpm(\rv)| ^2 \leq C \eps^{2/3} |\log\eps|^{2/3} \exp \lf\{ - \frac{c(r - \rtf)}{\eps^{2/3}} \ri\}.
	\eeq
	Moreover, there is a constant $C$ such that $|\gpm(\rv)| \leq C$ uniformly in $\R ^2$.
\end{pro}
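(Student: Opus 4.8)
The plan is to transfer to $\gpm$, with essentially no new idea, the arguments that were used for $g$ in Propositions~\ref{pro:g point est} and \ref{g exp decay: pro}; the only genuine addition is that the rotational term has to be handled through the diamagnetic inequality. The very same strategy has been carried out for Gross--Pitaevskii minimizers in closely related settings in \cite[Proposition 2.6]{CRY} and \cite[Proposition 4.1]{AAB} (see also \cite[Section 4]{IM1}), so only the main steps need to be indicated. First I would record that, by elliptic regularity, $\gpm$ is smooth and solves the Euler--Lagrange equation $-\half\Delta\gpm-\Om L\gpm+\eps^{-2}r^{s}\gpm+2\eps^{-2}|\gpm|^{2}\gpm=\eps^{-2}\chem\gpm$ in $\R^{2}$, where the Lagrange multiplier obeys $\eps^{-2}\chem=\gpe+\eps^{-2}\int_{\R^{2}}|\gpm|^{4}$. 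Proceeding exactly as for $g$ in \eqref{hgpe asympt}--\eqref{g l2 est} (rough upper bound $\gpe\leq\hgpe$ obtained by testing $\gpf$ with the real function $g$, lower bound as in \eqref{eq:low bound F}) together with the energy asymptotics of Theorem~\ref{teo:gse asympt}, one gets $\bigl\||\gpm|^{2}-\tfm\bigr\|_{L^{2}(\R^{2})}=\OO(\eps|\log\eps|)$ and hence $|\chem-\tfchem|=\OO(\eps|\log\eps|)=o(\eps^{2/3})$. Writing $\magnp:=\Om\rvp$, so that $|\magnp|^{2}=\Om^{2}r^{2}=\OO(|\log\eps|^{2}r^{2})$, one completes the square in the kinetic and rotational terms and invokes the diamagnetic (Kato) inequality $\bigl|(\nabla-i\magnp)\gpm\bigr|\geq\bigl|\nabla|\gpm|\bigr|$ to conclude that $v:=|\gpm|\in H^{1}(\R^{2})$ is a distributional subsolution of $-\half\Delta v+W_{\eps}(\rv)\,v\leq0$ with $W_{\eps}(\rv):=\eps^{-2}(r^{s}-\chem)-\half\Om^{2}r^{2}+2\eps^{-2}v^{2}$.

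Both assertions then follow by comparison. For the uniform bound I would argue that in the region where the linear part of $W_{\eps}$ is nonnegative a Keller--Osserman type argument applied to $\Delta v\geq 4\eps^{-2}v^{3}-\OO(|\log\eps|^{2})v$ (or, alternatively, the crude exponential decay established below) shows that $v=\OO(\eps)$ outside a fixed compact set of radius $\OO(1)$; since $v$ is continuous, its global maximum is therefore attained at some $\rv_{0}$ in that compact set, where $\Delta v(\rv_{0})\leq0$ forces $W_{\eps}(\rv_{0})\leq0$, hence $\rv_{0}$ lies in $\{r^{s}<\chem\}$ and $v(\rv_{0})^{2}\leq\half(\chem-r_{0}^{s})+\OO(\eps^{2}|\log\eps|^{2})\leq C$, i.e.\ $\|\gpm\|_{L^{\infty}(\R^{2})}\leq C$. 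For the exponential decay, on $\{r\geq\rtf+\eps^{2/3}\}$ one uses $r^{s}-\tfchem\geq c\,(r-\rtf)$ near $\rtf$ together with $|\chem-\tfchem|=o(\eps^{2/3})$ to get $W_{\eps}\geq c'\eps^{-4/3}$ there (the term $2\eps^{-2}v^{2}\geq0$ is dropped and $\half\Om^{2}r^{2}=\OO(|\log\eps|^{2})\ll\eps^{-4/3}$), and then compares $v$ with the radial supersolution $W(r):=A\exp\{-c_{2}(r-\rtf)/\eps^{2/3}\}$ of $-\half\Delta W+c'\eps^{-4/3}W\geq0$ (valid for $c_{2}$ small enough). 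Choosing $A$ so that $W\geq v$ on the circle $\{r=\rtf+\eps^{2/3}\}$ and using that $v$ and $W$ vanish at infinity, the maximum principle for $-\half\Delta+W_{\eps}$ yields $v(\rv)\leq W(r)$ for $r\geq\rtf+\eps^{2/3}$, which is \eqref{gpm exp decay} provided the boundary datum $\sup_{|\rv|=\rtf+\eps^{2/3}}v$ is of order $\eps^{1/3}|\log\eps|^{1/3}$.

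That last point is the input that needs care and is, I expect, the main obstacle: unlike $g$, the minimizer $\gpm$ is neither radial nor monotone, so the near-boundary smallness cannot be read off from monotonicity but has to be transported from the bulk of $\tfd$. The remedy is to run the local super/sub-solution comparison of the proof of Proposition~\ref{pro:g point est} — which only uses that the relevant function is a \emph{subsolution} of $-\Delta v\leq2\eps^{-2}(\tfm(r)-v^{2})v$ up to corrections $\OO(\eps^{2}|\log\eps|^{2})$ coming from $W_{\eps}$, a property shared by $v=|\gpm|$ — on small balls near $\partial\tfd$, propagating the bound $v(\rv)^{2}\leq\tfm(|\rv|-\delta)(1+o(1))=\OO(\eps^{2/3}|\log\eps|^{2/3})$ for $|\rv|$ within $\OO(\eps^{2/3}|\log\eps|^{2/3})$ of $\rtf$, exactly as in \cite{CRY,AAB}. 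Once this is granted, $W$ may be started at the circle $\{r=\rtf+\eps^{2/3}\}$ with $A=\OO(\eps^{1/3}|\log\eps|^{1/3})$ and \eqref{gpm exp decay} follows, completing the proof. The remaining ingredients (diamagnetic inequality, Keller--Osserman bound, maximum principle, exponential barrier) are routine and essentially identical to those in the proofs of Propositions~\ref{pro:g point est} and \ref{g exp decay: pro}.
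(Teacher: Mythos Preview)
Your proposal is correct and follows essentially the same route as the paper: the paper's own proof is a two-line referral to \cite[Proposition 2.5]{AAB}, \cite[Proposition 3.2]{IM1} and \cite[Section 2.2]{CRY}, stating that one repeats the argument of Proposition~\ref{g exp decay: pro} after extracting, via the ``standard trick'' (the diamagnetic inequality, exactly as you do), a differential inequality for $|\gpm|$. You have in fact supplied more detail than the paper does, including correctly flagging the one genuine extra step---that $\gpm$ is neither radial nor monotone, so the smallness of $|\gpm|$ on the matching circle must come from the local sub/supersolution comparison rather than monotonicity---which is precisely what is done in the cited references.
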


\begin{proof}
	The proof is essentially the same as that of Proposition \ref{g exp decay: pro} plus some standard trick to extract a differential inequality for $ |\gpm| $. The reader can consult \cite[Proposition 2.5]{AAB}, \cite[Proposition 3.2]{IM1}  or \cite[Section 2.2]{CRY} where similar results are proved.
\end{proof}


\begin{thebibliography}{aaa99}

\bibitem[ARVK]{ARVK}	\textsc{J.R. Abo-Shaeer, C. Raman, J.M. Vogels, W. Ketterle}, Observation of Vortex Lattices in Bose-Einstein Condensates,  {\it Science} {\bf 292}, 476--479 (2001).




\bibitem[AAB]{AAB}	\textsc{A. Aftalion, S. Alama, L. Bronsard}, Giant Vortex and the Breakdown of Strong Pinning in a Rotating Bose-Einstein Condensate, {\it Arch. Rational Mech. Anal.} {\bf 178}, 247--286 (2005).

\bibitem[AB]{AB} \textsc{A. Aftation, X. Blanc}, Vortex Lattices in Rotating Bose-Einstein Condensates, \textit{SIAM J. Math. Anal.} \textbf{38}, 874--893 (2006).

\bibitem[ABD]{ABD}  \textsc{A. Aftalion, X. Blanc, J. Dalibard}, Vortex Patterns in a Fast Rotating Bose-Einstein Condensate, \textit{ Phys. Rev. A} \textbf{71}, 023611 (2005).

\bibitem[AJR]{AJR} \textsc{A. Aftalion, R.L. Jerrard, J. Royo-Letelier}, Non-Existence of Vortices in the Small Density Region of a Condensate, \textit{J. Funct. Anal.} \textbf{260}, 2387--2406 (2011).

\bibitem[ASS]{ASS} \textsc{A. Aftalion, E. Sandier, S. Serfaty}, Pinning phenomena in the Ginzburg-Landau model of superconductivity, {\it J. Maths Pures Appl.} {\bf 80}, 339--372 (2001).



\bibitem[ABM]{ABM}	\textsc{S. Alama, L. Bronsard, V. Millot}, Gamma-convergence of 2D Ginzburg-Landau functionals with vortex concentration along curves, \textit{J. Anal. Math.} \textbf{114}, 341 (2011).


\bibitem[AS]{AS}		\textsc{N. Andr\'{e}, I. Shafrir}, Minimization of a Ginzburg-Landau Type Functional with Nonvanishing Dirichlet Boundary Condition, {\it Calc. Var. Partial Differential Equations} {\bf 7}, 1--27 (1998). 

\bibitem[BJOS1]{BJOS1} \textsc{S. Baldo, R.L. Jerrard, G. Orlandi, H. M. Soner}, Convergence of Ginzburg-Landau functionals in 3D superconductivity, \textit{Arch. Rational Mech. Anal.} \textbf{205}, 699-752 (2012).

\bibitem[BJOS2]{BJOS2} \textsc{S. Baldo, R.L. Jerrard, G. Orlandi, H. M. Soner}, Vortex density models for superconductivity and superfluidity, preprint \textit{arXiv:1112.0293 [math-ph]} (2011). 



\bibitem[BPGW]{BPGW} \textsc{G. Baym, C.J. Pethick, S.A. Gifford, G. Watanabe}, Nonuniform vortex lattices in inhomogeneous rotating Bose-Einstein condensates, \textit{Phys. Rev. A} \textbf{75}, 013602 (2007).

\bibitem[BBH]{BBH} \textsc{F. B\'{e}thuel, H. Br\'{e}zis, F. H\'{e}lein}, {\it Ginzburg-Landau Vortices}, Progress in Nonlinear Differential Equations and their Applications \textbf{13}, Birkh\"auser, Basel, 1994.

\bibitem[BSSD]{BSSD}\textsc{V. Bretin, S. Stock, Y. Seurin, J. Dalibard}, Fast Rotation of a Bose-Einstein Condensate, {\it Phys. Rev. Lett.} {\bf 92}, 050403 (2004).


\bibitem[BS]{BS} \textsc{H. Br\'ezis, S. Serfaty}, A variational formulation for the two-sided obstacle problem with measure data, \textit{Comm. Contemp. Math} \textbf{4}, 357--374 (2002).




\bibitem[CHES]{CHES} \textsc{I. Coddington, P.C. Haljan, P. Engels, V. Schweikhard, S. Tung, E.A. Cornell},
Experimental studies of equilibrium vortex properties in a Bose-condensed gas, \textit{Phys. Rev. A} \textbf{70}, 063607 (2004).





\bibitem[CPRY1]{CPRY1} \textsc{M. Correggi, F. Pinsker, N. Rougerie, J. Yngvason}, Critical Rotational Speeds in the Gross-Pitaevskii Theory on a Disc with Dirichlet Boundary Conditions, {\it J. Stat. Phys.} {\bf 143}, 261--305 (2011).

\bibitem[CPRY2]{CPRY2} \textsc{M. Correggi, F. Pinsker, N. Rougerie, J. Yngvason}, Rotating Superfluids in Anharmonic Traps: From Vortex Lattices to Giant Vortices, \emph{Phys.\ Rev.\ A} \textbf{84}, 053614 (2011). 

\bibitem[CPRY3]{CPRY3} \textsc{M. Correggi, F. Pinsker, N. Rougerie, J. Yngvason}, Critical Rotational Speeds for Superfluids in  Homogeneous Traps, {\it J. Math. Phys.} \textbf{53}, 095203 (2012).

\bibitem[CRY]{CRY}	\textsc{M. Correggi, N. Rougerie, J. Yngvason}, The Transition to a Giant Vortex Phase in a Fast Rotating Bose-Einstein Condensate, {\it Comm. Math. Phys.} {\bf 303}, 451--508 (2011).

\bibitem[CY]{CY}		\textsc{M. Correggi, J. Yngvason}, Energy and Vorticity in Fast Rotating Bose-Einstein Condensates, {\it J. Phys. A: Math. Theor.} {\bf 41}, 445002 (2008).

\bibitem[Dan]{Dan} \textsc{I. Danaila}, Three-dimensional Vortex Structure of a Fast Rotating Bose-Einstein Condensate with Harmonic-plus-quartic Confinement, \textit{ Phys. Rev. A} \textbf{72}, 013605 (2005).










\bibitem[IM1]{IM1} 	\textsc{R. Ignat, V. Millot}, The Critical Velocity for Vortex Existence in a Two-dimensional Rotating Bose-Einstein Condensate, \emph{J. Funct. Anal.} \textbf{233}, 260--306 (2006).

\bibitem[IM2]{IM2} 	\textsc{R. Ignat, V. Millot}, Energy Expansion and Vortex Location for a Two Dimensional Rotating Bose-Einstein Condensate, \emph{Rev. Math. Phys.} \textbf{18}, 119--162 (2006).

\bibitem[Je]{J} 		\textsc{R.L. Jerrard}, Lower Bounds for Generalized Ginzburg-Landau Functionals, \emph{SIAM J. Math. Anal.} \textbf{30}, 721--746 (1999).

\bibitem[JS1]{JS1} 	\textsc{R.L. Jerrard, H.M. Soner}, Limiting Behavior of the Ginzburg-Landau Functional, \emph{J. Func. Anal.} \textbf{192}, 524--561 (2002).


\bibitem[JS2]{JS2} 	\textsc{R.L. Jerrard, H.M. Soner}, The Jacobian and the Ginzburg-Landau Energy, \emph{Calc. Var. Partial Differential Equations} \textbf{14}, 524--561 (2002).

\bibitem[LS]{LS} \textsc{E.H. Lieb, R. Seiringer}, Derivation of the Gross-Pitaevskii Equation for Rotating Bose Gases, \emph{Comm. Math. Phys.} \textbf{264} (2006), 505--537.

\bibitem[LSSY]{LSSY} \textsc{E.H. Lieb, R. Seiringer, J.P. Solovej, J. Yngvason}, \textit{The Mathematics of the Bose Gas and its Condensation}, Oberwolfach Seminar Series \textbf{34}, Birkh\"{a}user, Basel (2005), expanded version available as \textit{arXiv:cond-mat/0610117}.











\bibitem[MCWD]{MCWD}	\textsc{K.W. Madison, F. Chevy, W. Wohlleben, J. Dalibard}, Vortex Formation in a Stirred Bose-Einstein Condensate, {\it Phys. Rev. Lett.} {\bf 84}, 806--809 (2000).

\bibitem[N]{N}		\textsc{L. Nirenberg}, An extended interpolation inequality, {\it Ann. Sc. Norm. Super. Pisa Cl. Sci.} {\bf 20}, 733--737 (1966).

\bibitem[RAVXK]{RAVXK}	\textsc{C. Raman, J.R. Abo-Shaeer, J.M. Vogels, K. Xu, W. Ketterle}, Vortex Nucleation in a Stirred Bose-Einstein Condensate, {\it Phys. Rev. Lett.} {\bf 87}, 210402 (2001).





\bibitem[R]{R}		\textsc{N. Rougerie}, Vortex Rings in Fast Rotating Bose-Einstein Condensates, \textit{Arch. Rational Mech. Anal.} \textbf{203}, 69--135 (2012).



\bibitem[ST]{ST} 		\textsc{E.B. Saff, V. Totik}, \emph{Logarithmic Potentials with External Fields}, Grundlehren der Mathematischen Wissenchaften \textbf{316}, Springer-Verlag, Berlin, 1997.


\bibitem[Sa]{Sa} 		\textsc{E. Sandier}, Lower Bounds for the Energy of Unit Vector Fields and Applications, \emph{J. Funct. Anal.} \textbf{152}, 349--358 (1998).


\bibitem[SS1]{SS1} \textsc{E. Sandier, S. Serfaty}, A Rigorous Derivation of a Free-Boundary Problem Arising in Superconductivity, \textit{Ann. Sci. \'{E}c. Norm. Sup\'{e}r.} \textbf{33}, 561--592 (2000). 

\bibitem[SS2]{SS2} \textsc{E. Sandier, S. Serfaty}, \emph{Vortices in the Magnetic Ginzburg-Landau Model}, Progress in Nonlinear Differential Equations and their Applications \textbf{70}, Birkh\"{a}user, Basel, 2007, erratum available at http://www.ann.jussieu.fr/serfaty/publis.html.

\bibitem[SS3]{SS3} \textsc{E. Sandier, S. Serfaty}, Improved lower bounds for Ginzburg-Landau energies via mass displacement, \textit{Analysis and PDEs} \textbf{4-5}, 757--795 (2011).

\bibitem[SS4]{SS4} \textsc{E. Sandier, S. Serfaty}, From the Ginzburg-Landau Model to Vortex Lattice Problems, \textit{Communications in Mathematical Physics} \textbf{313}, 635-743 (2012).


\bibitem[Ser]{Ser} \textsc{S. Serfaty}, On a Model of Rotating Superfluids, {\it ESAIM: Control Optim. Calc. Var.} {\bf 6} , 201--238 (2001).

\bibitem[SR1]{SR1} \textsc{D.E. Sheehy, L. Radzihovsky}, Vortices in spatially inhomogeneous superfluids, \textit{Phys. Rev. A} \textbf{70}, 063620(R) (2004).

\bibitem[SR2]{SR2} \textsc{D.E. Sheehy, L. Radzihovsky}, Vortex lattice inhomogeneity in spatially inhomogeneous superfluids, \textit{Phys. Rev. A} \textbf{70}, 051602 (2004).

\bibitem[Sta]{St} \textsc{G. Stampacchia}, Le probl\`{e}me de Dirichlet pour les \'{e}quations elliptiques du second ordre \`{a} coefficients discontinus, {\it Annales Inst. Fourier} {\bf 15}, 189--257 (1965).

\bibitem[SBBHD]{SBBHD} \textsc{S. Stock, B. Battelier, V. Bretin, Z. Hadzibabic, J. Dalibard}, Bose-Einstein condensates in fast rotation, \textit{Laser Phys. Lett.} \textbf{2}, 275--284 (2005).


\end{thebibliography}
\end{document}